\tikzstyle{arrow} = [thick,->,>=stealth]
\newtheorem{theorem}{Theorem}
\newtheorem{lemma}[theorem]{Lemma}
\newtheorem{corollary}[theorem]{Corollary}
\newtheorem{definition}[theorem]{Definition}
\newtheorem{prop}[theorem]{Proposition}
\newcommand{\RR}{\mathbb{R}}
\newcommand{\CC}{\mathbb{C}}
\newcommand{\FF}{\mathbb{F}}
\DeclareMathOperator{\Tr}{Tr}
\newcommand{\ket}[1]{|#1\rangle}
\newcommand{\bra}[1]{\langle #1|}
\newcommand{\ketbra}[2]{| #1\rangle\! \langle #2|}
\newcommand{\brackets}[1]{\left[ #1 \right]}
\newcommand{\set}[1]{\left\{ #1 \right\}}
\newcommand{\ts}{\textsuperscript}
\newcommand{\trace}[1]{\Tr \brackets{ #1 }}
\newcommand{\partrace}[2]{\Tr_{#1} \brackets{ #2 }}
\newcommand{\hilb}{\mathcal{H}}
\newcommand{\identity}{\mathds{1}}
\DeclareMathOperator{\im}{Im}
\DeclareMathOperator{\supp}{supp}
\title{Interpreting Multipartite Entanglement through Topological Summaries}
\author[1]{Raghav Banka}
\author[2]{Matthew Hagan}
\author[1,3,4]{Nathan Wiebe}
\affil[1]{
University of Toronto, Dept. of Computer Science, Toronto ON, Canada
}
\affil[2]{
University of Toronto, Dept. of Physics, Toronto ON, Canada
}
\affil[3]{
Pacific Northwest National Laboratory, Richland WA, USA
}
\affil[4]{
Canadian Institute for Advanced Research, Toronto ON, Canada
}
\date{}
\begin{document}

\maketitle



\begin{abstract}
     The study of multipartite entanglement is much less developed than the bipartite scenario. Recently, a string of results have proposed using tools from Topological Data Analysis (TDA) to attach topological quantities to multipartite states. However, these quantities are not directly connected to concrete information processing tasks making their interpretations vague. We take the first steps in connecting these abstract topological quantities to operational interpretations of entanglement in two scenarios. The first is we provide a bound on the Integrated Euler Characteristic defined by Hamilton and Leditzky via an average distillable entanglement, which we develop as a generalization of the Meyer-Wallach entanglement measure studied by A. J. Scott in 2004. This allows us to connect the distance of an error correcting code to the Integrated Euler Characteristic. The second is we provide a characterization of a class of graph states containing the GHZ state via the birth and death times of the connected components and 1-dimensional cycles of the entanglement complex. In other words, the entanglement distance behavior of the first Betti number $\beta_1(\varepsilon)$ allows us to determine if a state is locally equivalent to a GHZ state, potentially providing new verification schemes for highly entangled states.
\end{abstract}

\section{Introduction}
Entanglement is a hallmark feature of quantum mechanics that allows for correlations between separated systems that go beyond those allowable through classical mechanics \cite{Horodecki_2009}. Its usefulness in quantum information theory is evident through its applications in communication protocols, error-correcting codes, and cryptographic schemes \cite{BB84} \cite{Metger2021selftestingofsingle} \cite{Devetak_2006_ERC}. When entanglement is present with operations restrictions, such as only allowing arbitrary \emph{local} operations and classical communication (LOCC) between subsystems, then entanglement can become a resource. Despite this importance, our understanding of entanglement is mostly based on bipartite entanglement or quantum correlations between two halves of a single quantum system.  In bipartite systems, many questions about the entanglement structure of a state are completely characterized by the von Neumann entropy.

 Multipartite entanglement on the other hand is much less studied \cite{horodecki2024multipartiteentanglement}. For multipartite systems, there are much more nuanced questions that can be asked. For instance, state teleportation in multipartite systems becomes more complicated as there are no clear sources and recipients.  In bipartite scenarios, we can declare that systems are maximally entangled if either of the reduced density matrices is maximally mixed. Loosely speaking, this tells us that the information stored in the state is across a global correlation, not locally stored in either subsystem. For multipartite states, this becomes murkier. Consider the GHZ state $\ket{\rm GHZ} = \frac{1}{\sqrt{2}} \ket{00\ldots0} + \frac{1}{\sqrt{2}} \ket{1 1\ldots1}$. When looking at a single subsystem the resulting state is maximally mixed, indicating some kind of global entanglement. However, when looking at a slightly larger subsystem, such as two qubits, the resulting reduced density matrix is $\rho_{1,2} = \frac{1}{2} \ketbra{00}{00} + \frac{1}{2} \ketbra{11}{11}$ which is \emph{not} maximally mixed. Even the $W$ state, characterized for 3 qubits as $\ket{\rm W_3 } = \frac{1}{\sqrt{3}}(\ket{100} + \ket{010} + \ket{001})$ has slightly larger von Neumann entropy for any 2 qubits reduced density matrix but many would consider the GHZ state to be more ``globally'' entangled. This raises the question of when can we consider a multipartite pure state to be ``maximally'' entangled. Further, is there even one single meaningful quantity that we can reduce multipartite entanglement to?

A framework has recently been developed to shed light on this issue, which uses the seemingly unrelated area of research known as Topological Data Analysis (TDA) \cite{computational_topology_book} to characterize multipartite entanglement developed by Hamilton and Leditzky \cite{hamilton2023probing}. TDA was developed in order to provide information about the topological structure of a data set, which is independent of small local noise in the data set. These techniques can be used for example to estimate how many ``holes" are present in a dataset across varying distance scales. The canonical example is if we have samples of coordinates from a 3-dimensional surface it would be nice to know if the samples come from a sphere compared to a torus. TDA generalizes spheres vs tori to higher dimensions and provides a robust analysis of the number of holes across distance scales to get an idea of what features are persistent in the dataset as opposed to transient noise. Further, TDA provides a single numerical summary of the underlying manifold that data is sampled from to give a topologically invariant quantity to the data, known as the Euler's Characteristic. In graphs this corresponds to the number of edges minus the number of vertices and in three-dimensional shapes it is the number of edges minus the number of vertices and minus the number of faces. This gives a quick necessary heuristic to tell if two shapes or graphs are topologically equivalent but is not sufficient to declare they are indeed the same.  However, the operational meaning of this topological information remains unclear from this work. 

Studying multipartite entanglement using TDA, specifically persistence homology, was initiated in 2018 by di Pierro et al. \cite{diPierro_2018} and continued in \cite{mengoni2019persistenthomologyanalysismultiqubit, bart2023}. Early work focused on using bipartite entanglement measures to build higher dimensional data complexes, known as Vietoris-Rips complexes, and then classify small instances, up to 5 qubits, based on the resulting homology groups. The novelty of Hamilton and Leditzky \cite{hamilton2023probing} is that they show how higher dimensional entropy \emph{functionals}, such as correlation functions built on Tsallis $q$-deformed entropies, can lead naturally to higher dimensional data complexes through the C\v{e}ch complex construction \cite{computational_topology_book}. This approach offers a more refined construction than the Vietoris-Rips complex used prior, and they are able to show how this new topological structure can capture previously identified metrics. One of their main results is that the resulting Euler Characteristic is exactly equal to a previously used multipartite entanglement measure known as the $n$-tangle. This not only gives a more mathematically sound justification for the use of the $n$-tangle, which was later championed in \cite{horodecki2024multipartiteentanglement}, specifically the 3-tangle \cite{Kietaev_topology_2006} \cite{Hosur_quantum_channels_2016} but because the Euler Characteristic is a blunt summary there remains plenty of useful topological information in the TDA pipeline. 

We continue these efforts at providing mathematically rigorous interpretations of multipartite entanglement through TDA with the goal of providing operational interpretations of the topological information. For example, in bipartite entanglement theory, the entropy of entanglement between two subsystems tells us the minimum fidelity of a teleportation protocol that can be carried out between the two, or the number of bell pairs that would be needed to prepare a given bipartite state.  We further show that persistence diagrams yielded by this process can be used as a signature of a quantum state by showing that they can be used to uniquely identify GHZ states, up to local operations.  These results constitute therefore a substantial advance towards showing that topological properties of entanglement structure can be used to characterize multipartite entanglement.  

\section{Overview of Results}

In this paper, we build upon the work of Hamilton and Leditzky \cite{hamilton2023probing}. Our motivation for this project was to investigate further applications and interpretation of topological invariants of quantum states—namely, the Integrated Euler's Characteristic and Betti numbers. In doing so, we try to improve our understanding of the operational relevance of these topological invariants. In this article, we make no modification to the methodology of constructing topological elements. For a more detailed overview, readers may refer to the original work. We provide a brief description of the methodology in Section \ref{sec:prelims}. In the preliminaries, we also discuss the correlation measures used to construct the topological structures, along with a short introduction to relevant topological concepts—specifically the topological invariants we utilize. 

In Section \ref{sec:distillable_entanglement}, we introduce an entanglement measure called the Average Distillable Entanglement (ADE), which turned out to be a modified version of a known measure discussed in \cite{Scott_k-uniform}. To demonstrate its relevance and utility, we talked about its applications in characterizing the $k$-uniformity of quantum states and identifying quantum error-correcting codes. We then show how a topological invariant—the Integrated Euler's Characteristic—can be employed to estimate the ADE by providing lower and upper bounds. Furthermore, we discuss the tightness of these bounds for the case when the input quantum state is absolutely maximally entangled. For such a scenario, the ADE is saturated and the upper bound scales as $\Theta(\sqrt{n})$. 

In Section \ref{sec:graph_states}, we investigate the Betti numbers obtained from our Topological Data Analysis (TDA) in the context of graph states. We examine whether certain graph structures give rise to distinctive Betti number distributions, revealing unique topological signatures associated with certain classes of graph quantum states. We find that the GHZ state exhibits a unique topological footprint. To support this result, we introduce and prove connections between linear entropy and the underlying graph structure of quantum states. In section \ref{sec:k-uniform-barcodes}, we conclude by discussing the structure of topological barcodes for $k$-uniform and absolutely maximally entangled (AME) states.
\section{Preliminaries}\label{sec:prelims}

\subsection{Entropic Measures}
\label{subsec:entropic_measures}

We study multipartite quantum systems over an $n$ qubit  Hilbert space $\hilb = \hilb_{1} \otimes \hilb_{2} \otimes \ldots \otimes \hilb_{n} \cong \CC^{2^n}$, where for simplicity we will assume each individual quantum system is a qubit $\hilb_{i} = \mathbb{C}^{2}$. We will frequently consider subsets of vertices, for example $J = \set{1, 2, 3}$ of the total set of vertices $\set{1,2, \ldots ,n}$ which we denote $A \coloneqq \set{1, 2, \ldots, n}$. We will exclusively work with quantum states that are pure over the total Hilbert space $\hilb_A$ and typically denote the global state as $\rho_A = \ketbra{\psi}{\psi}$. Reduced states over a subset of qubits $J \subseteq A$ will be denoted as $\rho_J = \partrace{J^C}{\rho_A}$, where $J^C = A \setminus J$.

We denote the von Neumann entropy $S(\sigma) = - \trace{\sigma \log \sigma}$, all logs in this paper will be base 2. We make heavy use of the Tsallis $q$-deformed entropies which is defined for any subset $J \subseteq A$ and any real number $q \neq 0, 1$ as
\begin{equation}
\label{eq:tsallis_entropy}
    S_{q}(J)_{\rho} = \frac{1}{1 - q} (\trace{\rho_J^q} - 1).
\end{equation}
For $q=2$ the Tsallis entropy is equal to the purity of the state (defined as $1 - \trace{\rho^2}$) and in the limit as $q \to 1$ we have $\lim_{q \to 1} S_q(A)_{\rho} = S(\rho)$. The entropy of entanglement is defined for a bipartition of a Hilbert space, say $J$ and $J^C$, and is given by the von Neumann entropy of the reduced density matrix $S(\rho_J) = S(\rho_{J^C})$, where the equality holds as the global state over $A = J \cup J^C$ is pure. For pure global states, this quantity is further equal to the entanglement of formation and the distillable entanglement, which we will make use of and denote as $E_D(J)$. This discussion is crucial for proving our results presented in section \ref{sec:distillable_entanglement}. It is also important to note that both the entanglement of entropy $E_D(J)$ and the Tsallis $q$-entropy are invariant under Local Unitaries (LU), these are unitaries that respect the bipartition defined by $J$ (e.g. $U_J \otimes U_{J^C}$ is local with respect to $J$). We will make use of this property in Section \ref{sec:graph_states}.

In addition to entropic quantities, we will also make use of correlational quantities, such as the bipartite mutual information, also known as the total correlation, which is defined for two disjoint subsets $J_1, J_2 \subset A$, letting $J = J_1 \cup J_2$, as
\begin{equation}
\label{eq:total_correlation_bipartite}
C(J_1, J_2) = S(J_1) + S(J_2) - S(J) = D(\rho_{J} || \rho_{J_1} \otimes \rho_{J_2}),
\end{equation}
where $D(\rho_{J_1 \cup J_2} || \rho_{J_1} \otimes \rho_{J_2})$ is the quantum relative entropy. This definition in terms of the quantum relative entropy makes it extensible to multipartite systems. Let $J_1, J_2, \ldots J_k \subset A$ be a collection of disjoint sets and let $J = \bigcup_{i = 1}^k J_k$. The total correlation of this collection is given by

 \begin{equation}
 \label{eq:multipartite_total_correlation}
     C(J_1, J_2, \ldots, J_k) \coloneqq D(\rho_{J} || \rho_{J_1} \otimes \rho_{J_2} \ldots  \otimes  \rho_{J_k}) = S(J_1) + S(J_2) + \dots + S(J_n) - S(J).
 \end{equation}

It is important to note that relative entropy is not a true distance metric, as it is neither symmetric nor does it satisfy the triangle inequality \cite{vedral2002relativeentropy}. To be able to form valid simplicial complexes, as described in the following section, we will not need either of these properties but instead, we will need a downward closure property. A functional $C$ (a map from the powerset $2^A$ to $\RR$ or $\CC$) satisfies downward closure if for sets $J, K \subseteq A$ we have
\begin{equation}
    J \subseteq K \implies C(J) \le C(K). \label{eq:downward_closure_total_correlation}
\end{equation}
This property can be readily seen to hold for the total correlation $C$ due to the monotonicity of relative entropy under any arbitrary channel $\Phi$, namely $D(\Phi(\rho) || \Phi(\sigma)) \le D(\rho || \sigma)$. Downward closure then follows from $\Phi = {\rm Tr}_{K \setminus J} [ ~\cdot~ ]$. 

In the last section of this paper, Section \ref{sec:graph_states}, we will use an alternative total correlation functional, namely the 2-deformed total correlation which is given by
\begin{equation}
 \label{eq:multipartite_2-deformed_total_correlation}
C_2(J_1, J_2, \dots, J_k) \coloneqq S_2(J_1) + S_2(J_2) + \ldots + S_2(J_k) - S_2(J),
\end{equation}
where all $J_i$ are disjoint and $J = \bigcup_{i=1}^k J_i$. This function is typically not defined in terms of a $q$-deformed relative entropy but instead via the difference of single system Tsallis entropies and the total overall Tsallis entropy. Proof of the downward closure property for this correlation functional is given in \cite{hamilton2023probing} and is obtained using subadditivity properties of entropy functions.

\subsection{Topological Data Analysis}
\label{subsec:topology}

Topological Data Analysis (TDA) is a set of techniques for inferring topological information of an underlying manifold from data that is assumed to be noisy. The most common tool in the TDA toolkit is persistence homology, which tells us which ``holes" present in the dataset are longest lived, which could be indicative of significant noise in the underlying dataset. Recently this tool has been extended to quantum information theory through a string of works \cite{diPierro_2018, mengoni2019persistenthomologyanalysismultiqubit, bart2023, hamilton2023probing}. In this section, we provide a brief review of how to construct the relevant topological objects and quantities on which our results are built.

We will store the entanglement information present in our $n$-qubit state in a simplicial complex $X$, sometimes called the entanglement complex. A simplicial complex starts with a collection of vertices, denoted $X^{(0)}$, where the 0 stands for 0-dimensional, and we have a single vertex for each qubit, giving $X^{(0)} = \set{1, 2, \ldots, n}$. We then construct faces, or subsets of vertices, that are organized by their cardinality. A one-dimensional face consists of two vertices $\set{v_1, v_2}$ and in general a $d$-dimensional face is a subset of $X^{(0)}$ of cardinality $d + 1$. We denote the collection of all $d$ dimensional faces in the complex as $X^{(d)}$. The requirement that the object form a simplicial complex boils down to requiring downward closure, if a $d$-dimensional face $\sigma \in X^{(d)}$ then all subsets of $\sigma$ must also be in $X$, and the intersection property that the intersection of two faces must also be a face.

Once a simplicial complex $X$ is built we can define a $ k $-chain as a linear combination of $k$-simplices, where the linear combination is performed over the field $\FF_2$. Other fields such as $\FF_p$ or $\RR$ can be used, but we will use $\FF_2$ in line with prior work. We essentially treat all faces in $X^{(k)}$ as a basis element for the vector space of $k$-chains which allows us to write a $k$-chain as
$$
\sigma_1 + \sigma_2 + \ldots + \sigma_m \in X^{(k)},
$$
where each $\sigma_i$ is a face in $X^{(k)}$. We will denote the vector associated with a set with square brackets, e.g. the vector associated with the face $\set{1, 2,3}$ is $[1, 2,3]$. We can then treat each dimension of faces $X^{(k)}$ as a separate vector space. The distinction of $X^{(k)}$ as a vector space or as an abstract collection of faces is unimportant in this paper and we will be explicit about our use of $X^{(k)}$ if need be, but we will make more use of the vector space denotation.

Once we have these vector spaces $X^{(k)}$ we now define the boundary operators $\partial_k : X^{(k)} \to X^{(k-1)}$ between them. We will describe the action of the $\partial_2$ intuitively before giving an abstract definition. Given a triangle $[1,2,3]$ the boundary operator $\partial_2$ maps it to the ``boundary", which is a linear combination of the lines between the vertices 
$$\partial_2 [1,2,3] = [1,2] + [2, 3] + [1, 3].$$
To then generalize this, given a $d$-dimensional face $[x_0, x_1, \ldots, x_d]$ we denote the removal of an element from this face with a $~\hat{}~$ operator, see \cite{Hatcher:478079}. So $[1, \hat{2}, 3] = [1, 3]$. The $k$\ts{th} boundary operator is then the linear combination of removing a single element from the face over all possible elements:
\begin{equation}
\partial_k [x_0, x_1, \cdots, x_k] = \sum_{i = 0}^k  [x_0, \cdots, \hat{x_i}, \cdots, x_k].
\end{equation}
This action is then extended by linearity to non-basis vectors. One special type of $k$-chain to note is those that have a vanishing boundary, $\partial_k \sigma = 0$. If a $k$-chain $\sigma$ is in the kernel of $\partial_k$ it is called a $k$-cycle.

This collection of vector spaces and maps is a chain complex, written as
\begin{equation}
    X^{(n)} \xrightarrow{\partial_n} X^{(n-1)} \xrightarrow{\partial_{n-1}} \ldots X^{(1)} \xrightarrow{\partial_1} X^{(0)} \xrightarrow{\partial_0} 0,
\end{equation}
if we also enforce the homology condition: $\partial_{k + 1} \circ \partial_{k} = 0$. This can be rewritten as requiring $\im \partial_{k + 1} \subset \ker \partial_{k}$. As $\im \partial_{k + 1}$ and $\ker \partial_{k}$ are subspaces of the vector space $X^{(k)}$, we can consider the quotient space 
\begin{equation}
 H_{k} \coloneqq  \frac{\ker \partial_{k}}{\im \partial_{k + 1}}   
\end{equation}
 which is called the $k$\ts{th} homology group. Its dimension as a vector space is an important number, the $k$\ts{th} Betti number $\beta_k \coloneqq \dim(H_k) = \dim(\ker \partial_k) - \dim(\im \partial_{k + 1})$. The alternating sum of these Betti numbers gives the Euler Characteristic 
\begin{equation}
    \chi = \sum_{i = 0}^{k} (-1)^i \beta_i.
\end{equation}




Now that we have defined the main quantities of interest we can describe the TDA pipeline, which is sketched in Fig. \ref{fig:tda_pipeline}.
The analysis begins with a quantum state as input and a filtration parameter $\varepsilon$ for the total correlation $C$, where we remind the reader that we use either the von Neumann entropy-based total correlation or the linear entropy-based total correlation. We then build the complex from the lowest dimension to the highest, starting with 1-dimensional faces. For these 1-dimensional faces we iterate over all possible pairs $J = \set{j_1, j_2}, \forall j_1, j_2 \in A$, and add the face to the complex if $C(J) \le \varepsilon$. Once the 1-dimensional faces are determined we can repeat this process for higher dimensions by simply adding more elements to the sets $J$ until the largest dimension of interest is reached. We note that the downward closure property of $C$, along with the requirement that $J \in X^{(k)} \implies C(J) \le \varepsilon$, is sufficient to guarantee that the resulting structure is a simplicial complex. 

Once we have a simplicial complex $X(\rho, \varepsilon)$ we can then compute topological information about it, such as the Betti numbers $\beta_i$ and the Euler Characteristic $\chi$, which give a quantitative estimate of the ``connectedness" of the complex. The $0$-dimensional holes in the topological structure represent the number of connected components. If $X$ has a large $\beta_0$ this tells us there are a lot of disconnected pieces of the complex. A large $\beta_1$ tells us there are a lot of one-dimensional holes(corresponding to cycles) in the complex, similar to glueing a bunch of donuts together to tile a big square.  Although these numbers for a specific value of $\varepsilon$ may be useful, the real utility of TDA comes when considering how these topological quantities change as a function of $\varepsilon$. If a $k$-cycle is present only for a small range of $\varepsilon$ then it is most likely a byproduct of noise. However, long-lived cycles indicate features of the underlying state $\rho$ that are persistent across a large range of correlation values and are most likely important to the entanglement structure of $\rho$. 

As there is a maximum value for total correlation $C$ at a fixed maximum dimension, we denote the value of $\varepsilon$ where all possible faces are present in the complex as $\varepsilon_{\max}$. We refer to the values of $\varepsilon$ that a $k$-cycle appears and disappears as its birth time and death time, respectively. The study of how the topological features of $X(\rho, \varepsilon)$ vary across the entire range $[0, \varepsilon_{\max} ]$ is known as persistence homology \cite{computational_topology_book}. One of the key outputs is a plot of the birth and death times of each cycle as a function of $\varepsilon$ and is referred to as a barcode. We will make use of a more persistent variant of the Euler Characteristic called the Integrated Euler Characteristic (IEC) denoted by $\tilde{\mathfrak{X}}$. This was introduced in \cite{hamilton2023probing} and requires a variant of the 0-dimensional Betti number. Define $\tilde{\beta}_0 = \max \set{\beta_0 - 1, 0}$, which is essentially the normal 0-dimensional Betti number but subtracted off a single connected component. This eliminates divergences due to the presence of one connected component at large values of $\varepsilon$. Then the IEC is defined for a $d$-dimensional entanglement complex $X(\rho, \varepsilon)$ as
\begin{equation}
    \tilde{\mathfrak{X}}(\varepsilon) \coloneqq \int_0^\varepsilon \widetilde{\beta}_0 (\varepsilon' ) d\varepsilon' + \sum_{k = 1}^d (-1)^k \int_0^{\varepsilon} \beta_k (\varepsilon') d\varepsilon'.
\end{equation}
We will tie this construction to other existing entanglement notions in Section \ref{sec:distillable_entanglement}. One of the main results of \cite{hamilton2023probing}, namely Theorem 4.18, is that the IEC is equal to the following
\begin{align}
    \tilde{\mathfrak{X}}(\varepsilon) = \sum_{J \subseteq \mathcal{A}} (-1)^{|J| - 1} S_q (J)_\rho.
\end{align}
This is the expression that we will make the most use of.

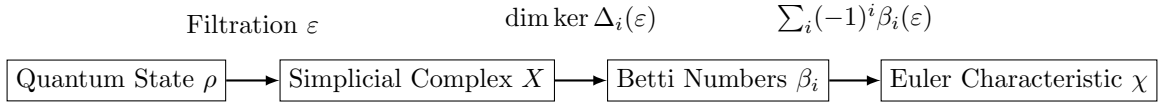
\begin{figure}
    \centering
    \begin{tikzpicture}[main/.style = {draw, rectangle}] 
        \node[main] (1) at (0, 0) {Quantum State $ \rho $};
        \node[main] (2) at (4, 0) {Simplicial Complex $ X $};
        \node[main] (3) at (8, 0) {Betti Numbers $ \beta_i $};
        \node[main] (4) at (12, 0) {Euler Characteristic $ \chi $};
        \draw[->, thick, >=latex] (1) -- node[above, yshift=0.5cm]{Filtration $ \varepsilon $} (2);
        \draw[->, thick, >=latex] (2) -- node[above, yshift=0.5cm]{$ \dim \ker \Delta_i(\varepsilon) $} (3);
        \draw[->, thick, >=latex] (3) -- node[above, yshift=0.5cm]{$ \sum_i (-1)^i \beta_i(\varepsilon) $} (4);
    \end{tikzpicture}
    \caption{Basic TDA pipeline}
    \label{fig:tda_pipeline}
\end{figure}
\section{Entanglement Distillation} 
\label{sec:distillable_entanglement}

For a generic mixed state $\rho$ over a bipartite system $\hilb_{J} \otimes \hilb_{J^C}$ there are many different notions of entanglement we can define. The first is the entropy of entanglement, given as $S(\rho_{J}) = - \trace{\rho_{J} \log \rho_{J^C}}$. This is rigorously defined but does not have an operational interpretation on its own. One of the more useful definitions we will use is the question ``How many Bell pairs can be extracted from the state $\rho$?" Bell pairs are useful in many contexts, one of their primary uses is in teleporting single qubit states. State teleportation is a special variant of gate teleportation, which allows us to create a resource state, such as a ``magic state" which encodes a $T$-gate, on one half of a Bell pair and then applies the gate to the other half. Heuristically this process looks like $U \ket{0} \ket{T} \mapsto T U \ket{0} \ket{0}$. This is a critical technique for quantum error correction to avoid the Easton-Knill theorem, which states that the set of logical operations available in an error-correcting code cannot be universal. $T$-gates are the most common gates that cannot be applied in many existing codes, making the production of magic $T$-states via distillation and consuming them via gate teleportation a vital routine in fault-tolerant quantum computing.

Given a quantum state, two of the most common operational questions asked are how many Bell pairs are needed to form the state and how many Bell pairs can be extracted from the state. We focus on the latter in this section. Formally this process can be defined as follows.

\begin{definition} [Distillable Entanglement $E_D(\rho_{AB})$]
Let $\rho = \ketbra{\psi}{\psi}$ be a pure quantum state over a Hilbert space $\mathcal{H}$ with a bipartition $\mathcal{H} = \mathcal{H}_A \otimes \mathcal{H}_B$. Let $\Phi_{A \times B}^{(n)} \in LOCC((\mathcal{H}_A \otimes \mathcal{H}_B)^{\otimes n})$ be a family of ``distilling" channels that utilize only LOCC operations between asymptotically many copies of $\mathcal{H}_A$ and $\mathcal{H}_B$ that aim to create copies of Bell pairs $\tau = \left(\frac{\ket{0}\ket{0} + \ket{1}\ket{1}}{\sqrt{2}} \right) \left(\frac{\bra{0}\bra{0} + \bra{1}\bra{1}}{\sqrt{2}} \right)$. The \emph{distillable entanglement} is then the supremum value of all integers $m$ such that. 
$$
\lim_{n \to \infty} F\left( \tau^{\otimes m}, \Phi_{A \times B}^{(n)} (\rho^{\otimes n}) \right) = 1.
$$

\end{definition}
Based on this idea of entanglement distillation, we propose an entanglement measure called the average distillable entanglement.
\begin{definition} [Average Distillable Entanglement]
\label{def:Average_Distillable_Entanglement}
    Let $\rho = \ketbra{\psi}{\psi}$ be a pure multipartite quantum state with qubit set $\mathcal{A} := \{\mathcal{A}_1, \ldots \mathcal{A}_n\}$. The average distillable entanglement is the mean value of the distillable entanglement calculated over all possible bipartitions of $\mathcal{H_A} = \mathcal{H}_J \otimes \mathcal{H}_{J^c}$ where $J \subseteq \mathcal{A}$,
    \begin{equation}
     \langle E_D \rangle = \frac{\sum_{\substack{J \subseteq \mathcal{A}} }E_D(\rho_{J})}{2^n}.
    \end{equation}
    
\end{definition}

We can interpret the average distillable entanglement as the number of Bell pairs that can be generated through LOCC operations averaged over all possible partitions of a multipartite quantum state. A similar entanglement measure has been studied in previous works, such as in \cite{Scott_k-uniform} and \cite{meyer2001global}. Specifically, in \cite{Scott_k-uniform}, the authors used linear entropy instead of von Neumann entropy, but both measures share the same core principle: averaging of a known bipartite entropy measure across all partitions of a multipartite quantum system. The average distillable entanglement quantity can be considered an entanglement measure as satisfies most of the key properties of entanglement, namely invariance under local operations and monotonicity under LOCC (Local Operations and Classical Communication). However, it may take values outside the range $[0, 1]$. We can always normalize this measure to ensure it satisfies the constraint that it lies within this range. This quantity is useful in determining whether a quantum state is an AME state and also provides a necessary condition for $k-$uniform states, which we will define and discuss formally in the next section.

\subsection{k-uniform states}
We will further build upon the applications of this measure to identify specific entanglement features of the quantum state, specifically the $k$-uniformity of the state. This special class of quantum states has been studied in the past under various names such as $k$-MM states \cite{Cerf_k-MM}, $k$-uniform states \cite{Karol_k-uniform_construction}, and normal forms \cite{MoorNormal_form}. They are particularly useful in quantum error correction codes \cite{Scott_k-uniform} and studying the effects of local unitary operations in quantum systems \cite{Kraus_LU_Equivalence}. A special case of a $k$-uniform system is the $\lfloor \frac{n}{2} \rfloor$-uniform state, known as the Absolutely Maximally Mixed State (AME). Because of its highly entangled structure, this class of quantum state has proven to be important in ADS/CFT protocols, holographic codes \cite{preskill_holographic_codes}, secret sharing \cite{Helwig_AME}, and teleportation protocols \cite{Helwig:2013ckb}. We will now formally define $k$-uniform quantum systems and discuss how they relate to our entanglement measure of average distillable entanglement. 

\begin{definition}
\label{def:k-uniform}
($k$-uniform state)
    Let $\rho = \ketbra{\psi}{\psi}$ be a pure multipartite quantum state defined over set of qubits $\mathcal{A} := \{\mathcal{A}_1, \ldots \mathcal{A}_n\}$. We say that the quantum state $\rho$ is $k-$uniform if 
    \begin{equation}
        \rho_J = \rho_{J^c} = \Tr_{J^c}(\ketbra{\psi}{\psi}) = \frac{\identity}{2^k}
    \end{equation}
for each non empty subset $ J \subset \mathcal{A} $ with $ |J| = k $, and where $ J^c \coloneqq \mathcal{A} \setminus J $, for $ k $ in $ 1 $ to $ \left\lfloor \frac{n}{2} \right\rfloor $.
\end{definition} 
This definition naturally leads to the definition of an AME state, which we give below.
\begin{definition}
    (Absolutely maximally entangled states)
    \label{def:AME_state}
    These are $k - $uniform quantum states where $k = \lfloor \frac{n}{2} \rfloor$. We say a pure quantum state is Absolutely Maximally Entangled (AME) if it is maximally mixed across all bipartitions. More mathematically,   Let $\rho = \ketbra{\psi}{\psi}$ be a pure multipartite quantum state defined over set of qubits $\mathcal{A} := \{\mathcal{A}_1, \ldots \mathcal{A}_n\}$. We say that the quantum state $\rho$ is an absolutely maximally entangled state if 
        \begin{equation}
        \rho_J = \rho_{J^c} = \Tr_{J^c}(\ketbra{\psi}{\psi}) = \frac{\identity}{2^{min({|J|, n - |J|})}}
    \end{equation}
    for every non-empty subset $ J \subset \mathcal{A}, J^c = \mathcal{A} \setminus J$.
\end{definition}
It is helpful to first explain how we compute the von Neumann entropy of AME states, as this will be used in the proofs of subsequent propositions and theorems. For any subset of qubits $ J $ such that $ |J| = k \leq \left\lfloor \frac{n}{2} \right\rfloor $, the von Neumann entropy of the subset $ J $ is given by $ |J| \log d = k \log 2 = k $. This property of AME states is also discussed in \cite{helwig2013quditgraph}. On the other hand, the von Neumann entropy of any subset of qubits with size $ |J| = k > \lfloor \frac{n}{2} \rfloor $ is given by $ \log 2^{n - k} = n - k $.

\begin{prop}
$\ket{\psi}$ is a product state if and only if the average distillable entanglement of $\rho = \ketbra{\psi}{\psi}$ is zero.
\end{prop}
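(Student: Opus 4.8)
The plan is to first convert the average distillable entanglement into a manifestly non-negative sum of von Neumann entropies, and then to characterize when that sum vanishes. Since the global state $\rho = \ketbra{\psi}{\psi}$ is pure, the preliminaries give $E_D(\rho_J) = S(\rho_J)$ for every bipartition $(J, J^C)$, so
\begin{equation}
\langle E_D \rangle = \frac{1}{2^n} \sum_{J \subseteq \mathcal{A}} S(\rho_J).
\end{equation}
Each summand is non-negative because the von Neumann entropy of a density matrix is non-negative, and it equals zero exactly when $\rho_J$ is pure. Hence $\langle E_D \rangle = 0$ if and only if $\rho_J$ is pure for every $J \subseteq \mathcal{A}$. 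This is the pivot that turns the claim into a purely structural statement about $\ket{\psi}$.

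The forward direction is then immediate: if $\ket{\psi} = \ket{\psi_1} \otimes \cdots \otimes \ket{\psi_n}$ is a product state, then $\rho_J = \bigotimes_{i \in J} \ketbra{\psi_i}{\psi_i}$ is pure for every $J$, so each $S(\rho_J) = 0$ and therefore $\langle E_D \rangle = 0$.

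For the converse I would not even require purity of all marginals, only of the single-qubit ones: if $\langle E_D \rangle = 0$, then in particular $S(\rho_{\{i\}}) = 0$ for each $i \in \mathcal{A}$. I would then argue by induction on $n$ using the Schmidt decomposition. Purity of $\rho_{\{1\}}$ forces the Schmidt rank across the cut $(\{1\}, \{2,\ldots,n\})$ to be one, so $\ket{\psi} = \ket{\phi_1} \otimes \ket{\chi}$ for some single-qubit state $\ket{\phi_1}$ and some $(n-1)$-qubit state $\ket{\chi}$. Because qubit $1$ now factors out, the single-qubit marginals of $\ket{\chi}$ coincide with those of $\ket{\psi}$ on qubits $2, \ldots, n$ and are therefore also pure, so the induction hypothesis applies to $\ket{\chi}$ and yields a full product factorization.

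The only genuine content lies in this converse, and the main obstacle is making the induction airtight: specifically, verifying that the reduced states of the factored-out residual $\ket{\chi}$ are exactly the corresponding marginals of the original $\ket{\psi}$, so that the hypothesis transfers cleanly across the inductive step. Everything else is bookkeeping, namely the reduction to entropies, their non-negativity, and the trivial base case $n = 1$, where a single-qubit pure state is vacuously a product state.
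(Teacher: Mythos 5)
Your proposal is correct and its skeleton matches the paper's: rewrite $\langle E_D \rangle$ as $2^{-n}\sum_{J \subseteq \mathcal{A}} S(\rho_J)$ using $E_D(\rho_J) = S(\rho_J)$ for pure global states, note each term is non-negative, and get the forward direction for free since all marginals of a product state are pure. The difference is in the converse, and it favors you. The paper disposes of it in one sentence—zero average entropy means zero entropy for every subsystem, which ``indicates'' the state is a product state—without ever explaining why vanishing marginal entropies force factorization. You supply exactly the missing content: purity of $\rho_{\{i\}}$ forces Schmidt rank one across the cut $(\{i\}, \mathcal{A}\setminus\{i\})$, giving $\ket{\psi} = \ket{\phi_1}\otimes\ket{\chi}$, and induction on $n$ finishes the argument. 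The step you flag as the main obstacle is in fact immediate: tracing out qubit $1$ kills the pure factor $\ketbra{\phi_1}{\phi_1}$, so for any $J \subseteq \{2,\ldots,n\}$ one has $\rho_J = \partrace{J^C}{\ketbra{\psi}{\psi}} = \partrace{J^C \setminus \{1\}}{\ketbra{\chi}{\chi}}$, and the induction hypothesis transfers cleanly. You also note that only the $n$ single-qubit terms of the sum are needed for the converse, a mild strengthening the paper does not record; this is the same observation that makes the proposition a statement about $1$-uniformity-type data rather than about all $2^n$ marginals.
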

\begin{proof}
The product states have an entropy of zero for all their subsystems, since partial tracing over any subsystem results in a pure state. Consequently, the average distillable entanglement is also zero. On the other hand, if the average distillable entanglement is zero, it implies that there is no entanglement across any partition of the quantum state and the entropic value of all subsystems of the quantum state is 0. This indicates that the quantum state must be a product state.
\end{proof}
\begin{prop}
\label{prop:ADE_iff_AME}
A pure quantum state $\ket{\psi}$ is an AME state defined for qubit set $\mathcal{A} := \{\mathcal{A}_1, \ldots \mathcal{A}_n\}$ if and only if the average distillable entnaglement of $\rho = \ketbra{\psi}{\psi}$ is given by
\begin{equation}
    \langle E_D \rangle = \frac{n}{2^n} \left(2^{n-1} - \binom{n-1}{\lfloor \frac{n}{2} \rfloor} \right).
\end{equation}





\end{prop}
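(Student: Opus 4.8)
The plan is to invoke the fact recorded in the preliminaries that for a pure global state the distillable entanglement across a bipartition equals the entanglement entropy, $E_D(\rho_J) = S(\rho_J)$. Under this identification the average distillable entanglement is a normalized sum of reduced-state entropies, $\langle E_D \rangle = 2^{-n}\sum_{J \subseteq \mathcal{A}} S(\rho_J)$, and I would prove both directions of the equivalence by analyzing this sum term by term.

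For the forward direction I would substitute the AME entropy values recorded just above the proposition, namely $S(\rho_J) = \min(|J|, n-|J|)$ for every $J$ (the cases $J = \emptyset$ and $J = \mathcal{A}$ contributing zero by purity). Grouping subsets by cardinality gives
\[
\sum_{J \subseteq \mathcal{A}} S(\rho_J) = \sum_{k=0}^n \binom{n}{k} \min(k, n-k).
\]
I would then evaluate this combinatorial sum using $\min(k, n-k) = \tfrac{1}{2}\bigl(n - |2k - n|\bigr)$, splitting it as $\tfrac{n}{2}\,2^n - \tfrac{1}{2}\sum_{k=0}^n \binom{n}{k}\,|2k-n|$. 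The remaining piece is, up to a factor, the mean absolute deviation of a symmetric binomial; applying $k\binom{n}{k} = n\binom{n-1}{k-1}$ and splitting the tail sum at its center yields the closed form $\sum_{k=0}^n \binom{n}{k}|2k-n| = 2n\binom{n-1}{\lfloor n/2\rfloor}$, which I would verify holds for both parities of $n$. Substituting back gives $\sum_J S(\rho_J) = n2^{n-1} - n\binom{n-1}{\lfloor n/2\rfloor}$, and dividing by $2^n$ reproduces the claimed value exactly.

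For the converse the key observation is that for any pure global state the Schmidt rank across the cut $J \mid J^c$ is at most $2^{\min(|J|,\,n-|J|)}$, so the general bound
\[
S(\rho_J) \le \min(|J|, n-|J|)
\]
holds for every $J$, with the AME values saturating it for all $J$ simultaneously. Summing these inequalities shows that $\sum_J S(\rho_J)$ is maximized precisely at the AME value computed above; hence $\langle E_D \rangle$ attains the stated quantity if and only if every individual inequality is tight. Tightness $S(\rho_J) = \min(|J|, n-|J|)$ for all $J$ with $|J| \le \lfloor n/2\rfloor$ is exactly the statement that each such $\rho_J$ is maximally mixed, which is Definition \ref{def:AME_state}; equality in the average therefore forces the AME property term by term, closing the equivalence.

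I expect the main obstacle to be the clean evaluation of the binomial absolute-deviation sum and its reconciliation with the central coefficient $\binom{n-1}{\lfloor n/2\rfloor}$ across the even and odd cases. By contrast the converse is conceptually immediate once the term-wise ceiling $S(\rho_J)\le \min(|J|,n-|J|)$ is in hand, since a sum of terms each below a fixed bound meets its total bound only when every term is individually tight.
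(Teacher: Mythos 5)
Your proposal is correct, and its overall skeleton matches the paper's: identify $E_D(\rho_J) = S(\rho_J)$ for pure states, reduce $\langle E_D \rangle$ to the combinatorial sum $2^{-n}\sum_k \binom{n}{k}\min(k,n-k)$, and prove the converse by observing that the stated value is the maximum of a term-wise-bounded sum, so equality forces every marginal to be maximally mixed. Where you genuinely differ is in the evaluation of the binomial sum: the paper splits into even and odd $n$, restricts to $|J| \le \lfloor n/2 \rfloor$ via $S(J) = S(J^c)$, and then grinds through separate partial-sum identities (including an inductive lemma for alternating sums in the even case), whereas you write $\min(k,n-k) = \tfrac{1}{2}\left(n - |2k-n|\right)$ and invoke the mean-absolute-deviation identity $\sum_k \binom{n}{k}|2k-n| = 2n\binom{n-1}{\lfloor n/2 \rfloor}$, which handles both parities in one stroke (the identity itself does hold for both parities; for even $n$ one needs $\binom{2t}{t} = 2\binom{2t-1}{t}$, for odd $n$ the symmetric half-sum formula). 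This buys a parity-free and arguably cleaner calculation at the cost of having to establish that identity, which is roughly the same amount of binomial work the paper spends, just packaged once. Your converse is also slightly sharper in presentation: the paper argues by contrapositive and appeals informally to "each term is maximized iff the reduced state is maximally mixed," while you make the underlying ceiling $S(\rho_J) \le \min(|J|, n-|J|)$ explicit via the Schmidt-rank bound, which is the cleanest way to justify that term-wise saturation is forced. Both arguments are sound; yours is the more streamlined of the two.
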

\begin{proof}
    Assume $\ket{\psi}$ is $\lfloor \frac{n}{2} \rfloor$-uniform state. This implies that $\ket{\psi}$ is $k-$uniform for all $k < \lfloor \frac{n}{2} \rfloor$. Thus, we have that $S(J) = k$ for all $J \subseteq A$ where $|J| = k < \lfloor \frac{n}{2} \rfloor$. Since the state is AME (absolutely maximally entangled), the reduced density matrix becomes proportional to the identity matrix, leading to the maximum entropy for all local degrees of freedom. The average distillable rate of the state follows the equation for states with an odd number of qubits.
    
\begin{equation}
\langle E_D \rangle = 2^{-n} \sum_{J \subseteq \mathcal{A}} S(J) = 2^{-n} \sum_{i=1}^{n-1} \sum_{\substack{J \subset \mathcal{A} \\ |J| = i}} S(J)
\end{equation}

Note that we only need to concern ourselves with proper subsets, as the entropy of the entire quantum state is 0, and we assume the state to be pure. 
By Schmidt decomposition across the partitions $J$ and $J^c$, we know the reduced density matrix (RDM) of $\rho$ over subsystem $J$ shares eigenvalues with the RDM of $\rho$ over subsystem $J^c$. Consequently, we can conclude that $S(J) = S(J^c)$ for all $J \subset \mathcal{A}$. Since the number of subsets $ J \subset \mathcal{A} $ with $ |J| = k $ is equal to the number of subsets with $ |J| = n - k $, it follows that for any $ k < \left\lfloor \tfrac{n}{2} \right\rfloor $,
\begin{equation}
\sum_{\substack{J \subset \mathcal{A} \\ |J| = k}} S(J) = \sum_{\substack{J \subset \mathcal{A} \\ |J| = n - k}} S(J) = \binom{n}{k}  k    
\end{equation}
Using this property we can simplify the average distillable entanglement expression.
For quantum states with an odd number of qubits, the average distillable entanglement is written as 
\begin{equation}
\langle E_D \rangle =  2^{-n} \sum_{i=1}^{n-1} \sum_{\substack{J \subset \mathcal{A} \\ |J| = i}} S(J) =  2^{-(n-1)} \sum_{i=1}^{\lfloor \frac{n}{2} \rfloor} \sum_{\substack{J \subset \mathcal{A} \\ |J| = i}} S(J) = \frac{1}{2^{n-1}} \sum_{i=1}^{\lfloor \frac{n}{2} \rfloor} \binom{n}{i}  i
\end{equation}
 We can further simplify this quantity by using the identity $\binom{n}{i}i = n \binom{n-1}{i-1}$. Let's assume $n = 2k +1$ for some integer $k$.
\begin{flalign}
    \langle E_D \rangle &= \frac{1}{2^{n-1}} \sum_{i=1}^{\lfloor \frac{n}{2} \rfloor} \binom{n}{i}  i = \frac{2k +1}{2^{2k}} \sum_{i=1}^{k} \binom{2k}{i-1} \\
    &=  \frac{2k +1}{2^{2k}} \sum_{j=0}^{k-1} \binom{2k}{j} = \frac{2k +1}{2^{2k}} \left[ \frac{1}{2} \left( 2^{2k} - \binom{2k}{k} \right) \right] \\
    &= \frac{n}{2^n} \left(2^{n-1} - \binom{n-1}{\lfloor \frac{n}{2} \rfloor} \right)
\end{flalign}
For an even number of qubits, we have $n = 2k$ for some integer $k$. Using a similar technique as above we get that
\begin{flalign}
    \langle E_D \rangle &= \frac{1}{2^n} \left( 2\sum_{i=1}^{\frac{n}{2}-1} \sum_{\substack{J \subset A \\ |J| = i}} S(J) + \sum_{\substack{J \subset A \\ |J| = \frac{n}{2}}} S(J) \right) = \frac{1}{2^n} \left[ 2 \sum_{i=1}^{\frac{n}{2} - 1} \binom{n}{i}  i + \binom{n}{\frac{n}{2}}   \frac{n}{2}  \right] \\
    &= \frac{1}{2^{2k}} \left[ 4k \sum^{k - 2}_{j=0} \binom{2k - 1}{j} + \frac{2k}{2}\binom{2k}{k}\right] \\
    &= \frac{1}{2^{2k}} \left[ \frac{4k}{2} \left( 2^{2k -1} - 2 \binom{2k-1}{k-1}\right) + \frac{2k}{2}\binom{2k}{k}\right] \\
    &= \frac{2k}{2^{2k}} \left[ 2^{2k-1} - 2\binom{2k-1}{k} + \frac{1}{2} \binom{2k}{k}\right] \\
    &= \frac{2k}{2^{2k}} \left[ 2^{2k-1} - 2\binom{2k-1}{k} + \frac{1}{2} \left( \binom{2k - 1}{k} + \binom{2k - 1}{k - 1}\right)\right] \\
        &= \frac{2k}{2^{2k}} \left[ 2^{2k-1} - 2\binom{2k-1}{k} + \frac{1}{2} \left( \binom{2k - 1}{k} + \binom{2k - 1}{k}\right)\right] \\
    &= \frac{2k}{2^{2k}} \left[ 2^{2k-1} - \binom{2k-1}{k}\right] \\
    &= \frac{n}{2^n} \left(2^{n-1} - \binom{n-1}{\lfloor \frac{n}{2} \rfloor} \right)
\end{flalign}

We know that the combinatorial expression proposed in Proposition~\ref{prop:ADE_iff_AME} is the maximum value of the average distillable entanglement because each term in the sum used to compute the average distillable entanglement is maximized if the input quantum state is AME. This follows from the fact that the entropy of any subsystem is maximal if and only if the reduced state is maximally mixed. To prove the reverse direction of our proposition, we will prove the contrapositive of the statement: if the state is not AME, then the average distillable entanglement is not maximized. Assume that the quantum state is not AME. By definition, there exists a subset $ J \subset \mathcal{A} $, with $ |J| = k \leq \left\lfloor \frac{n}{2} \right\rfloor $, such that the reduced density matrix $ \rho_J $ is not maximally mixed. It suffices to consider subsets $ J $ with $ |J| \leq \left\lfloor \frac{n}{2} \right\rfloor $, since if $ \rho_J $ is not maximally mixed for some $ |J| > \left\lfloor \frac{n}{2} \right\rfloor $, then its complement $ \rho_{J^c} $ (with $ |J^c| < \left\lfloor \frac{n}{2} \right\rfloor $) is also not maximally mixed, by the Schmidt decomposition. If $ \rho_J $ is not maximally mixed, then the von Neumann entropy of the reduced density matrix satisfies $ S(\rho_J) < |J| $, since the von Neumann entropy attains its maximum value if and only if the state is maximally mixed. As the average distillable entanglement is defined as the normalized sum of such entropies \ref{def:Average_Distillable_Entanglement}, we can conclude that the average distillable entanglement is strictly less than its maximum possible value.

\end{proof}
\begin{corollary}
\label{corollary:lower_bound_of_k-uniform}
    Let $\rho = \ket{\psi}\bra{\psi}$ be a pure $n$-qubit $k$-uniform state for some $k < \lfloor \frac{n}{2} \rfloor$(see Ref. \ref{def:k-uniform}). The average distillable entanglement must be atleast $\frac{1}{2^{n-1}} \sum_{i=1}^{k} \binom{n}{i} i$.
\end{corollary}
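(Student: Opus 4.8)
The plan is to reuse the decomposition of the total-correlation sum by subset cardinality from the proof of Proposition~\ref{prop:ADE_iff_AME}, but now to invoke the $k$-uniform hypothesis only where it is actually guaranteed and to \emph{bound}, rather than evaluate, the remaining contributions. Since $\rho$ is pure, the distillable entanglement across each bipartition equals the von Neumann entropy, $E_D(\rho_J) = S(J)$, so I would begin by writing
\begin{equation}
\langle E_D \rangle = 2^{-n} \sum_{J \subseteq \mathcal{A}} S(J) = 2^{-n} \sum_{i=1}^{n-1} \sum_{\substack{J \subseteq \mathcal{A} \\ |J| = i}} S(J),
\end{equation}
where the $i=0$ and $i=n$ terms vanish because the empty set and the full pure state both have zero entropy.

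First I would isolate the low-cardinality subsets. For every $J$ with $|J| = i \le k$, Definition~\ref{def:k-uniform} forces $\rho_J = \identity/2^{i}$, hence $S(J) = i$ exactly, and summing over all $\binom{n}{i}$ such subsets contributes $\binom{n}{i}\,i$. Next I would use the Schmidt symmetry $S(J) = S(J^c)$ to transfer this exact evaluation to the complementary cardinalities: reindexing by $J \mapsto J^c$ carries the block $n-k \le i \le n-1$ onto the block $1 \le n-i \le k$, and since $k$-uniformity pins down $S(K) = |K|$ for $|K| \le k$ while $\binom{n}{i} = \binom{n}{n-i}$, the high-cardinality block also contributes exactly $\sum_{i=1}^{k} \binom{n}{i}\,i$.

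The only point requiring care — and the main ``obstacle'' such as it is — is checking that these two blocks are genuinely disjoint, so that their contributions add rather than overlap. This is precisely where the hypothesis $k < \lfloor n/2 \rfloor$ is used: it gives $2k < n$, hence $k < n-k$, so the index ranges $\{1,\dots,k\}$ and $\{n-k,\dots,n-1\}$ do not intersect. I would then discard the remaining middle terms $k < i < n-k$, which are nonnegative since von Neumann entropy is nonnegative, to conclude
\begin{equation}
\langle E_D \rangle \ge 2^{-n}\left( \sum_{i=1}^{k} \binom{n}{i}\,i + \sum_{i=1}^{k} \binom{n}{i}\,i \right) = \frac{1}{2^{n-1}} \sum_{i=1}^{k} \binom{n}{i}\,i,
\end{equation}
which is exactly the claimed bound. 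In contrast to Proposition~\ref{prop:ADE_iff_AME}, no closed-form evaluation of the binomial sum is needed here, since the statement is left in summation form.
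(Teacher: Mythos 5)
Your proposal is correct and follows essentially the same route as the paper's proof: both express $\langle E_D \rangle$ as the normalized sum of subsystem entropies, evaluate the blocks $|J| \le k$ and $|J| \ge n-k$ exactly via $k$-uniformity together with the Schmidt symmetry $S(J) = S(J^c)$, and discard the nonnegative middle terms. Your explicit check that $k < \lfloor n/2 \rfloor$ makes the two blocks disjoint is the same observation the paper makes when it remarks that no even/odd case distinction is needed.
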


\begin{proof}

We assume that the input quantum state is $k-$uniform for $k < \lfloor \frac{n}{2} \rfloor$. With this assumption, we can lower bound the average distillable entanglement. We know that RDM $\rho_J$ and $\rho_{J^c}$is maximally mixed for subset of qubits $J$ where $|J| \leq k$. The Von Neumann entropy for such RDMs will be maximum as well. Using this fact in our average distillable entanglement equation, we get that 
\begin{equation}
     2^{-n} \left( \sum_{i=1}^{k} \sum_{\substack{J \subset \mathcal{A} \\ |J| = i}} S(J) + \sum_{i=n-k}^{n-1} \sum_{\substack{J \subset \mathcal{A} \\ |J| = i}} S(J)\right) \leq \langle E_D \rangle =  2^{-n} \sum_{i=1}^{n-1} \sum_{\substack{J \subset \mathcal{A} \\ |J| = i}} S(J) 
\end{equation}
\begin{align}
         2^{-n} \left( \sum_{i=1}^{k} \binom{n}{i}i + \sum_{i=n-k}^{n-1} \binom{n}{i}(n-i) \right) &\leq \langle E_D \rangle  \\
        2^{-n} \left( \sum_{i=1}^{k} \binom{n}{i}i + \sum_{j=1}^{k} \binom{n}{n - j}j \right) &\leq \langle E_D \rangle  \\
        2^{-(n -1)} \sum_{i=1}^{k} \binom{n}{i}i  &\leq \langle E_D \rangle 
\end{align}

In this case, there was no need to consider whether $ n $ is even or odd, because we were working with the $ k $-uniform case where $ k $ is strictly less than $ \left\lfloor \frac{n}{2} \right\rfloor $. For any subset of qubits $ J $, we can always find its complement $ J^c $ such that $ |J| \neq |J^c| $. We have shown that if $\rho$ is $k-$uniform then the ADE(Average Distillable Entanglement) must be atleast $\frac{1}{2^{n-1}} \sum_{i=1}^{k} \binom{n}{i} i$. Taking the contrapositive of the implication, we can also conclude that if the average distillable entanglement of the state is strictly lesser than $\frac{1}{2^{n-1}} \sum_{i=1}^{k} \binom{n}{i} i$, then the state is not $k-$uniform.
\end{proof}

\subsection{Error Correcting code}
We continue discussing the applications of average distillable entanglement. We turn our attention towards error-correcting codes, which play an integral role in making quantum computing feasible. These error-correcting codes protect quantum states from external noise and decoherence, which affect the reliability of the computation and information processing. By encoding the input quantum state, error-correcting codes we ensure that the quantum state can be reliably decoded, even after being subjected to noise. Specifically, we map an input state containing $k$ qubits, also known as physical qubits, into a higher-dimensional code space of $n$ qubits, referred to as logical qubits. A quantum code is said to correct up to $t$ errors if, after $t$ physical qubits are altered by noise, the original quantum state of $k$ qubits can still be faithfully recovered. The distance of a code space is defined as the minimum Hamming distance between any two codewords in the code space. An error-correcting code with a distance of $2t + 1$ can correct up to $t$ qubits affected by noise.

We will first define the error-correcting code more formally before demonstrating the connection between ADE and error-correcting codes.  Let $\mathcal{Q}$ represent an error-correcting code space spanned by the logical basis $\{\ket{i_L} \mid i = 0, 1, \ldots, K-1\}$. The code space is a $K$-dimensional subspace of an $n$-dimensional vector space. Let $\mathcal{E}$ denote the set of errors detectable by the error-correcting code $\mathcal{Q}$. We say that a linear operator $E$, which introduces noise to our input state, is detectable by the code $\mathcal{Q}$ if and only if 

\begin{equation}
\bra{i_L}E\ket{j_L} = C(E) \delta_{ij},
\end{equation}

where $\ket{i_L}$ and $\ket{j_L}$ are logical basis vectors of the code space $\mathcal{Q}$, and $C(E)$ is a constant that depends on the operator $E$. This condition is a property of error-correcting codes, also discussed and proved in \cite{Nielsen_Chuang_2010}. 

The concept of distance in error correction comes into play when we discuss local error operators, which are defined as

\begin{equation}
E = M_1 \otimes M_2 \otimes \cdots \otimes M_n,
\end{equation}
where $M_i$ are linear operators acting on individual qubits in the vector space $\mathbb{C}^2$. The weight of a local error operator is defined as the number of $M_i$'s that are not scalar multiples of the identity operator. We say a code has a distance at least $t$ if and only if all local error operators with weight less than $t$ are detectable by the error-correcting code $\mathcal{Q}$. Such codes are referred to as $(n, K, t)$ codes. These codes are considered pure if 

\begin{equation}
\bra{i_L}E\ket{i_L} = \frac{1}{2^n} \text{Tr}(E)
\end{equation}

for all $E \in \mathcal{E}$ and $\ket{i_L}$ belonging to the codes logical basis.
This is a high-level formalism of the error-correcting code $\mathcal{Q}$. A more detailed and concrete definition can be found in \cite{Scott_k-uniform}. Since we will not be explicitly using those definitions they have been not discussed in this article.

An interesting result, proven in \cite{Scott_k-uniform}, shows that a pure quantum state $\ket{\psi}$ is $k$-uniform if and only if $\ket{\psi}$ corresponds to a $(n, 1, k+1)$ error-correcting code, as discussed above. This property leads us to another application of our average distillable entanglement. 

\begin{theorem}
    A pure quantum state $\ket{\psi}$ has average distillable entanglement equal to $\frac{n}{2^n} \left(2^{n-1} - \binom{n-1}{\lfloor \frac{n}{2} \rfloor} \right)$ iff $\ket{\psi}$ belongs to the subspace of QEC $(n, 1, \lfloor \frac{n}{2} \rfloor+1)$.
\end{theorem}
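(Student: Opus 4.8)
The plan is to prove this theorem by chaining together two equivalences that have already been established, so that no fresh computation on the ADE itself is required. The first link is Proposition~\ref{prop:ADE_iff_AME}, which states that the average distillable entanglement of $\rho = \ketbra{\psi}{\psi}$ equals $\frac{n}{2^n}\left(2^{n-1} - \binom{n-1}{\lfloor n/2 \rfloor}\right)$ if and only if $\ket{\psi}$ is an AME state. The second link is the result of \cite{Scott_k-uniform} quoted just above the statement, namely that a pure state is $k$-uniform if and only if it corresponds to an $(n,1,k+1)$ error-correcting code. The bridge between the two is Definition~\ref{def:AME_state}, which identifies the AME states as exactly the $\lfloor n/2 \rfloor$-uniform states of Definition~\ref{def:k-uniform}.

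First I would invoke Proposition~\ref{prop:ADE_iff_AME} to rewrite the hypothesis $\langle E_D \rangle = \frac{n}{2^n}\left(2^{n-1} - \binom{n-1}{\lfloor n/2 \rfloor}\right)$ as the equivalent statement that $\ket{\psi}$ is AME. Then, using Definition~\ref{def:AME_state}, this is the same as saying that $\ket{\psi}$ is $\lfloor n/2 \rfloor$-uniform. Finally, applying the cited $k$-uniform $\iff (n,1,k+1)$ equivalence with the substitution $k = \lfloor n/2 \rfloor$ yields that $\ket{\psi}$ corresponds to an $(n,1,\lfloor n/2 \rfloor + 1)$ code. Since every arrow in this chain is a biconditional, reading it in reverse produces the converse direction, and all three conditions are thereby shown to be mutually equivalent.

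The one point requiring care---rather than a genuine obstacle---is the interpretation of the phrase ``$\ket{\psi}$ belongs to the subspace of QEC $(n,1,\lfloor n/2 \rfloor + 1)$.'' Because the code dimension is $K = 1$, the associated logical subspace is one-dimensional and is spanned by a single codeword, so membership in it means that $\ket{\psi}$ is, up to global phase and normalization, that unique codeword. I would state this explicitly so that the correspondence supplied by \cite{Scott_k-uniform}, which associates a $k$-uniform state with the code whose sole codeword is that state, lines up precisely with the ``belongs to the subspace'' formulation. Once the $K=1$ reading is fixed, the only remaining task is to confirm that these imported equivalences compose cleanly, which they do, completing the argument.
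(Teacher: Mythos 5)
Your proposal is correct and takes essentially the same route as the paper's own proof: both arguments simply chain Proposition \ref{prop:ADE_iff_AME} (the ADE attains this value iff the state is AME) with the $k$-uniform $\iff (n,1,k+1)$ code correspondence of \cite{Scott_k-uniform} specialized to $k = \lfloor n/2 \rfloor$. Your explicit clarification of the $K=1$ reading of ``belongs to the subspace'' is a precision the paper leaves implicit, but it does not change the substance of the argument.
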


\begin{proof}
    
The entanglement measure introduced in this work, the average distillable entanglement, reaches its maximum value (as stated in Proposition~\ref{prop:ADE_iff_AME}) if and only if the quantum state is AME. We know that the Absolutely maximally entangled (AME) states reside in the QEC subspace $(n,1,\lfloor \frac{n}{2} \rfloor+1)$ (ref \cite{Scott_k-uniform}). Therefore, the theorem follows.
\end{proof}

We have demonstrated that the average distillable entanglement measure is useful for both quantifying entanglement and identifying error-correcting codes.
\subsection{Integrated Euler Characteristic}
In this section, we explore how the average distillable entanglement is related to the Integrated Euler Characteristic (IEC) of an entanglement complex derived from the multipartite state.
\begin{theorem}

\label{theroem:euler_characteristic_distillation}
In
Let $\rho = \ketbra{\psi}{\psi}$ be a pure multipartite quantum state with qubits set $\mathcal{A} := \{\mathcal{A}_1, \dots \mathcal{A}_n\}$. The average distillable entanglement is bounded by the reduced Integrated Euler characteristic $\tilde{\mathfrak{X}}(\infty)$ of the persistence module created using functional $C(J)_\rho = \sum_{v \in J} S(v)_\rho - S(J)_\rho$ for $J \subseteq A$,
\begin{equation}
\frac{\tilde{\mathfrak{X}}(\infty)}{2^n}  \leq \langle E_D \rangle \leq \frac{\tilde{\mathfrak{X}}(\infty)}{2^n}  + \frac{1}{2} \varepsilon_{\text{max}}  
\end{equation}
where $\langle E_D \rangle$ is the average distillable entanglement of $\rho$ and $\varepsilon_{max}$ is the maximum filtration parameter where the barcodes of persistence module stop changes.
\end{theorem}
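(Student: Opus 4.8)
The plan is to reduce both sides to explicit sums over subsets of $\mathcal{A}$ and then compare them term by term. First I would use the fact, recalled in the preliminaries, that for a pure global state the distillable entanglement across the cut $(J, J^c)$ coincides with the entropy of entanglement, $E_D(\rho_J) = S(J)_\rho$. This immediately rewrites the target quantity as $\langle E_D \rangle = 2^{-n}\sum_{J \subseteq \mathcal{A}} S(J)_\rho$, a non-negative, non-alternating sum (the terms $J = \emptyset$ and $J = \mathcal{A}$ vanish since the global state is pure, so $S(\mathcal{A}) = 0$).

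The second ingredient is an exact closed form for $\tilde{\mathfrak{X}}(\infty)$. Here I would adapt Theorem 4.18 of \cite{hamilton2023probing} to the von Neumann functional $C(J)_\rho = \sum_{v \in J} S(v)_\rho - S(J)_\rho$, giving the identity $\tilde{\mathfrak{X}}(\infty) = \sum_{J \subseteq \mathcal{A}} (-1)^{|J|-1} S(J)_\rho$. If a self-contained derivation is preferred, one obtains this from the Euler-Poincar\'e identity: at filtration $\varepsilon$ each face $J$ is present exactly when $C(J) \leq \varepsilon$, so $\chi(\varepsilon) = \sum_{\emptyset \neq J} (-1)^{|J|-1} \mathds{1}[C(J) \leq \varepsilon]$, and since $\tilde{\beta}_0 = \beta_0 - 1$ we have $\tilde{\chi}(\varepsilon) = \chi(\varepsilon) - 1$. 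Integrating over $[0, \varepsilon_{\max}]$ with $\int_0^{\varepsilon_{\max}} \mathds{1}[C(J) \leq \varepsilon]\, d\varepsilon = \varepsilon_{\max} - C(J)$ and $\sum_{\emptyset \neq J} (-1)^{|J|-1} = 1$ gives $\tilde{\mathfrak{X}}(\infty) = -\sum_{\emptyset \neq J} (-1)^{|J|-1} C(J)$. The single-vertex contributions cancel in the alternating sum, since each fixed $v$ is weighted by $\sum_{J' \subseteq \mathcal{A} \setminus \{v\}} (-1)^{|J'|} = 0$, which converts $C$ back into $S$ and produces the stated identity.

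With both sides in hand, the lower bound is immediate: splitting $\tilde{\mathfrak{X}}(\infty) = \sum_{|J|\text{ odd}} S(J) - \sum_{|J|\text{ even}} S(J)$ and discarding the non-negative even-cardinality terms gives $\tilde{\mathfrak{X}}(\infty) \leq \sum_{|J|\text{ odd}} S(J) \leq \sum_J S(J) = 2^n \langle E_D \rangle$. For the upper bound I would compute the gap $2^n \langle E_D \rangle - \tilde{\mathfrak{X}}(\infty) = \sum_J [1 - (-1)^{|J|-1}] S(J) = 2\sum_{|J|\text{ even}} S(J)$, using that $1 - (-1)^{|J|-1}$ is $2$ for even $|J|$ and $0$ for odd $|J|$. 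Then subadditivity of the von Neumann entropy, $S(J) \leq \sum_{v \in J} S(v)$ (equivalently $C(J) \geq 0$), together with the count that exactly $2^{n-2}$ even-cardinality subsets pass through each fixed vertex, yields $\sum_{|J|\text{ even}} S(J) \leq 2^{n-2} \sum_v S(v)$. Finally I would identify $\sum_v S(v) = C(\mathcal{A}) = \varepsilon_{\max}$ (the functional $C$ is maximized on the top face by downward closure), so the gap is at most $2 \cdot 2^{n-2} \varepsilon_{\max} = 2^{n-1} \varepsilon_{\max}$, which after dividing by $2^n$ is exactly $\tfrac{1}{2}\varepsilon_{\max}$.

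I expect the main obstacle to be the second step: pinning down the exact identity for $\tilde{\mathfrak{X}}(\infty)$ with the correct sign and the correct entropy functional. The sign bookkeeping is delicate, in particular the passage from the reduced to the full Euler characteristic and the observation that the total-correlation filtration $C$ reproduces the \emph{entropy} alternating sum only after the local terms cancel; one must also verify the boundary facts that all single vertices enter at $\varepsilon = 0$ (since $C(\{v\}) = 0$) and that $\varepsilon_{\max} = \sum_v S(v)$. Once the identity is established with the right sign, the two inequalities follow from non-negativity of entropy and subadditivity respectively, which are routine.
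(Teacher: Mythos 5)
Your proposal is correct and follows essentially the same route as the paper's proof: both rewrite $\langle E_D \rangle = 2^{-n}\sum_{J} S(J)_\rho$ using $E_D = S$ for pure-state bipartitions, invoke the identity $\tilde{\mathfrak{X}}(\infty) = \sum_{J}(-1)^{|J|-1}S(J)_\rho$, obtain the lower bound by discarding the non-negative even-cardinality entropy terms, and obtain the upper bound from subadditivity (which the paper phrases as $C(J)\ge 0$) combined with the count of $2^{n-2}$ even-cardinality subsets through each vertex and the identification $\varepsilon_{\max} = \sum_{v} S(v)_\rho$. The only difference is cosmetic: where the paper cites Theorem 4.18 of Hamilton--Leditzky and takes the $q\to 1$ limit, you sketch a self-contained Euler--Poincar\'e derivation of the same identity, which is a valid (and arguably more transparent) substitute.
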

\begin{proof}
According to Theorem 4.18 in  \cite{hamilton2023probing}, the reduced integrated Euler characteristic is given as an alternating sum over subsets of the total Hilbert space as
\begin{flalign}
    \tilde{\mathfrak{X}}_q(\infty) &=  \sum_{J \subseteq  \mathcal{A}} (-1)^{|J| - 1} S_q(J)_{\rho}.
\end{flalign}
For pure states $\rho_{AB}$, Ref.\cite{bennet_distillation} showed that the entanglement distillation rate, when partitioning the state $\rho$ into subsystems $A$ and $B$, can be quantified by the entropy of entanglement, given by
\begin{equation}
\label{eq:Distillation_Cost_Equality}
     E_D(\rho_{A}) = S(\rho_A) = -\Tr(\rho_A \log \rho_A)
\end{equation}
where $\rho_A = \Tr_B(\ket{\psi}\bra{\psi})$. For reduced density state $\rho_A$, the von Neumann entropy is denoted as $S(A)_\rho \equiv S(\rho_A)$.

As $q \rightarrow 1$, Tsallis entropy defined as $S_q(J)_{\rho}$ approaches the von Neumann entropy, 
\begin{equation}
\lim_{q \rightarrow 1} S_q(J)_{\rho} = S(J)_{\rho} = -\Tr(\rho_J \log \rho_J)    
\end{equation}
Using the von Neumann entropy formalism in the previous equation, 
\begin{flalign}
   \tilde{\mathfrak{X}}(\infty) &=  \sum_{J \subseteq  \mathcal{A}} (-1)^{|J| - 1} S(J)_{\rho}  \\
    \label{eq:IEC_rearranged}
    \tilde{\mathfrak{X}}(\infty) &=  \sum_{\substack{J \subseteq \mathcal{A}}}  S(J)_{\rho} - 2 \sum_{\substack{J \subseteq \mathcal{A} \\ |J| \text{mod} 2 = 0}}  S(J)_{\rho} 
\end{flalign}
Dividing equation $\eqref{eq:IEC_rearranged}$ by $2^n$, we get
\begin{flalign}
    \frac{\tilde{\mathfrak{X}}(\infty)}{2^n} &= \frac{\sum_{\substack{J \subseteq \mathcal{A}}}  S(J)_{\rho}}{2^n} - \frac{1}{2^{n-1}}\sum_{\substack{J \subseteq \mathcal{A} \\ |J| \text{mod} 2 = 0}}  S(J)_{\rho} 
\end{flalign}
Substituting in equation \eqref{eq:Distillation_Cost_Equality} in our previous equation, we get
\begin{flalign}
    \frac{\tilde{\mathfrak{X}}(\infty)}{2^n} &= \frac{\sum_{\substack{J \subseteq \mathcal{A}} }E_D(\rho_{J})}{2^n} - \frac{1}{2^{n-1}} \sum_{\substack{J \subseteq \mathcal{A} \\ |J| \text{mod} 2 = 0}}  S(J)_{\rho}  \\
    \label{eq:IEC_avgDistillation_equality}
     &= \langle E_D \rangle - \frac{1}{2^{n-1}}\sum_{\substack{J \subseteq \mathcal{A} \\ |J| \text{mod} 2 = 0}}  S(J)_{\rho}   \\
     \frac{\tilde{\mathfrak{X}}(\infty)}{2^n} & \leq \langle E_D \rangle
\end{flalign}
as von Neuman Entropy is non-negative, ie $S(J)_\rho \geq 0$ for any $J \subseteq \mathcal{A}$.

The total correlation function for a multipartite quantum state $ \rho $ defined on the set $\mathcal{A} := \{ A_1, .... A_n \}$ is defined as 
\begin{equation}
    C(\mathcal{A})_{\rho} = \sum_{v \in \mathcal{A}} S(v)_{\rho} - S(\mathcal{A})_{\rho} 
\end{equation}
On rewriting the equation, the entropy of the system on set $\mathcal{A}$ can be written in terms of the entropy of individual subsystems and correlation function
\begin{equation}
    \label{eq:entropy_in_terms_of_correlation}
    S(\mathcal{A})_{\rho} = \sum_{v \in \mathcal{A}} S(v)_{\rho} -  C(\mathcal{A})_{\rho}
\end{equation}
Substituting in eq \eqref{eq:entropy_in_terms_of_correlation} into eq \eqref{eq:IEC_avgDistillation_equality} yields the following result
\begin{flalign}
    \frac{\tilde{\mathfrak{X}}(\infty)}{2^n} &= \langle E_D \rangle - \frac{1}{2^{n-1}}\sum_{\substack{J \subseteq \mathcal{A} \\ |J|\!\!\!\! \mod 2 = 0}} \left( \sum_{v \in J} S(v)_{\rho} -  C(J)_{\rho} \right) 
\end{flalign}
For an element $v \in A$, the number of even-length subsets that include element $v$ can be determined by counting the number of odd-length subsets of the remaining $n - 1$ elements. The number of odd-length subsets of a set with $n-1$ elements is given by: $\sum_{k \mod 2 = 1} \binom{n - 1}{k} = 2^{n-2}$.
\begin{flalign}
    \frac{\tilde{\mathfrak{X}}(\infty)}{2^n} &= \langle E_D \rangle - 2^{n-2} \cdot \frac{1}{2^{n-1}}\sum_{v \in \mathcal{A}}  S(v)_{\rho} + \frac{1}{2^{n-1}} \sum_{\substack{J \subseteq \mathcal{A} \\ |J| \!\!\!\!\mod 2 = 0}}  C(J)_{\rho} \nonumber\\
    &=\langle E_D \rangle - \frac{1}{2}\sum_{v \in \mathcal{A}}  S(v)_{\rho} + \frac{1}{2^{n-1}} \sum_{\substack{J \subseteq \mathcal{A} \\ |J| \!\!\!\!\mod 2 = 0}}  C(J)_{\rho}
\end{flalign}
\begin{equation}
      \langle E_D \rangle = \frac{\tilde{\mathfrak{X}}(\infty)}{2^n} + \frac{1}{2}\sum_{v \in \mathcal{A}}  S(v)_{\rho} - \frac{1}{2^{n-1}} \sum_{\substack{J \subseteq \mathcal{A} \\ |J| \!\!\!\!\mod 2 = 0}}  C(J)_{\rho}
\end{equation}
The correlation function is non-negative, a result that follows from the property of subadditivity. By omitting the correlation function from the previous equation, we obtain
\begin{flalign}
    \langle E_D \rangle &\leq \frac{\tilde{\mathfrak{X}}(\infty)}{2^n} + \frac{1}{2} \sum_{v \in \mathcal{A}}  S(v)_{\rho} 
\end{flalign}
Based on the definition provided in \cite{hamilton2023probing},  $\varepsilon_{max}$ is the largest value of the filtration parameter at which the number of barcodes changes. For any pure quantum state $ \rho $ defined on the set $ \mathcal{A} $, $ \varepsilon_{\text{max}} = \sum_{v \in \mathcal{A}} S(v)_{\rho} $. At this value of $ \varepsilon_{\text{max}} $, the entire topological structure forms a simplex, collapsing the barcode to a single 0-dimensional hole. The lower bound can be defined in terms of average distillable entanglement and $\varepsilon_{\text{max}}$,
\begin{equation}
\label{eq:ADE_upper_bound}
   \langle E_D \rangle   \leq   \frac{\tilde{\mathfrak{X}}(\infty)}{2^n} + \frac{1}{2} \varepsilon_{\text{max}} 
\end{equation}
\end{proof}


Computing the IEC provides us with a means of estimating an upper bound for the average distillable entanglement. For the odd-qubit case, we observe a particularly simple bound formula because of the symmetric properties of average distillable entanglement in such cases.


\begin{prop}
\label{prop:vanishing_IEC_for_odd_qubit}
    Let $\ket{\psi}$ be pure quantum state with odd number of qubits defined over qubit set set $\mathcal{A} := \{ A_1, .... A_n \}$  where $n = 2k + 1$ for some integer $k$. The Integrated Euler's characteristic for density matrix $\rho = \ket{\psi} \bra{\psi}$ is given by $\tilde{\mathfrak{X}}(\infty) = 0$.
\end{prop}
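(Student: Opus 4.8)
The plan is to work directly from the closed form of the reduced Integrated Euler Characteristic quoted from Theorem 4.18, namely
\begin{equation}
\tilde{\mathfrak{X}}(\infty) = \sum_{J \subseteq \mathcal{A}} (-1)^{|J|-1} S(J)_\rho,
\end{equation}
and to exploit the complementary symmetry of a pure state so that the alternating sum collapses termwise. The key structural fact is that, because $\rho = \ketbra{\psi}{\psi}$ is pure on $\mathcal{A}$, the Schmidt decomposition across the cut $(J, J^c)$ forces $S(J)_\rho = S(J^c)_\rho$ for every $J \subseteq \mathcal{A}$; this is the same identity already invoked in the proof of Proposition \ref{prop:ADE_iff_AME}.

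First I would partition the $2^n$ subsets of $\mathcal{A}$ into complementary pairs $\{J, J^c\}$. Since $n = 2k+1$ is odd, $|J|$ and $|J^c| = n - |J|$ always have opposite parity, so in particular $|J| \neq |J^c|$ and hence $J \neq J^c$. Thus the pairing is genuine and the $2^n$ subsets split into exactly $2^{n-1}$ disjoint pairs with no fixed points.

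Next I would compare the two contributions within a pair. Writing $|J^c| = n - |J|$ with $n-1$ even, one has $(-1)^{|J^c|-1} = (-1)^{(n-1)-|J|} = (-1)^{|J|} = -(-1)^{|J|-1}$, so the sign attached to $J^c$ is precisely the negative of the sign attached to $J$. Combined with $S(J)_\rho = S(J^c)_\rho$, the two terms of each pair cancel:
\begin{equation}
(-1)^{|J|-1} S(J)_\rho + (-1)^{|J^c|-1} S(J^c)_\rho = 0.
\end{equation}
Summing over all $2^{n-1}$ pairs gives $\tilde{\mathfrak{X}}(\infty) = 0$, as claimed.

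There is no serious obstacle here; the proof reduces to a one-line parity-and-pairing argument once Theorem 4.18 and the purity identity $S(J) = S(J^c)$ are in hand. The only point requiring care is confirming the sign flip, i.e.\ that oddness of $n$ guarantees opposite parities of $|J|$ and $|J^c|$ — this is exactly what fails in the even case and explains why the vanishing is special to odd $n$. As a consistency check, the same conclusion can be read off from equation \eqref{eq:IEC_avgDistillation_equality}: for odd $n$ the complementary symmetry gives $\sum_{J \subseteq \mathcal{A},\ |J| \bmod 2 = 0} S(J)_\rho = 2^{n-1} \langle E_D \rangle$, and substituting this into that identity yields $\tilde{\mathfrak{X}}(\infty)/2^n = \langle E_D \rangle - \langle E_D \rangle = 0$.
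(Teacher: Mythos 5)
Your proposal is correct and follows essentially the same route as the paper: both start from the Theorem 4.18 formula, pair each subset $J$ with its complement $J^c$, use oddness of $n$ to get opposite signs, and invoke the purity identity $S(J)_\rho = S(J^c)_\rho$ from the Schmidt decomposition to cancel each pair. Your write-up is in fact slightly more careful than the paper's (you verify the sign flip explicitly and handle the pair $\{\emptyset, \mathcal{A}\}$ within the same pairing), but the underlying argument is identical.
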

\begin{proof}

We defined the IEC of any system as the alternating sum of entropic measures over subsystems of increasing sizes. We have 
\begin{equation}
    \tilde{\mathfrak{X}}(\infty) =  \sum_{J \subseteq  \mathcal{A}} (-1)^{|J| - 1} S(J)_{\rho}
\end{equation}
Since $|\mathcal{A}| = 2k+1$ and the fact that the entropy of the global pure state is vanishing, we can reorder the terms as 
\begin{equation}
\label{eq:vanishing_IEC}
    \tilde{\mathfrak{X}}(\infty) =  \sum^{k}_{i=1} \sum_{\substack{J \subset  \mathcal{A} \\ |J| = i}} \left( S(J)_{\rho} - S(J^c)_{\rho} \right)
\end{equation}
where $J^c \coloneqq \mathcal{A} \setminus J$.
Using same the Schmidt decomposition argument as in \ref{prop:ADE_iff_AME}, we know that for any $ J \subset \mathcal{A} $, $ S(J)_{\rho} = S(J^c)_{\rho} $. Using this fact in equation \ref{eq:vanishing_IEC}, the terms summed over cancel out leaving us with $\tilde{\mathfrak{X}}(\infty) = 0$.
\end{proof}

\begin{theorem}
    Let $\ket{\psi}$ be a pure quantum state, represented by the density matrix density matrix $\rho = \ket{\psi}\bra{\psi}$, with an odd number of qubits defined over set $\mathcal{A} := \{\mathcal{A}_1, \dots \mathcal{A}_n\}$ where $n = 2k + 1$ for some integer $k$. We can bound the average distillable entanglement for $\ket{\psi}$ as below
    \begin{equation}
        \label{eq:ADE_bound_odd_qubits}
        \langle E_D \rangle \leq  \frac{1}{2} \sum_{v \in \mathcal{A}}  S(v)_{\rho} = \frac{1}{2}C(A)
    \end{equation}
\end{theorem}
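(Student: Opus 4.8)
The plan is to combine the exact identity obtained in the course of proving Theorem~\ref{theroem:euler_characteristic_distillation} with the vanishing of the Integrated Euler Characteristic established in Proposition~\ref{prop:vanishing_IEC_for_odd_qubit}. Recall that while proving Theorem~\ref{theroem:euler_characteristic_distillation} we did not merely obtain an inequality but an exact equality,
\begin{equation}
\langle E_D \rangle = \frac{\tilde{\mathfrak{X}}(\infty)}{2^n} + \frac{1}{2}\sum_{v \in \mathcal{A}} S(v)_{\rho} - \frac{1}{2^{n-1}} \sum_{\substack{J \subseteq \mathcal{A} \\ |J| \bmod 2 = 0}} C(J)_{\rho},
\end{equation}
valid for any pure multipartite state. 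The first step is therefore simply to specialize this identity to the odd-qubit setting rather than to re-derive anything from scratch.

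Next I would invoke Proposition~\ref{prop:vanishing_IEC_for_odd_qubit}, which asserts that for $n = 2k+1$ the reduced Integrated Euler Characteristic vanishes, $\tilde{\mathfrak{X}}(\infty) = 0$. Substituting this into the identity above kills the leading term, leaving
\begin{equation}
\langle E_D \rangle = \frac{1}{2}\sum_{v \in \mathcal{A}} S(v)_{\rho} - \frac{1}{2^{n-1}} \sum_{\substack{J \subseteq \mathcal{A} \\ |J| \bmod 2 = 0}} C(J)_{\rho}.
\end{equation}
The second step is to discard the subtracted correction term: each total correlation $C(J)_{\rho}$ is non-negative by subadditivity of the von Neumann entropy (exactly the downward-closure/non-negativity property used earlier), so the entire even-cardinality sum is non-negative, and dropping it can only increase the right-hand side. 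This immediately yields the upper bound $\langle E_D \rangle \leq \tfrac{1}{2}\sum_{v \in \mathcal{A}} S(v)_{\rho}$.

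Finally I would identify the right-hand side with $\tfrac{1}{2}C(\mathcal{A})$. Because the global state $\rho = \ketbra{\psi}{\psi}$ is pure we have $S(\mathcal{A})_{\rho} = 0$, so the total correlation over the entire system collapses to $C(\mathcal{A})_{\rho} = \sum_{v \in \mathcal{A}} S(v)_{\rho} - S(\mathcal{A})_{\rho} = \sum_{v \in \mathcal{A}} S(v)_{\rho}$, giving the stated closed form. I expect no genuine obstacle: all the analytic content is carried by the odd-qubit vanishing of the IEC and by non-negativity of the correlation functional, both already in hand. The only point meriting minor care is bookkeeping, namely confirming that it is precisely the even-cardinality correction sum that is being bounded away, so that the inequality sits in the correct direction.
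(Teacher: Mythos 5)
Your proposal is correct and follows essentially the same route as the paper: the paper's proof simply substitutes $\tilde{\mathfrak{X}}(\infty) = 0$ from Proposition~\ref{prop:vanishing_IEC_for_odd_qubit} into the bound \eqref{eq:ADE_upper_bound} of Theorem~\ref{theroem:euler_characteristic_distillation} and uses purity to write $\varepsilon_{\max} = \sum_{v \in \mathcal{A}} S(v)_\rho = C(\mathcal{A})$. Your only cosmetic difference is that you re-derive that bound from the intermediate exact identity (dropping the non-negative even-cardinality correlation sum) rather than citing the already-stated inequality, which is the same dropping step the paper performed inside the earlier proof.
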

\begin{proof}
Since $ \ket{\psi} $ is a pure quantum state, the entropy $ S(A)_{\rho} = 0 $. The correlation value of the state is given by

\begin{equation}
C(A)_{\rho} = \sum_{v \in \mathcal{A}} S(v)_{\rho} - S(A)_{\rho} = \sum_{v \in \mathcal{A}} S(v)_{\rho}.
\end{equation}
We have already proven that, for the case of odd qubits, the IEC in our TDA analysis is zero (see \ref{prop:vanishing_IEC_for_odd_qubit}). Substituting this result into \ref{eq:ADE_upper_bound}, we obtain the bound given in equation \ref{eq:ADE_bound_odd_qubits}. 
\end{proof}

The tightness of the bounds prsented in \ref{theroem:euler_characteristic_distillation} is very ambiguous. However, we can make an estimation of the slack of the bound \eqref{eq:ADE_upper_bound} for AME states.

\begin{theorem}
For AME states $ \rho = \ket{\psi} \bra{\psi} $ over qubit set $\mathcal{A} := \{\mathcal{A}_1, \dots \mathcal{A}_n\}$, the slack in the upper bound in the average distillable entanglement given by Theorem \ref{theroem:euler_characteristic_distillation}
$$
\left|\frac{\tilde{\mathfrak{X}}(\infty)}{2^n} - \langle E_D \rangle + \frac{1}{2} \varepsilon_{\text{max}}\right| \in \Theta(\sqrt{n})
$$
\end{theorem}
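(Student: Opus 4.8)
The plan is to first collapse the three-term combination into a single explicit sum, then recognize that sum as (essentially) the mean absolute deviation of a symmetric binomial, whose $\Theta(\sqrt{n})$ growth is classical.

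First I would reuse the exact identity established inside the proof of Theorem \ref{theroem:euler_characteristic_distillation}, namely
\[
\langle E_D \rangle = \frac{\tilde{\mathfrak{X}}(\infty)}{2^n} + \tfrac12\sum_{v\in\mathcal{A}}S(v)_\rho - \frac{1}{2^{n-1}}\sum_{\substack{J\subseteq\mathcal{A}\\ |J|\bmod 2=0}}C(J)_\rho,
\]
together with $\varepsilon_{\text{max}}=\sum_{v\in\mathcal{A}}S(v)_\rho$. These give the clean cancellation
\[
\frac{\tilde{\mathfrak{X}}(\infty)}{2^n}-\langle E_D \rangle+\tfrac12\varepsilon_{\text{max}} = \frac{1}{2^{n-1}}\sum_{\substack{J\subseteq\mathcal{A}\\ |J|\bmod 2=0}}C(J)_\rho,
\]
so the problem reduces to estimating one sum of correlations over even subsets (call it the slack). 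For an AME state every single-qubit marginal is maximally mixed, so $S(v)_\rho=1$ and $\sum_{v\in J}S(v)_\rho=|J|$, while $S(J)_\rho=\min\{|J|,n-|J|\}$; hence $C(J)_\rho=(2|J|-n)^+\coloneqq\max\{0,2|J|-n\}$, a quantity depending only on $|J|$. In particular the slack is manifestly nonnegative, so the absolute value in the statement is harmless.

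Next I would pass to a closed combinatorial form. Summing over even cardinalities, $\text{slack}=\frac{1}{2^{n-1}}\sum_{k\text{ even}}\binom nk (2k-n)^+$. Using the elementary evaluations $\sum_{k\text{ even}}\binom nk=2^{n-1}$ and $\sum_{k\text{ even}}k\binom nk=n2^{n-2}$ (from $(1\pm1)^n$) the signed sum $\sum_{k\text{ even}}\binom nk(2k-n)$ vanishes, so writing $(2k-n)^+=\tfrac12(|2k-n|+(2k-n))$ collapses everything to
\[
\text{slack}=\frac{1}{2^n}\sum_{k\text{ even}}\binom nk|2k-n|,
\]
i.e.\ (twice) the even-cardinality-restricted mean absolute deviation of $X\sim\mathrm{Bin}(n,\tfrac12)$ about its mean $n/2$. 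The \emph{full} sum $\tfrac1{2^n}\sum_k\binom nk|2k-n|$ equals $2\,\mathbb{E}|X-n/2|$ and is $\Theta(\sqrt n)$ by the classical De Moivre evaluation $\mathbb{E}|X-n/2|=\tfrac{n}{2^{n}}\binom{n-1}{\lfloor(n-1)/2\rfloor}$ together with Stirling; this already yields the upper bound $\text{slack}=O(\sqrt n)$, since the even-restricted sum is dominated term-by-term by the full one.

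For the matching lower bound I would split the even part off via $\sum_{k\text{ even}}f(k)=\tfrac12\sum_k f(k)+\tfrac12\sum_k(-1)^kf(k)$ with $f(k)=\binom nk|2k-n|$. When $n$ is odd the pairing $k\leftrightarrow n-k$ reverses parity and preserves $f$, forcing the even part to be exactly half the full sum; this gives the closed form $\text{slack}=\frac{n}{2^n}\binom{n-1}{\lfloor n/2\rfloor}$ (also obtainable directly from $\tilde{\mathfrak{X}}(\infty)=0$ of Proposition \ref{prop:vanishing_IEC_for_odd_qubit} and the AME value of $\langle E_D \rangle$ in Proposition \ref{prop:ADE_iff_AME}), and Stirling gives $\text{slack}\sim\sqrt{n/(2\pi)}$. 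When $n=2m$ is even the correction term is $A(n)=\sum_k(-1)^k\binom nk|2k-n|$, which I would evaluate in closed form as $A(n)=4(-1)^{m-1}\binom{2m-2}{m-1}$ using the partial-sum identities $\sum_{k\le j}(-1)^k\binom Nk=(-1)^j\binom{N-1}{j}$ and $k\binom Nk=N\binom{N-1}{k-1}$; since $\binom{2m-2}{m-1}/\binom{2m}{m}=\tfrac{m}{2(2m-1)}$, one gets $|A(n)|/\bigl(n\binom{n}{n/2}\bigr)=\tfrac1{n-1}$, so the even-restricted sum equals $\tfrac12\,n\binom{n}{n/2}\bigl(1\pm\tfrac1{n-1}\bigr)=\Theta(2^n\sqrt n)$ and again $\text{slack}=\Theta(\sqrt n)$, indeed $\sim\sqrt{n/(2\pi)}$ for both parities. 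The main obstacle is precisely this even-$n$ lower bound: one must rule out a near-cancellation of the main mean-absolute-deviation term against the alternating correction $A(n)$, which is why controlling $A(n)$—via its closed form, or at least the bound $|A(n)|=O(2^n/\sqrt n)=o(2^n\sqrt n)$—is the crux. The odd case and the overall upper bound are comparatively routine once the reduction to $\tfrac1{2^n}\sum_{k\text{ even}}\binom nk|2k-n|$ is in hand.
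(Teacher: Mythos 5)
Your proposal is correct, and it reaches the same closed forms as the paper (I checked your alternating-sum evaluation $A(n)=4(-1)^{m-1}\binom{2m-2}{m-1}$ against small cases and against the paper's even-$n$ expressions $\frac{2k(k-1)}{2^{2k}(2k-1)}\binom{2k}{k}$ and $\frac{2k^2}{2^{2k}(2k-1)}\binom{2k}{k}$; they agree), but it is organized quite differently from the paper's proof. The paper never isolates the slack as a single object: it computes the two sides separately, plugging the AME value of $\langle E_D \rangle$ from Proposition \ref{prop:ADE_iff_AME} and the value $\varepsilon_{\max}=n$ into the bound, handling odd $n$ via the vanishing of the IEC (Proposition \ref{prop:vanishing_IEC_for_odd_qubit}) and even $n$ via a direct computation of $\tilde{\mathfrak{X}}(\infty)$ as an alternating sum of AME entropies, with a further case split on the parity of $n/2$ and repeated use of Lemma \ref{lemma:binomial_equation} before applying Stirling. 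You instead collapse the slack once and for all, via the intermediate identity inside the proof of Theorem \ref{theroem:euler_characteristic_distillation}, into $\frac{1}{2^{n-1}}\sum_{|J|\,\mathrm{even}}C(J)_\rho$, observe that for AME states $C(J)=(2|J|-n)^+$, and reduce everything to the even-restricted mean absolute deviation of a $\mathrm{Bin}(n,1/2)$ variable. This buys several things the paper's route does not make visible: the nonnegativity of the slack is manifest (so the absolute value is automatic, whereas the paper never addresses it), the odd case becomes a one-line parity-pairing argument rather than a separate computation, the asymptotic constant $\sqrt{n/(2\pi)}$ comes out uniformly for both parities, and the result connects conceptually to the appendix proposition on the average total correlation of AME states, of which your quantity is the even-restricted analogue. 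What the paper's route buys in exchange is an explicit evaluation of $\tilde{\mathfrak{X}}(\infty)$ for even-$n$ AME states, which has some independent interest. The only part of your sketch that genuinely needs to be written out carefully is the closed-form evaluation of $A(n)$ (your identified crux); the partial-sum identity you invoke for it is exactly the paper's Lemma \ref{lemma:binomial_equation}, so the ingredient is already available.
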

\begin{proof}
Since $ \ket{\psi} $ is $ \left\lfloor \frac{n}{2} \right\rfloor $-uniform, it is also $ 1 $-uniform. The filtration parameter at $ \varepsilon_{\text{max}} $ marks the point at which we stop our TDA analysis, and the entire topological space collapses into a single simplicial complex. For an input quantum state $ \rho $, the filtration parameter at which $ \mathcal{A} $ becomes a simplicial complex is given by:

\[
C(A) = S(\mathcal{A}_1) + S(\mathcal{A}_2) + \dots + S(\mathcal{A}_n) - S(A).
\]

For a pure $ 1 $-uniform state, this expression simplifies to $ C(A) = n $ as $S(\mathcal{A}_i) = 1$ for $i \in [n]$ and $S(A) = 0$. Thus, we conclude that $ \varepsilon_{\text{max}} = n $ for $ 1 $-uniform states.
    Let's first consider the slack of the bound for quantum states with an odd number of qubits. Using the results from proposition \ref{prop:ADE_iff_AME} and \ref{prop:vanishing_IEC_for_odd_qubit}, we know that for an odd-qubit AME, where $ n = 2k + 1 $ for some integer $ k $, we have that 
    \begin{flalign}
    \label{eq:IEC_slack_for_odd_qubits}
        \frac{\tilde{\mathfrak{X}}(\infty)}{2^n} - \langle E_D \rangle + \frac{1}{2} \varepsilon_{\text{max}} &= \frac{2k+1}{2} - \frac{2k+1}{2^{2k+1}} \left(2^{2k} - \binom{2k}{k} \right)\\
        &=\frac{2k+1}{2^{2k+1}}\binom{2k}{k} 
    \end{flalign}
    A well-known Stirling approximation for central binomial coefficients tells us that
\begin{equation}
    \frac{2^{2k}}{2\sqrt{k}}
\leq \binom{2k}{k} \leq \frac{2^{2k}}{\sqrt{\pi k}}
\end{equation}
Using this approximation in Equation \ref{eq:IEC_slack_for_odd_qubits}, we get that 
\begin{flalign}
\frac{2k+1}{2^{2k+1}} \frac{2^{2k}}{2\sqrt{k}}&\leq \frac{\tilde{\mathfrak{X}}(\infty)}{2^n} - \langle E_D \rangle + \frac{1}{2} \varepsilon_{\text{max}} \leq \frac{2k+1}{2^{2k+1}} \frac{2^{2k}}{\sqrt{\pi k}} \\
    \frac{\sqrt{k}}{2} + \frac{1}{4\sqrt{k}} &\leq  \frac{\tilde{\mathfrak{X}}(\infty)}{2^n} - \langle E_D \rangle + \frac{\varepsilon_{max}}{2} \leq  \frac{\sqrt{k}}{\sqrt{\pi}} + \frac{1}{2\sqrt{\pi k}} 
\end{flalign}
Through a crude approximation to the sub-dominant $O(1/\sqrt{k})$ terms in the bounds we then see that
\begin{flalign}
    \frac{\sqrt{k}}{2} &\leq  \frac{\tilde{\mathfrak{X}}(\infty)}{2^n} - \langle E_D \rangle + \frac{\varepsilon_{max}}{2}\leq  \frac{3\sqrt{k}}{2\sqrt{\pi}} 
\end{flalign}
Therefore, the slack scales as $ \Theta(\sqrt{n}) $ for odd qubit AME states. 

Even qubit quantum states do not share the same mathematical convenience as the odd qubit case. The IEC of such states does not simply vanish because the entropy of a subset of qubits and its complement share the same sign and do not cancel each other out. We begin with the IEC equation and apply the Schmidt decomposition argument previously applied in Proposition \ref{prop:ADE_iff_AME}, considering a quantum state with $n = 2k$ qubits, where $k$ is an integer. We get that 

\begin{equation}
    \tilde{\mathfrak{X}}(\infty) =  \sum_{J \subseteq  \mathcal{A}} (-1)^{|J| - 1} S(J)_{\rho} = 2\sum^{k-1}_{i=1} (-1)^{i-1} \sum_{\substack{J \subset  \mathcal{A} \\ |J| = i}} S(J)_{\rho} - (-1)^k\sum_{\substack{J \subset  \mathcal{A} \\ |J| = k}} S(J)_{\rho} 
\end{equation}
By assumption, the quantum state is AME so the IEC takes the value (see \ref{def:AME_state} for details of these calculations)
\begin{flalign}
    \tilde{\mathfrak{X}}(\infty) &= 2\sum^{k-1}_{i=1} (-1)^{i-1} \binom{2k}{i}i - (-1)^k \binom{2k}{k} k 
\end{flalign}
Using identity $\binom{n}{i}i = n \binom{n-1}{i-1}$, we get 
\begin{flalign}
\label{eq:even_qubit_AME_IEC}
     \tilde{\mathfrak{X}}(\infty) &= 4k\sum^{k-1}_{i=1} (-1)^{i-1} \binom{2k-1}{i-1} - (-1)^k \binom{2k}{k} k \\
     &= 4k\sum^{k-2}_{j=0} (-1)^{j} \binom{2k-1}{j} - (-1)^k \binom{2k}{k} k
\end{flalign}
Using Lemma \ref{lemma:binomial_equation} we get, 
\begin{equation}
    \tilde{\mathfrak{X}}(\infty) = 4k (-1)^k\binom{2k-2}{k-2} - (-1)^k \binom{2k}{k} k \\
\end{equation}
For the case where $ k $ is even, we use \ref{prop:ADE_iff_AME} to get
\begin{flalign}
    \frac{\tilde{\mathfrak{X}}(\infty)}{2^n} + \frac{1}{2} \sum_{v \in \mathcal{A}}  S(v)_{\rho} - \langle E_D \rangle &= \frac{1}{2^{2k}}\left[ 4k \binom{2k-2}{k-2} -  \binom{2k}{k} k  \right] + \frac{2k}{2} -  \frac{2k}{2^{2k}} \left[ 2^{2k-1} - \binom{2k-1}{k}\right] \\
    &= \frac{k}{2^{2k}} \left[ 4\binom{2k-2}{k-2} - \binom{2k}{k} + 2\binom{2k-1}{k}\right] + k - k \\
    &= \frac{k}{2^{2k}} \left[ 4\binom{2k-2}{k-2} - \binom{2k-1}{k-1} - \binom{2k-1}{k} + 2\binom{2k-1}{k}\right] \\
    &= \frac{k}{2^{2k}} \left[ 4\binom{2k-2}{k-2} - \binom{2k-1}{k} - \binom{2k-1}{k} + 2\binom{2k-1}{k}\right] \\
    &= \frac{4k}{2^{2k}} \binom{2k-2}{k-2} \\
    &= \frac{2k(k-1)}{2^{2k}(2k-1)} \binom{2k}{k}
\end{flalign}
Using the Stirling approximation again, we get
\begin{flalign}
    \frac{2k(k-1)}{2^{2k}(2k-1)}\frac{2^{2k}}{2\sqrt{k}} &\leq \frac{\tilde{\mathfrak{X}}(\infty)}{2^n} - \langle E_D \rangle + \frac{\varepsilon_{max}}{2} \leq \frac{2k(k-1)}{2^{2k}(2k-1)}\frac{2^{2k}}{\sqrt{\pi k}} \\ 
 \frac{\sqrt{k}(1 - 1/k)}{2 - 1/k} &\leq \frac{\tilde{\mathfrak{X}}(\infty)}{2^n} - \langle E_D \rangle + \frac{\varepsilon_{max}}{2}\leq  2\frac{\sqrt{k}}{\sqrt{\pi}}\left( \frac{1 - 1/k}{2 - 1/k} \right)
\end{flalign}

For the case when $k$ is even, slack scales $ \Theta(\sqrt{n}) $ \\
In the second case where $n/2$ is odd, we perform similar calculations to calculate the slack
\begin{flalign}
        \frac{\tilde{\mathfrak{X}}(\infty)}{2^n} + \frac{1}{2} \sum_{v \in \mathcal{A}}  S(v)_{\rho} - \langle E_D \rangle &= \frac{1}{2^{2k}}\left[ \binom{2k}{k} k  - 4k \binom{2k-2}{k-2}\right] + \frac{2k}{2} -  \frac{2k}{2^{2k}} \left[ 2^{2k-1} - \binom{2k-1}{k}\right] \\
        &= \frac{1}{2^{2k}}\left[ \binom{2k}{k} k  - 4k \binom{2k-2}{k-2}\right] + \frac{2k}{2} -  \frac{2k}{2^{2k}} \left[ 2^{2k-1} - \binom{2k-1}{k}\right] \\
        &= \frac{k}{2^{2k}}\left[ \binom{2k}{k} - 4\binom{2k - 2}{k - 2} + 2\binom{2k-1}{k}\right] + k - k \\
        &= \frac{k}{2^{2k}}\left[ \binom{2k}{k} - 4\binom{2k - 2}{k - 2} + \binom{2k}{k}\right] \\
        &= \frac{k}{2^{2k}} \binom{2k}{k}\left(2 - \frac{4k(k-1)}{2k(2k - 1)} \right) \\
        &= \frac{2k}{2^{2k}} \binom{2k}{k}\left( 1 - \frac{k - 1}{2k - 1}\right) \\
        &= \frac{2k^2}{2^{2k}(2k - 1)} \binom{2k}{k}
\end{flalign}
Approximating the central binomial coefficient we get that 
\begin{flalign}
    \frac{2k^2}{2^{2k}(2k - 1)}\frac{2^{2k}}{2\sqrt{k}} &\leq \frac{\tilde{\mathfrak{X}}(\infty)}{2^n} - \langle E_D \rangle + \frac{\varepsilon_{max}}{2}\leq \frac{2k^2}{2^{2k}(2k - 1)}\frac{2^{2k}}{\sqrt{\pi k}} \\ 
    \sqrt{k} \left( \frac{1/k}{2 - 1/k} \right) &\leq \frac{\tilde{\mathfrak{X}}(\infty)}{2^n} - \langle E_D \rangle + \frac{\varepsilon_{max}}{2}\leq \frac{2\sqrt{k}}{\sqrt{\pi}} \left( \frac{1/k}{2 - 1/k} \right)
\end{flalign}

The slack is bounded by $\Theta(\sqrt{n})$.
\end{proof}

We observe that the upper bound introduced in Theorem \ref{theroem:euler_characteristic_distillation} scales as $\Theta(\sqrt{n})$ for AME states when the ADE is saturated. In contrast, the lower bound does not exhibit the same behavior—it scales linearly with $n$. We expect the upper bound to be tighter and more informative than the lower bound.
\section{Barcodes of Graph States} \label{sec:graph_states}

To continue our application of TDA to multipartite entanglement, we turn to a previously developed class of states known as graph states. These states are represented using a graph $G = (V, E)$ and are a generalization of cluster states, making them fundamental primitives for Measurement Based Quantum Computing \cite{nielsen2006}. These states can exhibit highly complex, non-local entanglement, which, along with their relatively simple mathematical description, makes them ideal candidates for studying multipartite entanglement. 

Graph states can be defined in two equivalent ways, one via quantum circuits and the other using the stabilizer formalism. We will first give the circuit description. Given a graph $G$, which for the remainder of this section we assume to be connected and at least 4 vertices, assign a qubit to each vertex $v \in V$. We will interchangeably refer to a vertex as its associated qubit and vice-versa. Graph states are constructed by first applying a Hadamard gate to each vertex, followed by a controlled-Z gate for every edge $(u, v) \in E$, acting between the corresponding vertices $u$ and $v$.
Mathematically, graph states are defined as 
\begin{equation}
    \ket{G} \coloneqq \prod_{(u, v) \in E} CZ(u, v) H^{\otimes V} \ket{0}^{\otimes V}. \label{def:graph_state}
\end{equation}
The controlled-Z rotation can be replaced with other 2-qubit unitaries but we will not make use of this generalization.

One of the features of graph states that make them easy to work with analytically is that they are stabilizer states, a class of states first developed in error correction and has found traction in various other research fields. We will first give a brief overview of the stabilizer formalism, focused specifically on the quantities we will need to compute, before moving on to studying graph states in particular. Stabilizer states are typically defined using the Pauli group $\mathcal{P}_n$, where $n$ denotes the number of qubits, where we include an overall global phase of $\set{\pm 1, \pm i}$. An element of the Pauli group $K \in \mathcal{P}_n$ can act on state vectors by multiplication and a state $\ket{\psi}$ is said to be stabilized by $K$ if $K \ket{\psi} = \ket{\psi}$. We will typically work with an abelian subgroup of $S \subseteq \mathcal{P}_n$ that is defined by a set of generators $S = \langle K_1, \ldots, K_m \rangle$. The set of states that is stabilized by $S$ can be seen to be in the image of the projection operator $\Pi_S = \prod_{K \in S} \Pi_K = \prod_{K \in S} \frac{1}{2}(\identity + K)$, where $K$ is the Pauli operator (which is a reflection operator).

 To construct the stabilizer representation of graph states we define a stabilizer generator $K_u$ for each vertex $u \in V$ as
\begin{equation}
    K_u \coloneqq X_u \prod_{v \in N(u)} Z_{v}.
\end{equation}
Then $\ket{G}$ is defined to be the unique stabilizer state for the stabilizer subgroup $S_V$ generated by each of the single qubit generators
\begin{equation}
    S_V \coloneqq \langle K_u \rangle_{u \in V},
\end{equation}
and can be written $\ketbra{G}{G} = \Pi_{S_V}$. One of the requirements to be a stabilizer group is that the group must be abelian, and since each element is a Pauli string and squares to the identity we can convert the product to a sum
\begin{align}
    \Pi_{S_V} &= \frac{1}{2^{|V|}} \prod_{u \in V} (\identity + K_u) = \frac{1}{2^{|V|}} \sum_{b_1 = 0, \ldots, b_{|V|} = 0}^{b_1 = 1, \ldots, b_{|V|} = 1} K_1^{b_1} \ldots K_{|V|}^{b_{|V|}} = \frac{1}{2^{|V|}} \sum_{K \in S_{V}} K.
\end{align}
This expression as a sum is ultimately where we derive most of our utility of the stabilizer formalism from, as we are now able to compute linear operators, such as expectation values and partial traces, as a sum of the operator acting on individual Pauli strings.

We now have our most useful form for the density matrix associated with a graph state
\begin{equation}
    \rho_G \coloneqq \ketbra{G}{G} = \frac{1}{2^{|V|}} \sum_{K \in S_V} K.
\end{equation}
Before we get into computations that this formalism makes easy, we define the support of a stabilizer as the set of qubits (vertices) that the operator acts non-trivially on. For the generators, this is simply the union vertex of the generator with the neighbors, denoted as $N(u)$, and written as
\begin{equation}
    \supp(K_u) = \set{u} \cup N(u).
\end{equation}
Note that stabilizers in $S_V$ are uniquely defined by the product of generators that constitute the stabilizer, which is a one-to-one mapping with subsets of vertices. Since every vertex stabilizer has a Pauli $X$ factor, this means that any product of generators will have a Pauli $X$ on each vertex in which a generator is present, as the $Z$ factors from other generators will not be able to cancel the $X$ factors. The $Z$ factors from each vertex may or may not cancel, depending on the number of vertices that share this neighbor. For example, if we take the stabilizer $K = K_u K_v$ and $u, v$ share only one neighbor $w$ in common, then the support of $K$ will be $\set{u} \cup \set{v} \cup N(u) \cup N(v) \setminus w$. This is because $Z_w^2 = \identity$. 

This leads to defining the symmetric difference of neighborhood sets as 
\begin{equation}
    \triangle_N(U) \coloneqq \set{v \in V : | U \cap N(v) | \mod 2 = 1}.
\end{equation}
This allows us to write the support of an arbitrary graph state stabilizer $K_U$ as
\begin{equation}
    \supp(K_U) = \supp(K_{u_1} K_{u_2} \ldots K_{u_{|U|}}) = U \cup \triangle_N(U). \label{eq:stabilizer_support}
\end{equation}
This matches with the notion of \emph{local sets}, defined in \cite{hoyer2006graph} and explored further in \cite{claudet2024covering}, where a set is said to be \emph{local} if it is of the form in Eq. \eqref{eq:stabilizer_support}. They are able to show a number of properties of \emph{minimal} local sets, algorithms to find covers of the original graph by minimal local sets, and their relations to cut ranks. An interesting property of local sets is that they are invariant under local complementation. We further develop the notion of local sets, or the support of a vertex set, to relate it to the marginal entropies of a graph state. The invariance of local sets under local complementation provides an explanation for why the marginal entropies of graph states also remain invariant under local complementation.

Intuitively, a vertex $v$ is in the symmetric difference of the neighbors of $U \subset V$ if and only if it is neighbors with an \emph{odd} a number of vertices $u \in U$. This allows us to compute the support of a stabilizer operator as
\begin{equation}
    \supp(K_U) = \supp(K_{u_1} K_{u_2} \ldots K_{u_{|U|}}) = U \cup \triangle_N(U),
\end{equation}
as every vertex $v \in \triangle_N(U)$ will have a $Z$ operator acting on it from $K_U$ and every vertex $u \in U$ will have an $X$ operator. We can use this remarkable fact to compute traces of stabilizers, as the trace of any Pauli operator is zero and therefore the partial trace over a set $A$ of a Pauli string $K$ is zero unless $\supp(K) \subseteq A^C$. This lets us define
\begin{equation}
    S_U \coloneqq \set{K \in S_V : \supp(K) \subseteq U }.
\end{equation}
To determine if a stabilizer $K$ is in $S_U$ one needs to compute the partial trace $K \in S_U \iff \partrace{V \setminus U}{K} \neq 0$.

Now that we can compute restrictions of stabilizers to subsets of vertices we can easily compute restrictions of the graph state to subsets of vertices. This is given by the following Lemma.
\begin{lemma}
    Let $G=(V,E)$ be a graph and let $U\subset V$ then the partial trace of the graph state over the complement of $U$ is given by
    \begin{align}
        \rho_U &= \partrace{V \setminus U}{\rho_G} \\
        &= \frac{1}{2^{|V|}} \partrace{V \setminus U}{\sum_{K \in S_V} K} \\
        &= \frac{1}{2^{|V|}} \sum_{K \in S_V} \partrace{V \setminus U}{ K} \\
        &= \frac{1}{2^{|V|}} \sum_{K \in S_U} K
    \end{align}
    
\end{lemma}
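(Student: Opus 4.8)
The plan is to observe that the first two displayed equalities are nothing more than linearity of the partial trace, so all of the content lies in the final equality
$$
\sum_{K \in S_V} \partrace{V \setminus U}{K} = \sum_{K \in S_U} K,
$$
which I would establish by evaluating $\partrace{V \setminus U}{K}$ separately for each stabilizer $K \in S_V$ and showing that the only surviving terms are precisely those with $\supp(K) \subseteq U$, i.e.\ those in $S_U$.

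First I would use that every $K \in S_V$ is a Pauli string, $K = \bigotimes_{v \in V} P_v$ with $P_v \in \set{I, X, Y, Z}$. Since the partial trace acts tensor-factor-wise, it factorizes as
$$
\partrace{V \setminus U}{K} = \left( \bigotimes_{v \in U} P_v \right) \prod_{v \in V \setminus U} \Tr(P_v).
$$
Then I would invoke the Pauli trace identities $\Tr(I) = 2$ and $\Tr(X) = \Tr(Y) = \Tr(Z) = 0$. The scalar prefactor $\prod_{v \in V \setminus U} \Tr(P_v)$ is therefore nonzero if and only if $P_v = I$ for every $v \in V \setminus U$, which is exactly the statement that $K$ acts trivially off $U$, i.e.\ $\supp(K) \cap (V \setminus U) = \emptyset$, equivalently $\supp(K) \subseteq U$. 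By the definition $S_U = \set{K \in S_V : \supp(K) \subseteq U}$ this is precisely the condition $K \in S_U$, so every term with $K \notin S_U$ drops out of the sum, and the role of the support condition is simply to select which Pauli strings are invisible to tracing out $V \setminus U$.

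Finally I would reassemble the sum: for each surviving $K \in S_U$ the complementary factors each contribute $\Tr(I) = 2$, giving the scalar $2^{|V \setminus U|} = 2^{|V| - |U|}$ times the restriction $K|_U$ of $K$ to $\mathcal{H}_U$. There is no real obstacle here beyond a single piece of constant-tracking, which is the one point I would handle carefully: combining that scalar with the global prefactor $2^{-|V|}$ gives $\rho_U = 2^{-|U|} \sum_{K \in S_U} K|_U$, so the factor $2^{|V|-|U|}$ coming from the identity traces is exactly what renormalizes the stated $2^{-|V|}$ to $2^{-|U|}$ (equivalently, one should read the $K$ in the final expression as operators on $\mathcal{H}_U$ and carry this constant). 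A quick sanity check is that the only $K \in S_U$ with trivial restriction is the identity, so $\Tr(\rho_U) = 2^{-|U|} \cdot 2^{|U|} = 1$, as a reduced density matrix should satisfy; the whole statement is then an immediate consequence of linearity of the partial trace together with the orthogonality of non-identity Paulis under the trace.
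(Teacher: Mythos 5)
Your proposal is correct and is essentially the argument the paper intends: the paper offers no separate proof (the displayed chain of equalities in the lemma statement is the whole derivation), and the content of that chain is exactly what you prove — linearity of the partial trace plus the fact that non-identity Pauli factors are traceless, so only stabilizers with $\supp(K) \subseteq U$ survive. Your careful constant-tracking, however, exposes a normalization slip in the paper's final line: each surviving $K \in S_U$ acquires a factor $\Tr(\identity)^{|V \setminus U|} = 2^{|V|-|U|}$ from the traced-out identity factors, so the reduced state is
\begin{equation*}
\rho_U = \frac{1}{2^{|U|}} \sum_{K \in S_U} K|_U ,
\end{equation*}
not $\frac{1}{2^{|V|}} \sum_{K \in S_U} K$ as written (with the stated prefactor, $\Tr(\rho_U) = 2^{|U|-|V|} \neq 1$). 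Your corrected normalization is also the one that is actually needed downstream: the paper's Proposition on linear entropy, via equation (114) of Hein et al., relies on $(\rho^A_G)^2 = \frac{|\mathcal{S}_A|}{2^{|A|}} \rho^A_G$, which holds precisely when the reduced state carries the prefactor $2^{-|A|}$. So your proof both fills in the omitted details and quietly repairs the statement.
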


In the previous sections, we discussed the applications and utility of a modified topological invariant, the Integrated Euler Characteristic (IEC). While the IEC provides valuable information, Betti numbers offer a more detailed and nuanced understanding of the entanglement structure in quantum states. In this section, we aim to establish a relationship between the lower-dimensional Betti numbers—generated through Topological Data Analysis (TDA)—and graph states. We focus on finding any particular graph states with a unique topological footprint. These distinctive topological footprints may serve as verifiers for particular classes of entanglement or structural properties within quantum systems. We specifically choose graph states for our analysis because their entanglement patterns are inherently encoded in their underlying graph structures


For this article, we will focus only on the lower-dimensional holes, specifically 0-dimensional and 1-dimensional holes as analyzing higher-dimensional holes is computationally expensive and analytically challenging. To recall, what the 0/1-dimensional represents readers may refer to section \ref{sec:prelims}. The number of 0-dimensional holes decreases if an edge is formed between 2 disconnected components. The 1-dimensional holes in a topological structure are created by sticking in a bunch of 1-dimensional simplices which corresponds to an edge between pairs of vertices. Both of these rely on the construction of edges within the topological structure. Based on how we construct our topological structure, an edge is created when the filtration parameter exceeds the correlation measure between the two vertices. Thus, to determine the birth times and the number of 0-dimensional and 1-dimensional holes, we only require the correlation measure between vertex pairs in the input graph state.

We put forth the following proposition, which in essence tells us that the entropy of a subset of qubits is inversely proportional to the size of the support set of that subset. This proposition works for all subsets of vertices of the graph states/qubits of the quantum state. Later, we will use this proposition extensively when formalism specific properties of a graph is necessary and sufficient to produce a entropy value/correlation value.

\begin{prop} \label{prop:calculating_linear_entropy_using_stablizer_support}
    The linear entropy  $S_2(A)_G $ of the reduced density operator $\rho^A_G$ associated with graph $ G = (V, E) $ and vertex set $ A \subseteq V$ can be calculated as:
    \begin{equation} \label{eq:calculating_linear_entropy_using_stablizer_support}
     S_2(A)_G = 1 - \frac{|\mathcal{S}_A|}{2^{|A|}}
    \end{equation}
    where $\mathcal{S}_A$ denotes the set of stabilizer operators for the graph 
$G$ whose support lies on vertices in set $A$.
    
\end{prop}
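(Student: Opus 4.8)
The plan is to express the linear entropy directly as a purity and then evaluate that purity using the stabilizer decomposition of $\rho_A$ furnished by the preceding Lemma. Recall that for $q = 2$ the Tsallis entropy coincides with the purity, so that $S_2(A)_G = 1 - \Tr(\rho_A^2)$. The whole task therefore reduces to computing $\Tr(\rho_A^2)$ in terms of the stabilizer set $\mathcal{S}_A = \set{K \in S_V : \supp(K) \subseteq A}$.

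First I would substitute the reduced state $\rho_A = \frac{1}{2^{|A|}} \sum_{K \in \mathcal{S}_A} K$, where $K$ now denotes the restriction to the $|A|$-qubit space and the prefactor $2^{-|A|}$ is fixed by demanding $\Tr(\rho_A) = 1$ (only the identity stabilizer has nonzero trace). Squaring produces a double sum $\rho_A^2 = 2^{-2|A|} \sum_{K, K' \in \mathcal{S}_A} K K'$, and since the trace is linear we obtain $\Tr(\rho_A^2) = 2^{-2|A|} \sum_{K, K' \in \mathcal{S}_A} \Tr(K K')$.

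The key structural fact I would invoke is that $\mathcal{S}_A$ is an abelian subgroup of $S_V$: for $K, K' \in \mathcal{S}_A$ the product $K K'$ again lies in $S_V$, and since $\supp(K K') \subseteq \supp(K) \cup \supp(K') \subseteq A$ it belongs to $\mathcal{S}_A$. Because each stabilizer is a Pauli string squaring to the identity, and the correspondence between subsets of generators and stabilizers is one-to-one, we have $K K' = \identity$ if and only if $K = K'$. As the trace of a nonidentity Pauli string over the $|A|$-qubit space vanishes while $\Tr(\identity) = 2^{|A|}$, only the diagonal terms $K = K'$ contribute.

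Counting the survivors then closes the argument: there are exactly $|\mathcal{S}_A|$ diagonal terms, each contributing $2^{|A|}$, so $\Tr(\rho_A^2) = 2^{-2|A|} \cdot |\mathcal{S}_A| \cdot 2^{|A|} = |\mathcal{S}_A| / 2^{|A|}$, whence $S_2(A)_G = 1 - |\mathcal{S}_A|/2^{|A|}$. The main point requiring care—the only genuine obstacle—is the bookkeeping of normalization together with the closure of $\mathcal{S}_A$ under multiplication, since both the injectivity of the restriction map onto the $A$-qubit operators and the cancellation of the off-diagonal traces hinge on $\mathcal{S}_A$ being a subgroup whose distinct elements remain distinct Pauli strings after restriction.
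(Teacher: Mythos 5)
Your proof is correct, but it takes a different route from the paper. The paper's proof is essentially a two-line citation: it invokes equation (114) of Hein et al.\ \cite{hein2006entanglement}, which asserts the identity $(\rho^A_G)^2 = \frac{|\mathcal{S}_A|}{2^{|A|}} \rho^A_G$, and then takes the trace of both sides. You instead give a self-contained derivation: you expand $\rho_A = \frac{1}{2^{|A|}}\sum_{K \in \mathcal{S}_A} K$, square, and kill the off-diagonal terms of the double sum using the facts that $\mathcal{S}_A$ is closed under multiplication, that every graph-state stabilizer is an involution (so $KK' = \identity$ iff $K = K'$), and that non-identity Pauli strings are traceless. This is precisely the argument that underlies Hein et al.'s identity, so your proof in effect reproves the cited result rather than using it; what it buys is transparency about the group-theoretic ingredients the paper leaves implicit, including the injectivity of the restriction of stabilizers to the qubits in $A$. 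It also gets the normalization of the reduced state right: the correct prefactor after tracing out $V \setminus A$ is $2^{-|A|}$ (each traced-out qubit contributes a factor of $2$ to the surviving terms), which you fix by demanding $\Tr(\rho_A) = 1$, whereas the paper's preceding lemma, as written, retains the prefactor $2^{-|V|}$ and silently drops this factor. The paper's approach is shorter and defers the computation to the literature; yours is longer but verifiable from first principles within the paper itself.
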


\begin{proof}
     Let $A \subseteq V$ be a subset of vertices of graph $G$. Let $B$ be the corresponding complement of set A, defined as $B := V\setminus A$. The reduced density operator for set A is denoted by $\rho^A_G := \Tr_B(\ket{G}\bra{G})$. 
     
    Using equation (114) in \cite{hein2006entanglement} we find that,
    \begin{equation}
    \label{eq:partialtrace_stabilizer}
        (\rho^A_G)^2 = \frac{|\mathcal{S}_A|}{2^{|A|}} \rho^A_G
    \end{equation}
    
    Using equation \eqref{eq:partialtrace_stabilizer}, linear entropy $S_2(A)_{G}$ can be calculated as
    \begin{equation}
        S_2(A)_{G} = 1 - \Tr{((\rho^A_G)^2)} = 1 -  \Tr\left( \frac{|\mathcal{S}_A|}{2^{|A|}} \rho^A_G \right) =1 - \frac{|\mathcal{S}_A|}{2^{|A|}}
    \end{equation}
\end{proof}

\begin{prop}
\label{prop:support_stabilizer_subset_of_group_of_stablizer_operator}

For any graph $ G = (V, E) $, let $ A \subseteq V $ be a subset of vertices of size $n$, ie $A = \{1, \dots n\}$. The set of stabilizer operators that are supported on the vertex set $ A $, denoted $ \mathcal{S}_A $, is a subset of the  group of stabilizer operators in set $A$, 
\begin{equation}
\label{eq:support_stabilizer_subset_of_group_of_stablizer_operator}
\mathcal{S}_A \subseteq \langle K_1, K_2, \dots K_n \rangle
\end{equation}
\end{prop}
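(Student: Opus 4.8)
The plan is to exploit the one-to-one correspondence between elements of the full stabilizer group $S_V$ and subsets of vertices, together with the explicit support formula from Eq.~\eqref{eq:stabilizer_support}. Recall that every $K \in S_V$ can be written uniquely as $K_U = \prod_{u \in U} K_u$ for some $U \subseteq V$, and that its support is $\supp(K_U) = U \cup \triangle_N(U)$. The crucial observation is that this support always contains the generating set $U$ itself, so membership of $K_U$ in $\mathcal{S}_A$ forces $U \subseteq A$, which is exactly the condition needed to express $K_U$ using only the generators $K_1, \dots, K_n$ indexed by $A$.

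First I would take an arbitrary $K \in \mathcal{S}_A$ and write it as $K = K_U$ for its unique generating set $U \subseteq V$. By the definition of $\mathcal{S}_A$ we have $\supp(K_U) \subseteq A$. Next I would invoke the support formula to conclude $U \subseteq U \cup \triangle_N(U) = \supp(K_U) \subseteq A$. Since $A = \{1, \dots, n\}$, every index appearing in the product $K_U = \prod_{u \in U} K_u$ lies in $\{1, \dots, n\}$, so $K_U$ is a product of the generators $K_1, \dots, K_n$ and therefore belongs to $\langle K_1, \dots, K_n \rangle$. As $K$ was arbitrary, this yields $\mathcal{S}_A \subseteq \langle K_1, \dots, K_n \rangle$.

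The only point requiring care---though it is already secured by Eq.~\eqref{eq:stabilizer_support}---is the containment $U \subseteq \supp(K_U)$, i.e.\ that the Pauli $X$ factor sitting at each vertex $u \in U$ can never be cancelled when forming the product. This holds because each generator $K_v$ carries its single $X$ factor only at the vertex $v$ and $Z$ factors elsewhere; hence position $u \in U$ receives exactly one $X$ (from $K_u$) possibly multiplied by $Z$'s from neighboring generators, yielding an $X$ or a $Y$ but never the identity. I expect this to be the main, if minor, obstacle, and since it is precisely what the derivation of the support formula already establishes, the whole argument reduces to a direct application of that formula.
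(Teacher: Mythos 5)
Your proof is correct and rests on exactly the same idea as the paper's (informal) argument: the Pauli $X$ factor contributed by each generator $K_u$ in the product $K_U$ cannot be cancelled by $Z$ factors from other generators, so $U \subseteq \supp(K_U) \subseteq A$ forces every generator in the decomposition to be indexed by a vertex of $A$. You simply state the paper's contrapositive phrasing directly and anchor it in the support formula $\supp(K_U) = U \cup \triangle_N(U)$, which makes the argument slightly more formal but not genuinely different.
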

\begin{proof}
   This proposition generalizes the result presented in Proposition 14 of \cite{hein2006entanglement}. We will just provide an informal intuition of the proof here. Any stabilizer operator for a vertex $v \in V \setminus A$ applies an $X$ operation on vertex $v$, which is not cancelled by operations from other stabilizer operators. Therefore, any stabilizer containing $K_v$ acts non-trivially on the vertex set outside $A$, meaning the stabilizer cannot be supported by the set $A$.
\end{proof}

The proposition \ref{prop:calculating_linear_entropy_using_stablizer_support} provides a means of inferring the graph structure from the marginal entropies of the input quantum state. This proposition also justifies our choice of using linear entropy for our analysis, as it offers a straightforward method for calculating the entropy of the various subsystems of the graph state. For our purposes, selecting one entropic measure over another does not appear to yield any additional insights during our analysis. Therefore, the choice of entropy measure may depend on the convenience of calculation or interpretation rather than its ability to provide new findings.

An interesting consequence of the proposition is that for any connected graph state the linear entropy of any vertex is $0.5$. This property is also mentioned in \cite{hein2006entanglement} which states that any pure connected graph state has a maximally mixed single qubit subsystem. Another way to interpret this is that any connected pure graph states are $1-$uniform quantum systems \ref{def:k-uniform}. We can also show this property using Proposition \ref{prop:calculating_linear_entropy_using_stablizer_support}. The support of any non-isolated vertex $v$, denoted $\mathcal{S}_v$, can only contain either the identity or the stabilizer operator $K_v$. Since vertex $v$ is non-isolated, the stabilizer $K_v$ applies a $Z$-operator to the neighboring vertices of $v$, which is non-empty. Therefore, the support of $K_v$ lies outside $v$, and $K_v \notin \mathcal{S}_v$. The identity element trivially belongs to $\mathcal{S}_v$. Substituting the size of $\mathcal{S}_v$ as 1 into equation \ref{eq:calculating_linear_entropy_using_stablizer_support}, we obtain the linear entropy of $v$ as $0.5$. This property of connected graph states is important as we will rely on these numbers when calculating the value of correlation functions.

\begin{prop}
\label{prop:conditions_for_belonging_to_support_stabilizer}

Assume a graph $G = (V, E)$ with no isolated vertices, let $(u, v)$ be a pair of vertices in $V$. We propose

\begin{enumerate}
    \item $ K_u \in \mathcal{S}_{u,v} $ if and only if $ (N(u) \setminus \{v\}) = \emptyset $.
    \item $ K_v \in \mathcal{S}_{u,v} $ if and only if $ (N(v) \setminus \{u\}) = \emptyset $.
    \item $ K_u K_v \in \mathcal{S}_{u,v} $ if and only if $ (N(u) \setminus \{v\}) = (N(v) \setminus \{u\}) $.
\end{enumerate}
 where $K_u$ and $K_v$ are the stabilizer operator corresponding to vertex $u$ and $v$ respectively, and $\mathcal{S}_{u,v}$ is the set of stabilizer operators whose support lies in the vertex set $\{u, v\}$. 

\end{prop}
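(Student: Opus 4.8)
The plan is to reduce each membership question to a support computation, using the two structural facts established just above: the support formula $\supp(K_U) = U \cup \triangle_N(U)$ from Eq.~\eqref{eq:stabilizer_support}, together with Proposition~\ref{prop:support_stabilizer_subset_of_group_of_stablizer_operator}, which guarantees that any stabilizer supported on $\{u,v\}$ is already a product of the generators $K_u$ and $K_v$. Since $K_u$ and $K_v$ commute and each squares to the identity, every element of $\langle K_u, K_v\rangle$ is one of $\identity, K_u, K_v, K_uK_v$, so these are the only candidates for membership in $\mathcal{S}_{u,v}$. Membership of a candidate $K_U$ is then equivalent, directly from the definition of $\mathcal{S}_{u,v}$, to the containment $\supp(K_U) \subseteq \{u,v\}$, and the whole proposition becomes a matter of unwinding what that containment says in each case.

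For item~1 I would compute $\supp(K_u) = \{u\} \cup N(u)$ from $\triangle_N(\{u\}) = N(u)$. The containment $\{u\}\cup N(u) \subseteq \{u,v\}$ is then equivalent to $N(u) \subseteq \{u,v\}$, and since $G$ is simple so that $u \notin N(u)$, this is in turn equivalent to $N(u)\setminus\{v\} = \emptyset$. Item~2 follows by exchanging the roles of $u$ and $v$.

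The substantive case is item~3. Here I would first identify $\triangle_N(\{u,v\})$ with the symmetric difference $N(u)\,\triangle\,N(v)$: a vertex $w$ lies in $\triangle_N(\{u,v\})$ exactly when $|\{u,v\}\cap N(w)|$ is odd, and by undirectedness $u \in N(w) \iff w \in N(u)$, so this holds precisely when $w$ is adjacent to exactly one of $u$ and $v$. Thus $\supp(K_uK_v) = \{u,v\} \cup (N(u)\,\triangle\,N(v))$, and membership is equivalent to $N(u)\,\triangle\,N(v) \subseteq \{u,v\}$. The final step translates this into the stated equality: restricting to vertices $w \notin \{u,v\}$, the containment says $w \in N(u) \iff w \in N(v)$ for every such $w$, and since $N(u)\setminus\{v\}$ and $N(v)\setminus\{u\}$ are both subsets of $V\setminus\{u,v\}$, this is exactly the assertion $N(u)\setminus\{v\} = N(v)\setminus\{u\}$.

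I expect the only delicate point to be this last translation in item~3: one must check that $u$ and $v$ are automatically excluded from both $N(u)\setminus\{v\}$ and $N(v)\setminus\{u\}$ (again using the absence of self-loops), so that the symmetric-difference condition, which only constrains vertices in $V\setminus\{u,v\}$, matches the set equality verbatim rather than merely up to the two distinguished vertices. Everything else is a direct application of the support formula and Proposition~\ref{prop:support_stabilizer_subset_of_group_of_stablizer_operator}.
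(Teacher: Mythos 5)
Your proposal is correct and takes essentially the same route as the paper's proof: both reduce membership in $\mathcal{S}_{u,v}$ to the requirement that the operator act trivially (i.e., have support) outside $\{u,v\}$, and both settle item 3 via the characterization $N(u)\,\triangle\,N(v) \subseteq \{u,v\}$ followed by the same set-algebra translation. Your version merely packages the paper's two-directional, more verbal argument as a single chain of equivalences through the support formula $\supp(K_U) = U \cup \triangle_N(U)$.
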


\begin{proof}
The backward direction of each of these conditions can be proven trivially. We just have to make an observation that whenever the right-hand side condition holds true then the stabilizer operator acts trivially on set $V \setminus \{a, b\}$. It is could be seen for the cases $ (N(u) \setminus \{v\}) = \emptyset $, $ (N(v) \setminus \{u\}) = \emptyset $, and $ (N(u) \setminus \{v\}) = (N(v) \setminus \{u\}) $. In each case, the $ Z $-operation either cancels out to form the identity or is never applied to vertices outside $ \{a, b\} $.

Lets proof our claim that if $K_u \in \mathcal{S}_{u,v}$, then $ (N(u) \setminus \{v\}) = \emptyset $. This claim follows as for stabilizer operator $K_u$ can belong to $ \mathcal{S}_{u,v}$ only if $K_u$ acts trivially on set outside $\{a, b\}$. This condition is satisfied by the condition  $ (N(u) \setminus \{v\}) = \emptyset $ which says the neighbor set of vertex $u$ outside $\{u, v\}$ is empty. Similar argument holds true for $K_v \in \mathcal{S}_{u,v}$.

Finally, we need to prove that if $ K_u K_v \in \mathcal{S}_{u,v} $, then $ (N(u) \setminus \{v\}) = (N(v) \setminus \{u\}) $. This holds because the expressions $ (N(u) \setminus \{v\}) $ and $ (N(v) \setminus \{u\}) $ describe the sets of vertices outside $ \{u, v\} $ where the stabilizer operators $ K_u $ and $ K_v $ act non-trivially, respectively. If these sets match, the $ Z $-operations applied by each stabilizer operator cancel out. Consequently, $ K_u K_v $ acts trivially on vertices outside $ \{u, v\} $. If the sets do not match, then a $ Z $-operator must act on at least one vertex outside $ \{u, v\} $, implying that $ K_u K_v $ would not belong to $ \mathcal{S}_{u,v} $.

This can be proven more algebraically using the fact that $ K_u K_v \in \mathcal{S}_{u,v} $ if and only if $ N(u) \triangle N(v) \subseteq \{u, v\} $ or equivalently $(N(u) \setminus N(v)) \cup (N(v) \setminus N(u)) \subseteq \{u, v\}$. Another way to write this condition is that $(N(u) \setminus N(v)) \subseteq \{v\}$ and $(N(v) \setminus N(u)) \subseteq \{u\}$. The following equality holds for neighborhood set of vertex $v$, $N(u) = \{v\} \cup (N(u) \cap N(v))$ or $N(u) = N(u) \cap N(v)$. Similarly, $N(v) = \{u\} \cup (N(v) \cap N(u))$ or $N(u) = N(v) \cap N(u)$. Note that $u$ or $v$ cannot belong to $N(v) \cap N(u)$. Upon doing a set minus operation we get, $N(u) \setminus \{v\}= N(u) \cap N(v)$ and $N(v) \setminus \{u\}= N(u) \cap N(v)$. Thus we conclude that $ K_u K_v \in S_{u,v} $ if and only if $ (N(u) \setminus \{v\}) = (N(v) \setminus \{u\}) $.

\end{proof}
We build our topological structure using correlation measures of all possible subsets of vertices. The topological summaries would also depend on these measures. To relate the topological summaries to the entanglement structure of the graph, we would need to find the graph structural conditions for different values of the correlation function. These values of correlation measures depend on the size of the stabilizer state, so we would derive these conditions using the previous proposition. There are multiple ways a particular size of the support set could be achieved. Consequently, we often get an umbrella of conditions that could produce a particular correlation function value. For example, since there 3 ways to choose sets of order 2 from sets of order 3, we get 3 possible conditions that could lead to support of stabilizer being of size $2$.

\begin{prop} \label{prop:correlation_function_relation_with_graph_structure}
Given a graph $ G = (V, E) $ with no isolated vertices, we propose that the 2-deformed correlation function for any vertex pair $ (u, v) $, denoted $ C_2(u, v) $, satisfies the following conditions:
\begin{enumerate}
    \item $C_2(u, v) = 1 \iff (N(u) \setminus \{v\} = N(v) \setminus \{u\} = \emptyset)$
    \item
    $C_2(u, v) = 0.5 \iff 
\begin{aligned}
    & (N(u) \setminus \{v\} = N(v) \setminus \{u\} \neq \emptyset) \text{ or } \\
    & (N(u) \setminus \{v\} = \emptyset \text{ and } N(v) \setminus \{u\} \neq \emptyset) \text{ or } \\
    & (N(v) \setminus \{u\} = \emptyset \text{ and } N(u) \setminus \{v\} \neq \emptyset)
\end{aligned}$

    \item $ C_2(u, v) = 0.25 \iff (N(u) \setminus \{v\}) \neq (N(v) \setminus \{u\}) \text{ and } (N(u) \setminus \{v\}), (N(v) \setminus \{u\}) \neq \emptyset $

\end{enumerate}

where $ N(u) $ and $ N(v) $ represent the neighborhood sets of vertices $ u $ and $ v $, respectively.
\end{prop}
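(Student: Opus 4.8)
The plan is to reduce the entire statement to a single count---the number $|\mathcal{S}_{u,v}|$ of stabilizer operators whose support lies inside the two-vertex set $\{u,v\}$---and then read off the three cases from Proposition~\ref{prop:conditions_for_belonging_to_support_stabilizer}. First I would unfold the definition of the $2$-deformed correlation from Eq.~\eqref{eq:multipartite_2-deformed_total_correlation}, writing $C_2(u,v) = S_2(u)_G + S_2(v)_G - S_2(\{u,v\})_G$. Using Proposition~\ref{prop:calculating_linear_entropy_using_stablizer_support} together with the observation established just before Proposition~\ref{prop:conditions_for_belonging_to_support_stabilizer}, namely that every non-isolated vertex satisfies $S_2(u)_G = 0.5$, the two single-vertex terms contribute exactly $0.5 + 0.5 = 1$, while the pair term is $S_2(\{u,v\})_G = 1 - |\mathcal{S}_{u,v}|/4$. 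This collapses the whole quantity to the clean identity $C_2(u,v) = |\mathcal{S}_{u,v}|/4$, so the proposition becomes a purely combinatorial question about the size of $\mathcal{S}_{u,v}$.

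Next I would pin down the admissible values of $|\mathcal{S}_{u,v}|$. By Proposition~\ref{prop:support_stabilizer_subset_of_group_of_stablizer_operator} we have $\mathcal{S}_{u,v} \subseteq \langle K_u, K_v\rangle = \{\identity, K_u, K_v, K_u K_v\}$, where the two distinct commuting involutions $K_u, K_v$ generate a Klein four-group. I would then note that $\mathcal{S}_{u,v}$ is itself a subgroup: the identity is supported on $\emptyset \subseteq \{u,v\}$, and if two stabilizers have support inside $\{u,v\}$ then so does their product, since $\supp(KK') \subseteq \supp(K) \cup \supp(K')$. By Lagrange's theorem the order therefore lies in $\{1,2,4\}$, which is the crucial fact that rules out the intermediate value $C_2(u,v) = 0.75$ and leaves exactly the three admissible values $0.25, 0.5, 1$ matching the proposition.

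Then I would translate each possible order into the neighbourhood conditions via the membership criteria of Proposition~\ref{prop:conditions_for_belonging_to_support_stabilizer}: $K_u \in \mathcal{S}_{u,v} \iff N(u)\setminus\{v\} = \emptyset$, $K_v \in \mathcal{S}_{u,v} \iff N(v)\setminus\{u\} = \emptyset$, and $K_u K_v \in \mathcal{S}_{u,v} \iff N(u)\setminus\{v\} = N(v)\setminus\{u\}$. Order $4$ forces all three memberships, equivalent to both differences being empty, with the third criterion then automatic; this gives case (1). Order $1$ forces all three non-memberships, i.e.\ both differences nonempty and unequal, giving case (3). For order $2$ I would enumerate the three order-two subgroups $\{\identity,K_u\}$, $\{\identity,K_v\}$, $\{\identity,K_uK_v\}$ and check that they correspond precisely to the three disjuncts of case (2).

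The only genuinely delicate point, and the step I would treat most carefully, is the bookkeeping in the order-$2$ case: I must verify that each disjunct of case (2) forces \emph{exactly one} non-identity element into $\mathcal{S}_{u,v}$ while keeping the other two out. For instance, in the disjunct $N(u)\setminus\{v\} = \emptyset$ and $N(v)\setminus\{u\} \neq \emptyset$ one must confirm both $K_v \notin \mathcal{S}_{u,v}$ and $K_uK_v \notin \mathcal{S}_{u,v}$, the latter because the two differences are then unequal; the subgroup $\{\identity,K_uK_v\}$ similarly requires the two differences to be equal yet nonempty so that neither $K_u$ nor $K_v$ is admitted. Once this consistency is confirmed, the three disjuncts exhaust the order-$2$ possibilities and the proposition follows as a direct consequence of the three preceding propositions, with no further computation required.
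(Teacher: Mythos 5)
Your proposal is correct and follows essentially the same route as the paper: both reduce the statement to the identity $C_2(u,v) = |\mathcal{S}_{u,v}|/4$ via Proposition~\ref{prop:calculating_linear_entropy_using_stablizer_support} and the fact that $S_2(u) = S_2(v) = 0.5$ for non-isolated vertices, and then translate the possible sizes of $\mathcal{S}_{u,v}$ into neighbourhood conditions using the membership criteria of Proposition~\ref{prop:conditions_for_belonging_to_support_stabilizer}, with the same case enumeration for $|\mathcal{S}_{u,v}| \in \{1,2,4\}$. Your explicit subgroup/Lagrange argument ruling out $|\mathcal{S}_{u,v}| = 3$ is a small tightening the paper leaves implicit (there it follows from the mutual consistency of the three membership criteria), but it does not constitute a different approach.
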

\begin{proof}
    
Assume a pair of vertices $u, v \in V$. As discussed earlier, we know that the linear entropy of vertices $u$ and $v$ is $0.5$. Plugging in these values in the equation for a 2-deformed correlation function, 

\begin{align}
    C_2(u, v) &= S_2(u) + S_2(v) - S_2(u, v) \\
    &=0.5 * 2 - \left(1 - \frac{|\mathcal{S}_{u,v}|}{2^{|\{u,v\}|}}\right) \\
\label{eq:correlation_function_relation_to_number_of_stabilizer_operator}
    &= \frac{|\mathcal{S}_{u,v}|}{4} 
\end{align}
\begin{itemize}
    \item \textbf{Condition 1}: The above equation that gives us the condition $ C_2(u, v) = 1 \iff |\mathcal{S}_{u,v}| = 4 $. This condition is satisfied if and only if $ \mathcal{S}_{u,v} = \{\identity, K_u, K_v, K_uK_v\} $ as $\mathcal{S}_{u,v} \subseteq \langle K_u, K_v \rangle$. This is only possible if all the conditions presented in proposition \ref{prop:conditions_for_belonging_to_support_stabilizer} are satisfied. Consequently, we have that $ N(u) \setminus \{v\} = \emptyset $ and $ N(v) \setminus \{u\} = \emptyset $, leading us to the conclusion

    \begin{equation}
    C_2(u, v) = 1 \iff (N(u) \setminus \{v\} = N(v) \setminus \{u\} = \emptyset).
    \end{equation}
    
    \item \textbf{Condition 2}: From equation \ref{eq:correlation_function_relation_to_number_of_stabilizer_operator} we get the equality that  $ C_2(u, v) = 0.5 \iff |\mathcal{S}_{u,v}| = 2 $. Since the identity operator is trivially always part of $\mathcal{S}_{u,v}$. The condition $|\mathcal{S}_{u,v}| = 2 $ is true if and only if only one of $K_u$ or $K_v$ or $K_{u, v}$ is part of $\mathcal{S}_{u,v}$. This enforces the constraint that the conditions presented in proposition \ref{prop:conditions_for_belonging_to_support_stabilizer} are exclusive. In other words, if one of the conditions is true then the rest of the conditions becomes false. 
    Lets look at the case when $K_u \in \mathcal{S}_{u,v}$ while $K_v \not \in \mathcal{S}_{u, v} \text{ and } K_{u, v} \not \in \mathcal{S}_{u, v}$. Using \ref{prop:conditions_for_belonging_to_support_stabilizer}, we get the condition that $(N(u) \setminus \{v\}) = \emptyset $ and $(N(v) \setminus \{u\}) \neq \emptyset $. Similar results is obtained when considering $K_v \in \mathcal{S}_{u,v}$ while $K_u \not \in \mathcal{S}_{u, v} \text{ and } K_{u, v} \not \in \mathcal{S}_{u, v}$. 
    Assume $K_uK_v \in \mathcal{S}_{u,v}$ and $K_u \not \in S_{u, v} \text{ and } K_v \not \in \mathcal{S}_{u, v}$. Using \ref{prop:conditions_for_belonging_to_support_stabilizer}, we get the conditions that $ (N(u) \setminus \{v\}) = (N(v) \setminus \{u\}) $ and $(N(u) \setminus \{v\}) \neq \emptyset$ and $(N(v) \setminus \{u\}) \neq \emptyset$. Summarizing the results we get that 
    
\begin{equation}
    C_2(u, v) = 0.5 \iff 
\begin{aligned}
    & (N(u) \setminus \{v\} = N(v) \setminus \{u\} \neq \emptyset) \text{ or } \\
    & (N(u) \setminus \{v\} = \emptyset \text{ and } N(v) \setminus \{u\} \neq \emptyset) \text{ or } \\
    & (N(v) \setminus \{u\} = \emptyset \text{ and } N(u) \setminus \{v\} \neq \emptyset)
\end{aligned}
\end{equation}

    \item \textbf{Condition 3}: From equation \ref{eq:correlation_function_relation_to_number_of_stabilizer_operator}, we can form condition that $C_2(u, v) = 0.25 \iff |\mathcal{S}_{u,v}| = 1$. Using the result presented in \cite{guhne2009entanglement}, we know that all stabilizer operators of the vertex set $\{u, v\}$ (except for the identity operator) have support on the vertex set outside $\{u, v\}$  if and only if $ (N(u) \setminus \{v\}) \neq (N(v) \setminus \{u\}) $ and both $ (N(u) \setminus \{v\}) $ and $ (N(v) \setminus \{u\}) $ are non-empty. Thus we arrive at the conclusion that 
    \begin{equation}
        C_2(u, v) = 0.25 \iff (N(u) \setminus \{v\}) \neq (N(v) \setminus \{u\}) \text{ and } (N(u) \setminus \{v\}), (N(v) \setminus \{u\}) \neq \emptyset
    \end{equation} 
\end{itemize}
\end{proof}

 These are all the propositions we would need to prove our main results. We propose that, by analyzing the entropic measures of every two-subset of qubits, we can detect graph states that are locally unitary (LU) equivalent to the GHZ state. This is a unique property of the GHZ state, as the entanglement across any bipartition remains constant and equal to 1 (when using von Neumann entropy). Due to the uniform entanglement structure, the conditions imposed by Proposition \ref{prop:correlation_function_relation_with_graph_structure} force the graph to adopt either a star or complete graph structure. Before presenting the theorem we will prove a lemma that would be useful while proving the main theorem.
 
 \begin{lemma}
\label{lemma:complete_graph_iff_neighbor_set_condition}
   For a graph G with no isolated vertices, Graph $G = (V, E)$ is a complete graph if and only if $N(u) \setminus \{v\} = N(v)\setminus \{u\} \neq \emptyset$ for every $u,v \in V$.
\end{lemma}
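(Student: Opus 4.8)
The plan is to prove the two directions separately, with the forward implication reducing to a direct computation and the reverse implication proceeding by contradiction using a ``two-pair'' argument. Throughout I will phrase adjacency in terms of neighborhood membership ($v \in N(u) \iff u \in N(v)$) to stay consistent with the notation already fixed in the excerpt.

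For the forward direction, I would assume $G$ is complete, so that $N(w) = V \setminus \{w\}$ for every vertex $w$. Then for any pair $u, v$ I compute $N(u) \setminus \{v\} = V \setminus \{u, v\} = N(v) \setminus \{u\}$, which gives the required equality immediately, and non-emptiness follows from the standing assumption $|V| \geq 4$ (indeed $|V| \geq 3$ would already suffice) since $V \setminus \{u, v\}$ then contains at least one vertex. This direction is routine. For the reverse direction I assume the condition $N(u) \setminus \{v\} = N(v) \setminus \{u\} \neq \emptyset$ holds for every pair and suppose, toward a contradiction, that $G$ is not complete, so there exist distinct $u, v$ with $v \notin N(u)$ (equivalently $u \notin N(v)$). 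Applying the hypothesis to the pair $(u, v)$, the removals are vacuous and the condition collapses to $N(u) = N(v) \neq \emptyset$; in particular $u$ and $v$ share a common neighbor, so I pick any $w \in N(u) = N(v)$.

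The decisive step is then to feed $w$ back into the hypothesis on the \emph{different} pair $(u, w)$, which is legitimate since $w \in N(u)$ forces $w \neq u$. This yields $N(u) \setminus \{w\} = N(w) \setminus \{u\}$. Now $w \in N(v)$ gives $v \in N(w)$, and since $v \neq u$ we have $v \in N(w) \setminus \{u\}$; on the other hand $v \notin N(u)$ forces $v \notin N(u) \setminus \{w\}$, contradicting the equality. Hence no non-adjacent pair can exist and $G$ is complete. The main obstacle is exactly this step: a single application of the hypothesis to $(u,v)$ only establishes that a non-adjacent pair must have identical neighborhoods, which is not in itself contradictory, so the argument requires the extra idea of re-applying the condition along a common neighbor to exploit the asymmetry that $v$ lies in $N(w)$ but not in $N(u)$. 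I would also remark that the ``no isolated vertices'' assumption is automatically supplied by the non-emptiness clause of the condition and therefore plays no essential role beyond the standing graph hypotheses.
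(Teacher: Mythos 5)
Your proof is correct and follows essentially the same route as the paper's: both directions proceed identically, and in particular your ``decisive step'' of taking a common neighbor $w$ of the non-adjacent pair $(u,v)$ and re-applying the hypothesis to $(u,w)$ to exploit $v \in N(w)\setminus\{u\}$ versus $v \notin N(u)\setminus\{w\}$ is exactly the contradiction the paper uses. Your write-up is in fact slightly more explicit (e.g., noting that non-adjacency makes the removals vacuous, so $N(u)=N(v)$), but there is no substantive difference in approach.
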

\begin{proof} 
Consider graph $G = (V, E)$ to be a complete graph, we will show that $N(u) \setminus \{v\} = N(v)\setminus \{u\}$ for every $u,v \in V$. Let $u, v$ be vertices in $G$, $u, v \in V$. By definition of a complete graph, every vertex pair in graph $G$ is connected by a unique edge. The neighborhood of $u$ and $v$ is $N(u) = V \setminus \{u\}$ and $N(v) = V \setminus \{v\}$ respectively. Computing $N(u) \setminus \{v\}$ and $N(v) \setminus \{u\}$ yields, 
\begin{equation}
N_u \setminus \{v\} = N_v \setminus \{u\} = V \setminus \{u, v\}    
\end{equation}
Hence, if the graph G is a complete graph, then $N(u) \setminus \{v\} = N(v)\setminus \{u \} \neq \emptyset$ for every $u,v \in V$.

Consider a graph $ G = (V, E) $ with no isolated vertex. Assume that $ N(u) \setminus \{v\} = N(v) \setminus \{u\} \neq \emptyset $ for any distinct vertex pair $ u, v \in V $. To reach a contradiction, assume that $G$ is not complete. Let $u, v \in V$ such that they are not adjacent and by assumption, we know $N(u) \setminus \{v\} = N(v)\setminus \{u\} \neq \emptyset$. This implies that there exists at least one vertex $w$ adjacent to both $u$ and $v$. We know that $N(u) \setminus \{w\} = N(w)\setminus \{u\} \neq \emptyset$, by assumption of the proposition. Since vertex $v$ is adjacent to vertex $w$ we have that $v \in  N(w)\setminus \{u\}$. This violates the condition that $N(u) \setminus \{w\} = N(w)\setminus \{u\} \neq \emptyset$ because vertex $u, v$ are non-adjacent. We get a contradiction.
\end{proof}

\begin{theorem}
\label{thm:graph_strcuture_iff_2-entropy}
Let $\rho = \ket{\psi}\bra{\psi}$ be a pure graph state with isolated vertices corresponding to graph $G = (V, E)$ in the Hilbert space $\mathcal{H_A} = (\mathbb{C}^2)^{\otimes n}$. We propose that $\rho$ is locally unitary equivalent to an $n$-qubit GHZ state if and only if the 2-deformed correlation measure of every 2-qubit subsystem of $\rho$ is 0.5, i.e., $C_2(J) = 0.5$ for all $J \subseteq A$.
\end{theorem}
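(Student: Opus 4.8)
The plan is to prove both implications by pairing the LU-invariance of the correlation measure $C_2$ with the graph-structural dictionary of Proposition~\ref{prop:correlation_function_relation_with_graph_structure}. The one external fact I will invoke is that the $n$-qubit GHZ state, viewed as a graph state, is local-unitary (indeed local-Clifford) equivalent to both the star graph $K_{1,n-1}$ and the complete graph $K_n$, these two being related by local complementation at the center (see the graph-state literature, e.g.\ \cite{hein2006entanglement}). Throughout I use the standing assumptions of the section: $G$ is connected, has no isolated vertex, and $n \geq 4$.

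For the forward direction, I would first note that $C_2(u,v)$ is assembled only from the purities $S_2(u)$ and $S_2(u,v)$, each of which is unchanged under a tensor product of single-qubit unitaries, since $\Tr(\rho_J^2)$ is preserved when $\rho_J \mapsto U_J \rho_J U_J^\dagger$. Hence $C_2$ is an LU-invariant and it suffices to evaluate it on one representative of the GHZ class. I would compute directly that for $\ket{\mathrm{GHZ}}$ one has $\rho_u = \identity/2$ so $S_2(u) = 1/2$, and $\rho_{u,v} = \tfrac12(\ketbra{00}{00} + \ketbra{11}{11})$ so $\Tr(\rho_{u,v}^2) = 1/2$ and $S_2(u,v) = 1/2$, giving $C_2(u,v) = 1/2$ for every pair; invariance then transfers this to any $\rho$ that is LU-equivalent to GHZ.

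For the backward direction, assume $C_2(u,v) = 0.5$ for every pair. Proposition~\ref{prop:correlation_function_relation_with_graph_structure} forces each pair $(u,v)$ into one of three patterns: (a) $N(u)\setminus\{v\} = N(v)\setminus\{u\} \neq \emptyset$; (b) $N(u)\setminus\{v\} = \emptyset$ with $N(v)\setminus\{u\} \neq \emptyset$; or (c) the same with the roles of $u,v$ exchanged. Because $G$ has no isolated vertex, $N(w)\setminus\{v\} = \emptyset$ occurs exactly when $w$ is a leaf whose unique neighbor is $v$. I would then split on whether $G$ has a leaf. If $G$ is leafless, every vertex has degree $\geq 2$, so both punctured neighborhoods are nonempty for every pair; patterns (b),(c) cannot occur, (a) holds universally, and Lemma~\ref{lemma:complete_graph_iff_neighbor_set_condition} yields that $G$ is complete. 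If instead some leaf $\ell$ has $N(\ell) = \{c\}$, then for any $w \neq c,\ell$ the pair $(\ell,w)$ satisfies $N(\ell)\setminus\{w\} = \{c\}\neq\emptyset$, and since $\ell$ is not adjacent to $w$ the empty-on-$w$ pattern is impossible; the only surviving option is $N(w)\setminus\{\ell\} = N(w) = \{c\}$, so $w$ too is a pendant of $c$. Thus every vertex except $c$ is a leaf attached to $c$, i.e.\ $G = K_{1,n-1}$. In both cases $G$ is a star or a complete graph, and by the external equivalence above $\rho$ is LU-equivalent to GHZ.

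The main obstacle is the global rigidity step in the backward direction: upgrading the purely local pairwise neighborhood conditions to the global alternative ``star or complete,'' with no mixed intermediate graph surviving. Lemma~\ref{lemma:complete_graph_iff_neighbor_set_condition} cleanly settles the leafless case, so the crux is the leaf case, where I must exclude any vertex that is neither the center nor a pendant of it; the argument above does this, but it is the delicate point. Secondary items needing explicit justification are the LU-equivalence of the star and complete graph states to GHZ, and a check that the $n \geq 4$, connected, no-isolated-vertex hypotheses guarantee the relevant punctured neighborhoods are nonempty so that the $C_2 = 1$ (isolated-edge) regime of Proposition~\ref{prop:correlation_function_relation_with_graph_structure} never intervenes.
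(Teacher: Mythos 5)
Your proposal is correct and follows essentially the same route as the paper's own proof: the forward direction via LU-invariance of $C_2$ and a direct computation on the GHZ state, and the backward direction via the same case split (a degree-one vertex exists versus all degrees at least two), using Proposition~\ref{prop:correlation_function_relation_with_graph_structure} to force either the star configuration or, through Lemma~\ref{lemma:complete_graph_iff_neighbor_set_condition}, the complete graph, and then invoking the LU-equivalence of star and complete graph states to GHZ. Your treatment of the leaf case is in fact slightly more explicit than the paper's in ruling out the pattern $N(w)\setminus\{\ell\} = \emptyset$, but the underlying argument is identical.
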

\begin{proof}
    The forward direction of the proof is straightforward. Assume a quantum state $\rho$ such that it is $LU$-equivalent to the GHZ state. Notice that $\rho$ will share the marginal entropies as the GHZ state as $LU$ keeps the entropies invariant. 
The $ n $-qubit GHZ state is expressed as

\begin{equation}
    \ket{GHZ_n} = \frac{\ket{0}^{\otimes n} + \ket{1}^{\otimes n}}{\sqrt{2}}.
\end{equation}

When performing the partial trace over any subsystem of a GHZ state, the resulting reduced density matrix has a spectrum of $0.5$ with multiplicity $2$ and the remaining eigenvalues are 0. We can compute the linear entropy of any reduced density matrix $\rho_A$ as 
\begin{align}
\label{eq:2-deformed_correlation_value_1-uniform_state}
    S_2(A) &= 1 - \trace{\rho_A^2} \\
    &= 1 - \sum_{i} \lambda(i)^2 \\
    &= 1 - 2 \cdot \left(\frac{1}{2} \right)^2 = 1/2.
\end{align}
This holds for all proper subsets $ A \subset S $. The linear entropy of any 1 or 2-qubit subsystem is $ 0.5 $. Specifically, for any qubit pair $ u, v $, we have $ S_2(u) = S_2(v) = S_2(u, v) = 0.5 $. The 2-deformed correlation for any qubit subsystem is then computed as 
\begin{equation}
C_2(u, v) = S_2(u) + S_2(v) - S_2(u, v) = 2 \cdot 0.5 - 0.5 = 0.5.
\end{equation}

Let's prove the backward direction of the theorem. Let's assume that the 2-deformed correlation function for any pair of vertices $ (u, v) $ in $V$ is $ C_2(u, v) = 0.5 $. We will prove that the corresponding graph is a star or a complete graph.

There are 2 cases we consider here 
\begin{itemize}
    \item \textbf{Case 1 :} $\exists u \in V$ such that $deg(u) = 1$ \\
    Assume vertex $ u $ has degree 1, such that it is adjacent to $ w \in V $. For any vertex $v \in V\setminus\{u,w\}$ we assume that $C_2(u, v) = 0.5$. Using proposition \ref{prop:correlation_function_relation_with_graph_structure} and the fact that $N(u) \setminus \{v\} =\{w\} \neq \emptyset$ we can conclude that either $N(v) \setminus\{u\} = \emptyset$ or $N(v) \setminus\{u\} = \{w\}$. We see that condition that $N(v) \setminus\{u\} = \emptyset$ is not possible. The condition implies that either the neighbor set of vertex $v$ is empty or a leaf vertex adjacent to $u$. Since we assume our graph to be connected the neighbor set cannot be empty. On the other hand, vertex $v$ cannot be adjacent to $u$ as $u$ has degree 1 and is only adjacent to vertex $w$. We can arrive at the conclusion that for any vertex $v \in V\setminus\{u,w\}$, $N(v) = \{w\} $. This tells us that graph G is a star graph.
     \item \textbf{Case 2 :} $\forall u \in V$ such that $deg(u) > 1$ \\
In this case, we can apply Proposition \ref{prop:correlation_function_relation_with_graph_structure} to obtain the condition $ N(u) = N(v) \neq \emptyset $ for all $ u, v \in V $. By using Lemma \ref{lemma:complete_graph_iff_neighbor_set_condition}, we conclude that $ G $ must be a complete graph.
\end{itemize}
The graph state corresponding to the star or complete graph is LU equivalent to the GHZ state. Thus we conclude our proof.
\end{proof}

We can rewrite the above proposition with entropy measures instead of correlation measures. We choose to make this proposition with respect to correlation measures because it would be easier to relate to topological features which are constructed using these measures. Since we are assuming the graph G to be connected, the linear entropy of any single qubit subsystem is $0.5$. We get that the correlation measure for any pair of vertices $u, v$, $C(u, v) = 0.5 \iff S(u, v) = 0.5$. 

\begin{corollary}
    If a stabilizer state $\ket{\psi}$ has a 2-deformed correlation measure or linear entropy of every pair of qubits equal to $0.5$. Then the state is LU equivalent to the GHZ state.
\end{corollary}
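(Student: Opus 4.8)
The plan is to reduce the general stabilizer-state claim back to the graph-state case already settled in Theorem~\ref{thm:graph_strcuture_iff_2-entropy}. The bridge is the standard fact that every pure stabilizer state is equivalent, under local Clifford operations, to some graph state $\ket{G}$ associated to a graph $G = (V,E)$ on the same qubits \cite{hein2006entanglement}. Since single-qubit Clifford gates act on individual tensor factors, such a circuit is in particular a local unitary, so I may fix single-qubit unitaries $U_1, \ldots, U_n$ and a graph state with $\ket{\psi} = (U_1 \otimes \cdots \otimes U_n)\ket{G}$.

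First I would transfer the hypothesis from $\ket{\psi}$ to $\ket{G}$. As recorded in Section~\ref{subsec:entropic_measures}, the Tsallis $2$-entropy $S_2(J)$ is invariant under local unitaries for every subset $J$, and hence so is the $2$-deformed correlation $C_2(u,v) = S_2(u) + S_2(v) - S_2(u,v)$. Therefore the assumption $C_2(u,v) = 0.5$ for every pair (equivalently $S_2(u,v) = 0.5$, using the relation noted just before this corollary) holds for $\ket{\psi}$ if and only if it holds for $\ket{G}$, so $\ket{G}$ inherits exactly the pairwise condition of the main theorem.

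Next I would check that this condition forces $G$ to have no isolated vertices, so that Theorem~\ref{thm:graph_strcuture_iff_2-entropy} and the propositions it relies on genuinely apply. If a vertex $u$ were isolated, its qubit would factor out as a pure state, giving $S_2(u) = 0$ and hence $C_2(u,v) = 0 \neq 0.5$ for every $v$. More generally, if $(u,v)$ were drawn from two distinct connected components of $G$, then $\rho_{u,v}$ factorizes as $\rho_u \otimes \rho_v$, so that $C_2(u,v) = S_2(u)\,S_2(v) \le 1/4 \neq 1/2$. Thus the pairwise hypothesis already precludes isolated vertices and disconnectedness, placing us within the connected regime assumed by the graph-state theorem.

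Applying Theorem~\ref{thm:graph_strcuture_iff_2-entropy} to $\ket{G}$ then yields that $\ket{G}$ is LU-equivalent to the $n$-qubit GHZ state, and composing with the local unitaries $U_1 \otimes \cdots \otimes U_n$ relating $\ket{\psi}$ and $\ket{G}$ gives the claim by transitivity of LU-equivalence. I expect the main obstacle to be the very first step: correctly invoking the local-Clifford reduction to graph states and confirming that the entropic hypothesis survives it, together with the connectivity verification, since the graph-state theorem was proven only for connected graphs and one must argue that the pairwise correlation condition itself rules out any disconnection.
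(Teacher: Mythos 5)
Your proposal is correct and takes essentially the same route as the paper: reduce the stabilizer state to a graph state via local (Clifford) unitaries, transfer the pairwise hypothesis using LU-invariance of the Tsallis $2$-entropy, apply Theorem~\ref{thm:graph_strcuture_iff_2-entropy}, and compose the local unitaries. You are in fact somewhat more careful than the paper's own proof, which simply asserts that the associated graph $G$ is connected without justification; your check that an isolated vertex gives $C_2(u,v)=0$ and that a pair split across components gives $C_2(u,v) = S_2(u)\,S_2(v) \le 1/4 \neq 1/2$ closes that small gap.
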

\begin{proof}
    It is well known that every stabilizer state is LU equivalent to some graph state \cite{Walter_multipartite_entanglement}. Let $\ket{\psi}$ be a state that is locally unitary (LU) equivalent to the graph state $\ket{G}$, representing a connected graph $G$. Specifically, $\ket{\psi} = U_{\psi} \ket{G}$, where $U_{\psi}$ is a local unitary operation. A local unitary operation $U$ is more formally defined as $U = \otimes_{i=1}^{n} U_i$, where $U_i$ is a unitary operator acting on the $i$-th qubit, and $n$ is the number of qubits. The graph state $\ket{G}$ shares the same entanglement structure as $\ket{\psi}$. Given our assumptions regarding the entropy/correlation function values, along with Theorem $\ref{thm:graph_strcuture_iff_2-entropy}$, we can say that $\ket{G} = U_G \ket{GHZ}$, where $U_G$ is a local unitary operation. Consequently, we conclude that $\ket{\psi} = U_{\psi} U_G \ket{GHZ}$, which shows that $\ket{\psi}$ is LU equivalent to the GHZ state.
\end{proof}

 Given the marginal entropies of a graph state, there will be certain structural features of the graph state that produce such an entropic value. Every subset of vertices thus enforces a certain restriction of what the original graph can look like. If we are given the entropies of all the marginals, then constructing the original graph states becomes a constraint-satisfying problem. A set of marginal entropies may be consistent with a family of graph states because there are multiple ways a particular entropic value can be achieved by choosing different stabilizer operators to include in the support set.

 Let's look at how our topological structure denoted by $\mathcal{X} $ behaves for different values of the correlation function. This will help us build the intuition that leads to the next theorem, where we present the specific birth and death times of the Betti numbers for LU-equivalent GHZ states. The topological structure begins with $n$ disconnected vertices. At $\varepsilon = 0$, we will always have $n$ disconnected components and consequently $n$ $ 0-$dimensional holes. The number of $ 0-$dimensional holes decreases as more edges are introduced. When there is only one connected component, we are left with a single 0-dimensional hole, which persists indefinitely until we reach $ \varepsilon_{\text{max}} $, at which point the analysis stops. In our special case, the correlation function outputs 0.5 for any 2-subset of vertices in the topological structure $ \mathcal{X} $. We can be certain that no simplex is formed for $ \varepsilon < 0.5 $, as a 1-dimensional simplex cannot form within this range, and higher-dimensional simplices do not form before 1-dimensional simplices due to the downward closure property (see Ref. \ref{eq:downward_closure_total_correlation}). Consequently, the topological structure has $n$ $0-$dimensional holes for filtration parameters less than $0.5$. Conversely, if the TDA output shows $ n $ 0-dimensional simplices within the range $[0, 0.5)$, we can conclude that any pair of vertices has a correlation of at least 0.5. If the correlation function had taken a value within this range, an edge would have been formed, which would have led to a reduction in the number of 0-dimensional holes.

Now, let's examine the 1-dimensional holes. These holes correspond to the number of independent cycles in the topological structure $ \mathcal{X} $. It is important to emphasize the term ``independent" because if a cycle can be expressed as a linear combination of other cycles, it is not counted in the Betti number. Given our assumption that the graph state corresponds to a connected graph, the earliest filtration parameter at which a 2-simplex can form is $ 1 $ (see \ref{prop:calculating_linear_entropy_using_stablizer_support}). We can be confident that, in this case, there will be no higher-dimensional simplices that reduce the number of 1-dimensional holes. Therefore, the problem of finding 1-dimensional holes in the topological structure $ \mathcal{X} $ reduces to identifying the number of independent cycles in the graph. In the following theorem, we will present an explicit formula for computing this value.

\begin{theorem}
\label{thm:betti_number_iff_2-entropy}
There are $ n $ 0-dimensional holes for $ 0 \leq \varepsilon < 0.5 $. At $ \varepsilon = 0.5 $, one 0-dimensional hole and $ \frac{(n-1)(n-2)}{2} $ 1-dimensional holes form if and only if the graph state $\rho$ corresponding to graph $G =(V, E)$ has 2-deformed correlation measure of every 2-qubit subsystem of $\rho$ is 0.5, i.e., $C_2(J) = 0.5$ for all $J \subseteq A$.
\end{theorem}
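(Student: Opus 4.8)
The plan is to exploit the fact that, for a connected graph state, the $1$-skeleton of the entanglement complex is governed entirely by the pairwise correlations $C_2(u,v)$, which by Proposition \ref{prop:correlation_function_relation_with_graph_structure} can only take the values $0.25$, $0.5$, or $1$, with the edge $(u,v)$ entering the complex exactly at $\varepsilon = C_2(u,v)$. The whole argument then reduces to a bookkeeping of edges together with one structural observation: no simplex of dimension $\ge 2$ can appear at or before $\varepsilon = 0.5$. Note that the complex's edges are determined by correlation values, not by the edge set $E$ of $G$, and that the filtration condition $C_2 \le \varepsilon$ means pairs with $C_2 = 0.5$ are counted as present at $\varepsilon = 0.5$.

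First I would rule out triangles at the boundary value. Using the single-qubit value $S_2(v) = 0.5$ (valid for every non-isolated vertex by Proposition \ref{prop:calculating_linear_entropy_using_stablizer_support}) together with that proposition applied to a triple, I would compute $C_2(u,v,w) = S_2(u) + S_2(v) + S_2(w) - S_2(\{u,v,w\}) = 1.5 - \left(1 - |\mathcal{S}_{\{u,v,w\}}|/2^{3}\right) = 0.5 + |\mathcal{S}_{\{u,v,w\}}|/8$. Since the identity always lies in $\mathcal{S}_{\{u,v,w\}}$, this is at least $0.625 > 0.5$, so no $2$-simplex is born at $\varepsilon = 0.5$; by downward closure (Eq. \eqref{eq:downward_closure_total_correlation}) no higher simplex can appear either. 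Hence the complex at $\varepsilon = 0.5$ is a pure graph (a $1$-dimensional complex), and its first Betti number is the cyclomatic number $\beta_1 = m - n + \beta_0$, where $m$ is the number of pairs with $C_2 \le 0.5$.

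For the reverse implication (assuming $C_2(u,v) = 0.5$ for every pair): for $\varepsilon < 0.5$ no edge is yet born, so all $n$ vertices remain isolated components and $\beta_0 = n$, giving the claimed $n$ $0$-dimensional holes; at $\varepsilon = 0.5$ all $\binom{n}{2}$ edges appear simultaneously, so the $1$-skeleton is the complete graph $K_n$, which is connected ($\beta_0 = 1$) and has $\beta_1 = \binom{n}{2} - n + 1 = \tfrac{(n-1)(n-2)}{2}$. For the forward implication: assuming $n$ components on $[0,0.5)$ forces every $C_2(u,v) \ge 0.5$ (otherwise an edge would be born earlier and reduce $\beta_0$), so by the trichotomy of Proposition \ref{prop:correlation_function_relation_with_graph_structure} each pairwise correlation is $0.5$ or $1$. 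Writing $m$ for the number of edges present at $\varepsilon = 0.5$ and solving $\beta_1 = m - n + 1 = \tfrac{(n-1)(n-2)}{2}$ gives $m = \binom{n}{2}$; since there are only $\binom{n}{2}$ vertex pairs, every pair must be an edge at $0.5$, i.e. $C_2(u,v) = 0.5$ for all pairs.

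The step I expect to be the main obstacle is establishing rigorously that the Betti number at $\varepsilon = 0.5$ is exactly the cyclomatic number of the correlation graph, i.e. that no $2$-simplex contaminates the count at the boundary value; this hinges on the strict inequality $C_2(u,v,w) > 0.5$, which I obtain from the stabilizer-support count but which must be justified for \emph{all} triples of a connected graph state rather than only the GHZ-type graphs. A secondary care point is the convention at exactly $\varepsilon = 0.5$. Finally, I would observe that by Theorem \ref{thm:graph_strcuture_iff_2-entropy} the derived condition $C_2(u,v) = 0.5$ for all pairs is equivalent to $\rho$ being LU-equivalent to the GHZ state, which ties this topological signature directly to GHZ identification.
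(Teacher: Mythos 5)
Your proof is correct and follows the same overall route as the paper's: both directions reduce to the observation that the complex at $\varepsilon \le 0.5$ is a $1$-dimensional complex whose edges are exactly the pairs with $C_2 \le 0.5$, followed by the cycle-basis count $\beta_1 = m - n + c$, which forces $m = \binom{n}{2}$ and hence completeness of the $1$-skeleton. The one genuine difference is your explicit exclusion of $2$-simplices at the boundary value: you compute $C_2(u,v,w) = 0.5 + |\mathcal{S}_{\{u,v,w\}}|/8 \ge 0.625 > 0.5$ from Proposition~\ref{prop:calculating_linear_entropy_using_stablizer_support}, valid for \emph{every} triple of \emph{any} connected graph state. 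The paper instead leans on its pre-theorem remark that the earliest filtration parameter at which a $2$-simplex can form is $1$; that remark holds for the GHZ-type (star/complete) graphs but is false for general connected graph states---for the $5$-cycle graph state, the triple $\{1,2,3\}$ has $\mathcal{S}_{\{1,2,3\}} = \{\identity, K_2\}$, hence $C_2(1,2,3) = 0.75 < 1$. Since the forward implication must rule out $2$-simplices \emph{before} one knows the state is GHZ-like, your uniform bound of $0.625$ is exactly the fact needed there, and your write-up is in this respect more rigorous than the paper's; everything else (the trichotomy of pairwise values, the treatment of the closed filtration at $\varepsilon = 0.5$, and the counting in both directions) matches the paper's argument.
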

\begin{proof}
Let's prove the forward direction first. Assume that for $ 0 \leq \varepsilon < 0.5 $, there are $ n $ 0-dimensional holes, and at $ \varepsilon = 0.5 $, there is one 0-dimensional hole and $ \frac{(n-1)(n-2)}{2} $ 1-dimensional holes.

Based on the lifespan of $0-$dimensional holes, we can infer that there exists no pair of vertices with a correlation measure less than $0.5$. At $\varepsilon = 0.5$. There is exactly 1 connected component ($ c = 1 $) corresponding to the single 0-dimensional hole. At this value of the filtration parameter, the space consists of $ n $ vertices, 1 connected component, and $ \frac{(n-1)(n-2)}{2} $ 1-dimensional holes, which correspond to the dimension of the cycle basis.

The dimension of the cycle basis is given in \cite{hage1996island} and is computed using the number of edges, vertices, and connected components
\begin{equation}
\text{Cycle basis dimension} = m - n + c,
\end{equation}
where $ m $ is the number of edges, $ n $ is the number of vertices, and $ c = 1 $ is the number of connected components. Substituting the values:
\begin{equation}
\frac{(n-1)(n-2)}{2} = m - n + 1
\end{equation}
Solving for $ m $, we find:
\begin{equation}
m = \frac{n(n-1)}{2}
\end{equation}
Therefore, the topological structure $ \mathcal{X} $ is a complete graph at $ \varepsilon = 0.5 $. As a result, the correlation measure between every pair of vertices is exactly 0.5, since for any pair of vertices $ u, v $, the condition $ C(u, v) \leq 0.5 $ is satisfied, and we also know that $ C_2(u, v) \geq 0.5 $, as no edges are within the range $[0, 0.5)$

Let's prove the backward direction of the statement. Assume that for every pair of vertices the $C_2(u, v) = 0.5$. Tracking the behaviour of the topological structure we see that there will be $n$ disconnected vertices in $\mathcal{X}$ and at $\varepsilon=0.5$ the entire space forms a complete graph. This explains the behavior of the $0-$dimensional holes. At $\varepsilon = 0.5$ we start to observe the birth of $1-$dimensional holes. The number of 1-dimensional holes corresponds to the dimension of the cycle basis. For a complete graph with $ n $ vertices, the number of edges is $ \frac{n(n-1)}{2} $, and there is 1 connected component. 


Substituting the values in the equation for the dimension of the cycle basis, we get:

\begin{equation}
\frac{n(n-1)}{2} - n + 1 = \frac{(n-1)(n-2)}{2}
\end{equation}

Thus, at $ \varepsilon = 0.5 $, there is a single 0-dimensional hole and $ \frac{(n-1)(n-2)}{2} $ 1-dimensional holes. Hence, the proposition follows in the backward direction.
\end{proof}

\begin{figure} [t!]
\label{fig:GHZ-barcodes}
    \centering
    \begin{minipage}{0.35\linewidth}
        \centering
        \includegraphics[height=5cm]{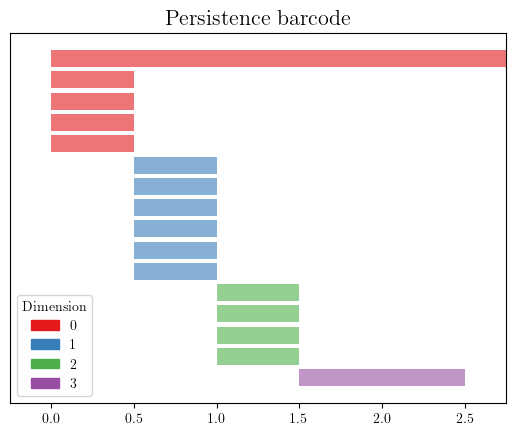}
    \end{minipage} 
    \hfill
    \begin{minipage}{0.45\linewidth}
        \centering
        \includegraphics[height=5cm]{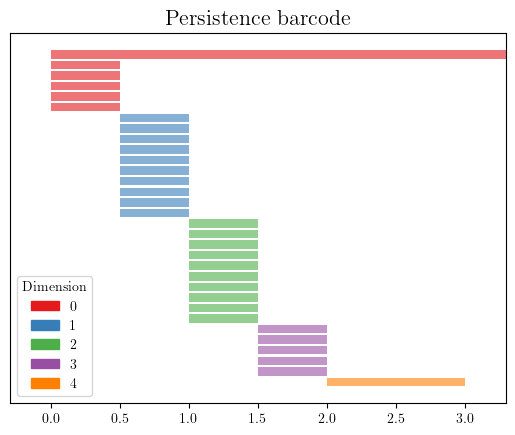}
    \end{minipage}
    \caption{Barcodes for 5- and 6-qubit GHZ states corresponding to the complete graph or star graph. In the 5-qubit case, all five $0$-dimensional holes vanish when the filtration parameter reaches $0.5$, at which point six $1$-dimensional holes emerge, leaving a single connected component. Similarly, in the 6-qubit case, all six $0$-dimensional holes disappear at the filtration value $0.5$, and ten $1$-dimensional holes are born at that same threshold. The barcodes here verify our result shown in Theorem \ref{thm:GHZ_state_iff_topological_footprint}.}

\end{figure}

\begin{theorem}
\label{thm:GHZ_state_iff_topological_footprint}
    Let $\rho = \ket{\psi}\bra{\psi}$ be a pure multipartite graph quantum state in Hilbert space $(\mathbb{C}^2)^{\otimes n}$. The quantum system $\rho$ generates $n$ 0-dimensional holes for $0 \leq \varepsilon < 0.5$ and at $\varepsilon = 0.5$, one 0-dimensional holes and $\frac{(n-1)(n-2)}{2}$ 1-dimensional hole if and only if the graph associated with $\rho$ is a star or complete graph.
\end{theorem}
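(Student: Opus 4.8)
The plan is to recognize that this statement is essentially the composition of the two preceding theorems, with the condition $C_2(u,v)=0.5$ on every vertex pair serving as the common hinge. First I would invoke Theorem \ref{thm:betti_number_iff_2-entropy}, which already establishes that the described topological footprint --- namely $n$ zero-dimensional holes throughout $[0,0.5)$ collapsing to a single connected component together with $\frac{(n-1)(n-2)}{2}$ one-dimensional holes born exactly at $\varepsilon=0.5$ --- holds if and only if $C_2(J)=0.5$ for every two-element subset $J\subseteq A$. This reduces the entire theorem to proving the equivalence between the correlation condition and the graph being a star or a complete graph.

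For the direction $C_2\equiv 0.5 \implies$ star or complete, I would reuse the case analysis already carried out in the backward direction of Theorem \ref{thm:graph_strcuture_iff_2-entropy}: if some vertex has degree one, Proposition \ref{prop:correlation_function_relation_with_graph_structure} forces every non-center vertex to share the same single neighbor, yielding a star; otherwise every vertex has degree at least two, Proposition \ref{prop:correlation_function_relation_with_graph_structure} forces $N(u)\setminus\{v\}=N(v)\setminus\{u\}\neq\emptyset$ for all pairs, and Lemma \ref{lemma:complete_graph_iff_neighbor_set_condition} then identifies the graph as complete. For the converse I would verify directly, again through Proposition \ref{prop:correlation_function_relation_with_graph_structure}, that both families realize $C_2(u,v)=0.5$ on every pair: for a complete graph $N(u)\setminus\{v\}=N(v)\setminus\{u\}=V\setminus\{u,v\}\neq\emptyset$ (Condition 2, first case); for a star graph, two leaves share the common neighbor (Condition 2, first case), while the center paired with a leaf satisfies $N(\text{leaf})\setminus\{\text{center}\}=\emptyset$ with the center's remaining neighborhood nonempty (Condition 2, second and third cases). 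Chaining the two equivalences through the hinge condition then yields the stated biconditional.

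Because both constituent theorems are already proved, there is no genuinely hard analytic step here; the work is entirely in bookkeeping. The only place demanding care is the converse direction, where I must confirm that the star family --- not merely the complete family --- produces $C_2=0.5$ uniformly across all types of vertex pairs (leaf--leaf and center--leaf), and that the standing assumptions of Section \ref{sec:graph_states} (connectedness, no isolated vertices, and $n\geq 4$ so that the star and complete graphs are genuinely distinct and the relevant neighborhoods are nonempty) remain in force throughout. I expect this verification to be routine but worth stating explicitly, since it is the one point at which the two graph families must be treated separately rather than uniformly.
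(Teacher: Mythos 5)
Your proposal is correct and follows essentially the same route as the paper: the paper's proof of Theorem~\ref{thm:GHZ_state_iff_topological_footprint} is precisely the composition of Theorem~\ref{thm:betti_number_iff_2-entropy} and Theorem~\ref{thm:graph_strcuture_iff_2-entropy}, hinged on the shared condition $C_2(u,v)=0.5$ for every pair. Your only deviation is that you re-verify the converse (star or complete $\implies C_2\equiv 0.5$) directly from Proposition~\ref{prop:correlation_function_relation_with_graph_structure} rather than citing the LU-equivalence to the GHZ state already contained in Theorem~\ref{thm:graph_strcuture_iff_2-entropy}, which is a harmless and slightly more self-contained bookkeeping choice.
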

\begin{proof}
   The proof follows directly from applying Theorem \ref{thm:betti_number_iff_2-entropy} and Theorem \ref{thm:graph_strcuture_iff_2-entropy}. Both the left-hand side (LHS) and right-hand side (RHS) of the theorem describe a scenario where the correlation function takes the value 0.5 for every pair of qubits in the graph state. Therefore, their equivalence follows.
 \end{proof}

We have shown that any graph state whose reduced density matrices (RDMs) for every two-qubit subset have eigenvalues ${1/2, 1/2}$ must be a GHZ state. This entanglement structure can be written more precisely: each two-qubit RDM must be locally equivalent to
\begin{equation} \label{eq:GHZ_condition_for_RDM} \frac{1}{2}\ket{00}\bra{00} + \frac{1}{2}\ket{11}\bra{11} \end{equation}
up to local unitaries. A natural question that arises is whether this result holds for all quantum states (not just graph states) with such entanglement properties. This is a variant of the marginal reconstruction problem which has been discussed in the past \cite{Linden_2002_2-Particle_RDM}, \cite{Walck_2008_GHZ_RDM}, \cite{Di_2004_RDM}. In the case of three qubits, this characterization holds: every 3-qubit 1-uniform state is a GHZ state \cite{Karol_k-uniform_construction}. Interestingly, 1-uniformity is a strictly weaker condition than requiring all two-qubit RDMs to take the form of Equation \eqref{eq:GHZ_condition_for_RDM}. Taking the partial trace of this RDM yields the maximally mixed state, consistent with 1-uniformity. However, we were unable to extend this result to systems with more than three qubits. A peculiar property of quantum states whose two-qubit RDMs take the form of Equation \eqref{eq:GHZ_condition_for_RDM} is that they remain invariant under SWAP tests. Such symmetric properties have also been discussed in previous works relating to $k-$uniform states \cite{Karol_k-uniform_construction}. We expect the symmetric property of the quantum state in addition to the eigenvalues of the 2-qubit RDM to be sufficient to proving this hypothesis.

\begin{figure}[t!]
    \centering

    \begin{subfigure}{0.43\linewidth}
        \centering
        \includegraphics[height=3.8cm]{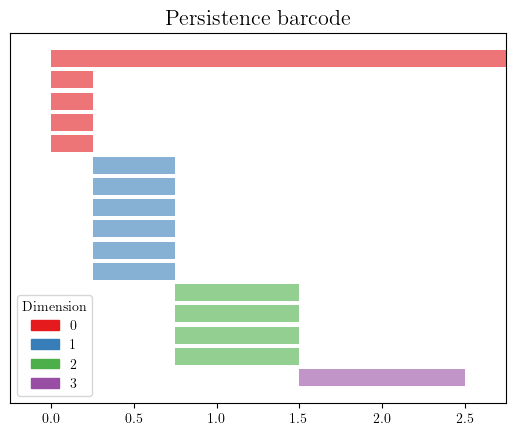}
        \caption{Persistence barcode for 5-qubit AME state}
    \end{subfigure}
    \hfill
    \begin{subfigure}{0.43\linewidth}
        \centering
        \includegraphics[height=3.8cm]{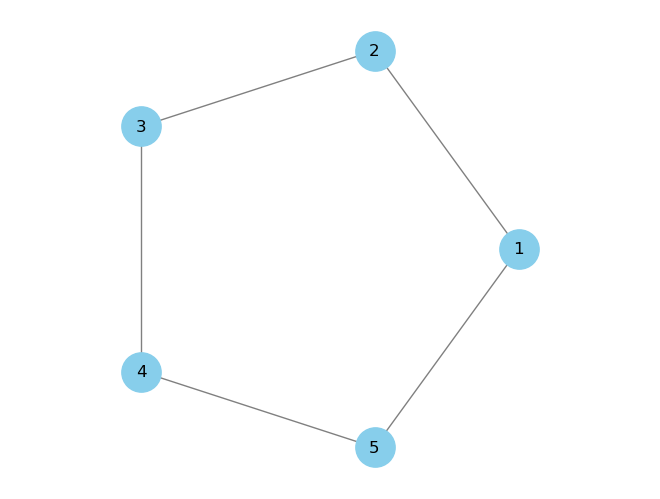}
        \caption{Graph representation for a 5-qubit AME state}
    \end{subfigure}

    \vspace{0.8em} 

    \begin{subfigure}{0.43\linewidth}
        \centering
        \includegraphics[height=3.8cm]{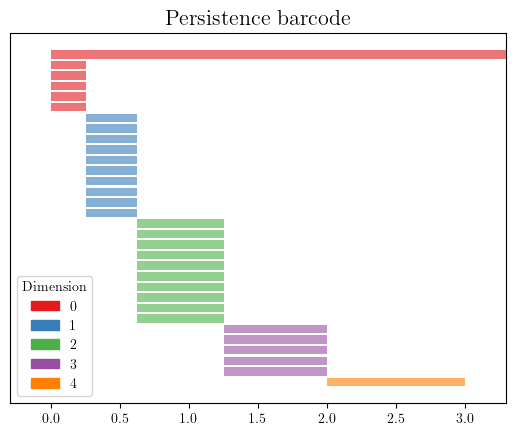}
        \caption{Persistence barcode for 6-qubit AME state}
    \end{subfigure}
    \hfill
    \begin{subfigure}{0.43\linewidth}
        \centering
        \includegraphics[height=3.8cm]{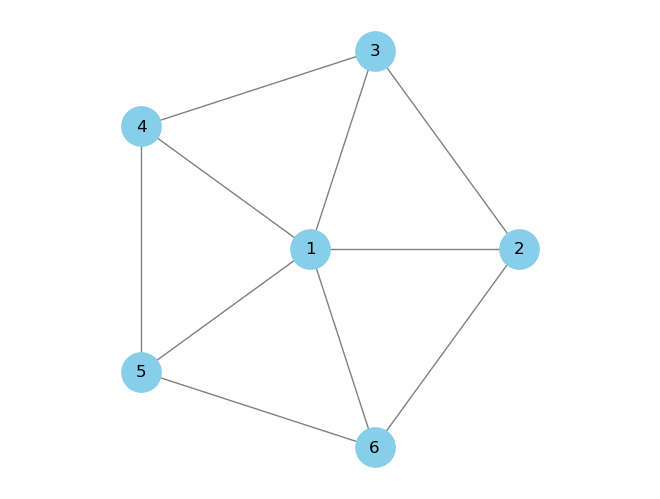}
        \caption{Graph representation for a 6-qubit AME state}
    \end{subfigure}

    \caption{Persistence barcodes and graph representations for 5- and 6-qubit AME states. The graph states corresponding to the pentagon and wheel graphs serve as examples of AME states, introduced in \cite{helwig2013quditgraph}. The birth times of the topological features (holes) align with the values discussed below.}
    \label{fig:AME_barcodes}
\end{figure}

\section{Barcodes of k-uniform/AME states}
\label{sec:k-uniform-barcodes}
We will now briefly discuss the barcode structure associated with $k$-uniform and Absolutely Maximally Entangled (AME) states. These quantum states exhibit share a very predictable barcodes pattern, which we seem worth discussing in this section. We will be using the 2-deformed correlation function \ref{eq:multipartite_2-deformed_total_correlation} to form our simplices. Using the total correlation function, as described in Section \ref{sec:distillable_entanglement}, is not suitable for these states. This is because all $\lfloor \frac{n}{2} \rfloor$-simplices, along with their lower-dimensional faces, appear immediately as soon the TDA analysis begins— (see Lemma \ref{lemma:AME_vanishing_correltion}). All homology classes in dimensions less than $\left\lfloor \frac{n}{2} \right\rfloor$ vanish immediately, as the appearance of $\left\lfloor \frac{n}{2} \right\rfloor$-dimensional simplices fills in the lower-dimensional holes.

For AME states in particular, all simplices emerge at a specific value of the filtration parameter $\varepsilon$, since all subsets of qubits with the same cardinality share the same reduced density matrices—specifically, the maximally mixed state. At particular epsilon values, say $\varepsilon_k$, the topological space gets completely covered with $\binom{n}{k}$ $k-$dimensional simplices. The number of holes formed seems be maximal for any topological structures with $n$ vertices. We can't prove this rigorously, but we will give an intuitive explanation as to why we think this is true. The $k^{th}$ betti number is given by the formula (for a more detailed explanation see \cite{berry2023topology}, also talked about in \ref{sec:prelims})
\begin{equation}
    \beta_k = \dim(\ker \partial_k)- \dim(\im \partial_{k + 1})
\end{equation}
The term $\dim(\operatorname{im} \partial_{k+1})$ remains unchanged when a $k$-simplex is removed. Upon removing a simplex from the chain complex $X^{(k)}$, it may either belong to the kernel of the boundary operator $\partial_k$ or not. If the simplex contributes to $\ker \partial_k$, then its removal decreases $\dim(\ker \partial_k)$, leading to a reduction in the $k$-th Betti number. Otherwise, the Betti number remains unchanged. The $k$-th Betti number in the chain complex $X^{(k)}$ seems be a non-increasing function with respect to the number of $k$-dimensional simplices. That is, as $k$-simplices are removed, the $k$-th Betti number does not increase. Any other chain complex $Y^{(k)}$ can be formed by removing $k-$simplex from chain complex $X^{(k)}$ that has $\binom{n}{k}$ $k-$dimensional simplices. The presence of higher-dimensional $(k+1)$-simplices can only reduce the $k$-th Betti number, as they fill in $k$-dimensional holes. The $k^{th}$ betti number of $Y^{(k)}$ should be less than or equal to that of $X^{(k)}$.

Every AME state are $1-$uniform state. The linear entropy of every individual qubit of the quantum state is equal to $0.5$, refer to \ref{eq:2-deformed_correlation_value_1-uniform_state}. Any subset of vertices $J$ such that $|J| = k \leq \lfloor\frac{n}{2} \rfloor$, the linear of entropy for qubits $J$ is 
\begin{figure}[t!]
    \centering
    \begin{subfigure}{0.3\linewidth}
        \centering
        \includegraphics[width=\linewidth]{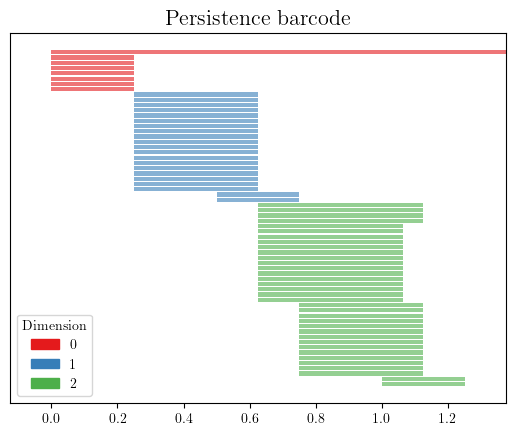}
    \end{subfigure}
    \hfill
    \begin{subfigure}{0.3\linewidth}
        \centering
        \includegraphics[width=\linewidth]{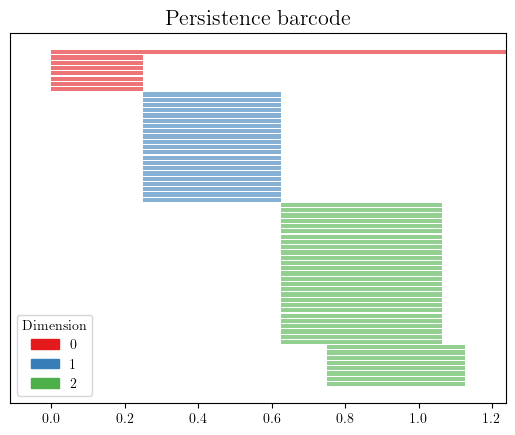}
    \end{subfigure}
    \hfill
    \begin{subfigure}{0.3\linewidth}
        \centering
        \includegraphics[width=\linewidth]{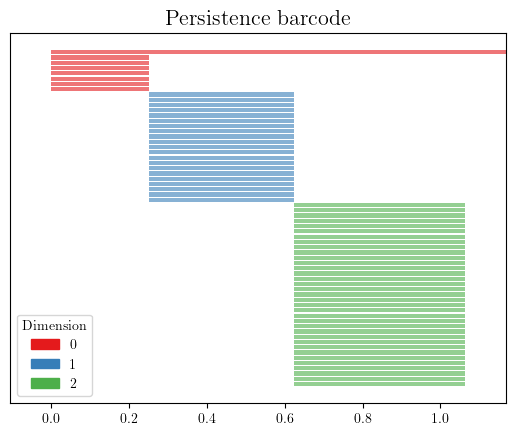}
    \end{subfigure}

    \vspace{1em} 

    \begin{subfigure}{0.3\linewidth}
        \centering
        \includegraphics[width=\linewidth]{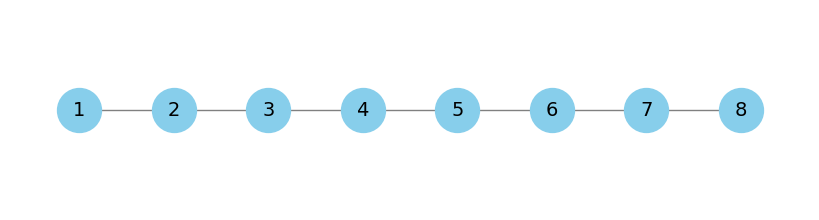}
        \caption{1-uniform state}
    \end{subfigure}
    \hfill
    \begin{subfigure}{0.3\linewidth}
        \centering
        \includegraphics[width=\linewidth]{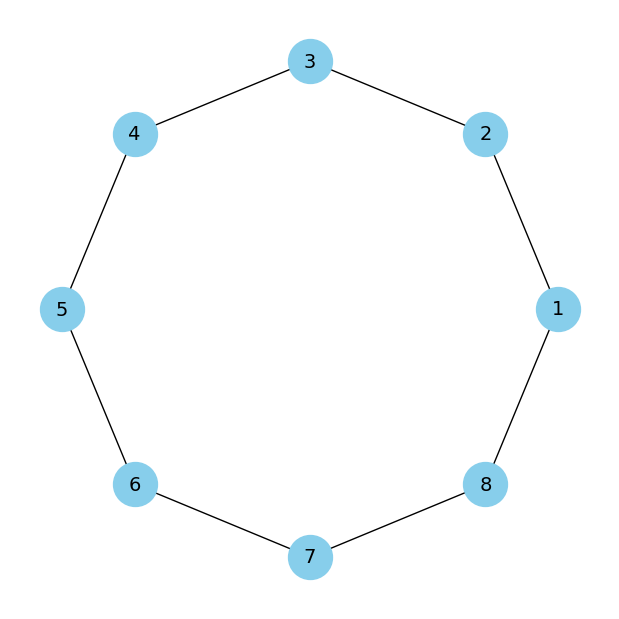}
        \caption{2-uniform state}
    \end{subfigure}
    \hfill
    \begin{subfigure}{0.3\linewidth}
        \centering
        \includegraphics[width=\linewidth]{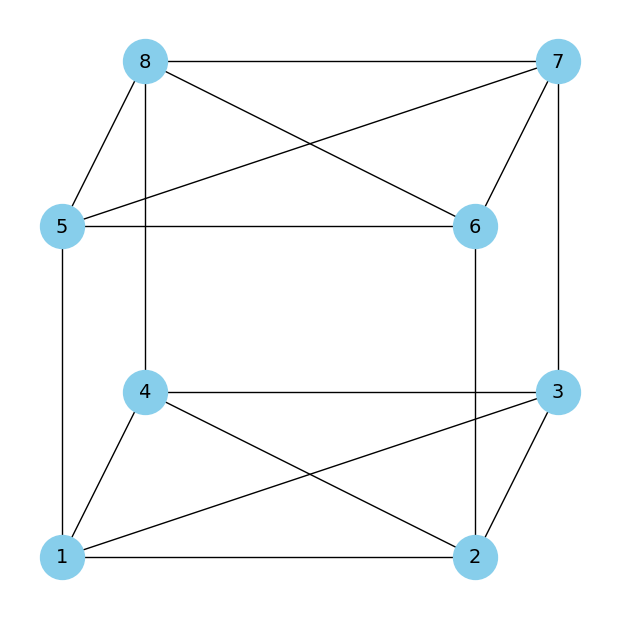}
        \caption{3-uniform state}
    \end{subfigure}

    \caption{Each column shows the persistence barcode (top) and corresponding graph (bottom) of an 8-qubit $k$-uniform quantum state for $k = 1, 2, 3$. The examples of 8 qubit $1, 2, 3$ uniform states have been borrowed from \cite{sudevan2022kuniformgraphs}. We observe that for $3$-uniform states, holes of dimensions $0$, $1$, and $2$ stack up; for $2$-uniform states, holes of dimensions $0$ and $1$ stack; and for $1$-uniform states, only $0$-dimensional holes stack.}
    \label{fig:k-uniform}. 
\end{figure}
\begin{align}
    S_2(J) &= 1 - \trace{\rho_J^2} \\
    &= 1 - \sum_{i=1}^{2^k} \lambda(i)^2 \\
    &= 1 - \frac{2^k}{(2^k)^2} = 1 - \frac{1}{2^k}
\end{align}
Similarly, in the case where $|J| = k > \left\lfloor \frac{n}{2} \right\rfloor$, we find that
\begin{equation}
    S_2(J) = 1 - \frac{1}{2^{n - k}}
\end{equation}
 The eigenvalues of the RDM for a subset of qubits $J$ can be verified by the properties of AME state \ref{def:AME_state}. 
Using these values, we get the 2-deformed correlation value of any subset of qubits $J = \{J_1, J_2,  \ldots J_k\}$ for $k \leq \lfloor\frac{n}{2} \rfloor$ can be calculated as 
\begin{align}
\label{eq:birth_time_AME_state}
    C_2(J_1, J_2,  \ldots J_k) &= \sum_{i=1}^{k} S_2(J_i) - S_2(J) \\ 
    &= \frac{k}{2}  + \frac{1}{2^k} - 1
\end{align}
Every $d-$dimensional hole is formed when the $d$-dimensional simplices are constructed. These holes appear at the correlation function values corresponding to subsets of $d+1$ qubits
\begin{equation}
\label{eq:birth_time_of_low_dim_simplex_AME}
    \varepsilon =  \frac{d+1}{2}  + \frac{1}{2^{d+1}} - 1 
\end{equation}
 for $d+1 \leq \lfloor\frac{n}{2} \rfloor$. For $d+1 > \lfloor \frac{n}{2} \rfloor$, we get a similar formula for the generation of $d-$dimensional holes we get that
 \begin{equation}
 \label{eq:birth_time_of_high_dim_simplex_AME}
 \varepsilon = \frac{d+1}{2}  + \frac{1}{2^{n - d - 1}} - 1    
 \end{equation}
 
 Therefore for AME states, we can predict the birth time of every $d-$dimensional hole. These formulas for epsilon are monotonic with respect to the simplex dimension. As a result, no higher-dimensional hole/simplex can appear before all lower-dimensional holes/simplices have formed and vanished. 
 
For $k$-uniform states, we observe a structure similar to that of AME states. Specifically, holes of dimension less than $k$, as well as those of dimension greater than  $n - k $, follow the birth time described by equations \eqref{eq:birth_time_of_low_dim_simplex_AME} and \eqref{eq:birth_time_of_high_dim_simplex_AME}. This pattern can be observed and verified in figures \ref{fig:k-uniform}. The persistence of the holes of dimensional $d$ such that $k \leq d \leq n-k$ is limited by the death time of $k$-simplices and birth time of $n-k$ simplices. The barcodes give us some idea of how the entanglement in $k-$uniform states must be distributed.

 Similar to GHZ states, $k$-uniform form \emph{all} $d-$dimensional simplices at a particular filtration parameter. For both GHZ and $k-$uniform state, the $d$-dimensional barcodes stack up and share the same birth and death time. The entropy of GHZ state across partition of size $n$ (minimum size of the two parts in a bipartition) scales as $\Theta(1)$ while the entropy for $k-$uniform state scales as $\Theta(n)$. Therefore, the birth and death time of GHZ state barcodes do not depend on the size of the state. 

\section{Conclusion}

Persistence Homology is a recently introduced TDA framework that examines the persistence of topological invariants making it useful for studying the geometric structure's underlying complex data. It puts forth lots of avenues for further study. In particular, the betti number, which quantifies the presence of d-dimensional holes in a space, lacks clear interpretation. Such invariants are interesting because they represent a peculiar property of the geometric structure such that on pulling and stretching of the structure they remain invariants. By analyzing the persistence of the holes, we have a way to distinguish between more significant and less significant topological features. Thus,  allowing us to focus on the most relevant ``holes" in the structure.

In this paper, we explore the application of topology to understand the entanglement structure and correlation distribution in quantum states in this paper. We put forth two main results. In Section \ref{sec:distillable_entanglement}, we investigate further the applications of the topological invariant, the Integrated Euler characteristic. The work of Hamilton and Leditzky showed that the Integrated Euler Characteristic is related to the $n$-tangle, but did not give an interpretation that clearly revealed its operational significance for quantum information processing tasks. We show that the integrated Euler Characteristic can be used to bound average distillable entanglement, which in turn measures the resourcefulness of a state for teleportation between the different parties sharing a multipartite state.  Additionally, we examine the utility of this entanglement measure in relation to $k$-uniform states and error correction codes. As a result, we observe that the topology of the state plays a role in determining its suitability for being a $k$-uniform state or an error-correcting code.
Our work therefore solidifies the use of the Integrated Euler Characteristic as a useful measure of multipartite entanglement beyond $3$ qubits~\cite{horodecki2024multipartiteentanglement}.

Our second result is motivated by the goal of finding an operational meaning and interpretation for the lower dimensional Betti number.  We show that the persistence of the 1-Betti number can   determine whether a given graph state is LU-equivalent to the GHZ state. We show that TDA can be employed to identify the GHZ state by analyzing the 0/1 Betti numbers for filtration parameters $ \varepsilon = 0.5 $ and $ \varepsilon \to 0.5^- $ (see Theorem \ref{thm:GHZ_state_iff_topological_footprint}). This verifier applies to \emph{all} quantum states in the case where the number of qubits is three.  This shows that low Betti numbers do infact play a useful role in multipartite entanglement and that the properties of persistence diagrams can be used to uniquely identify certain graph states.


There are a number of open questions that remain after this work.
More work is needed in order to understand the full meaning behind the $1$-Betti number. One promising direction is to interpret the invariant in terms of communication and the fidelity of state transfer. The 1-dimensional holes correspond to cycles that start and end at the same vertex. At a particular filtration parameter $\varepsilon$, if a cycle exists, we can interpret this cycle as representing the ability of an observer within the cycle to send a quantum state with a minimum level of fidelity, depending on $\varepsilon$, to any other observer in the cycle. However, the filtration parameter $ \varepsilon $ cannot provide information about the \emph{minimum} fidelity of state transfer. This limitation arises from the way we construct our topological structure. In the topological structure , a cycle exists if the correlation between vertices is at most $ \varepsilon $. Only the complement of the cycle will reveal the vertex pairs whose correlation exceeds a particular $ \varepsilon $. It would be interesting to find an operational interpretation of this in the context of communication protocols.

The current methodology for constructing the topological structure may seem counterintuitive. Topological summaries, such as the Betti number, are usually used to reveal potential noise in the dataset. The persistence of the invariant indicates the significance of the noise present in the dataset. Typically, we would expect noise to correspond to the lack of correlation between vertices; however, in our case, it is rather the opposite: the entanglement between the vertices. In our construction, if two vertices are highly entangled, they are placed far apart from each other. As a result, a 0-dimensional hole will persist for the duration of their correlation. Conversely, two uncorrelated vertices are placed closer together, and as a result, no 0-dimensional Betti number exists to identify this lack of correlation. We also explored additional measures that take their maximum value when the state is uncorrelated and the minimum value when it is highly correlated. However, we found that these measures do not necessarily satisfy the downward closure property, which is essential for simplicial complexes to be valid topological structures. 
It is an open question if more general hypergraphs that do not satisfy the downward closure property of simplicial complexes can be constructed and analyzed topologically that place correlated vertices closer together than uncorrelated ones. 

Perhaps the main question that remains involves the relationship between topology and multipartite entanglement.  While this work and others have established relationships between topological properties and entanglement structure, the exact relationship between the two remains unclear.  As recent work in high energy physics has revealed intimate links between entanglement and the shape spacetime, ideas such as topological data analysis may reveal new insights into the structure of emergent space time in quantum field theories.  As a result further study into this topic may yield more than just a deeper understanding of multipartite entanglement: it may shed new light onto the structure of reality.

\section*{Acknowledgements}
We would like to thank Gregory Hamilton and Felix Leditzky for useful conversations and feedback on the ideas presented here.  This work was primarily supported by acknowledges funding from the Quantum Software Centre funded by NSERC and NW's contribution on this work was supported by the U.S. Department of Energy,
Office of Science, National Quantum Information Science
Research Centers, Co-design Center for Quantum Advantage
(C2QA) under Contract Number DE-SC0012704.

\appendix
\section{Appendix}
\begin{lemma}
\label{lemma:binomial_equation}
     $\sum_{i=0}^{m} (-1)^i \binom{n}{i} = (-1)^m \binom{n-1}{m}$ for integers $n,m \geq 0$ and $m \leq n$
\end{lemma}
\begin{proof}
\textbf{Base Case:} $n = 0$, $m=0$: In such case we have
\begin{equation}
    \sum_{i=0}^{m} (-1)^i \binom{n}{i} = \binom{n}{0} = 0 = (-1)^0 \binom{n-1}{0}
\end{equation}
Lets fix an integer $n$ and $m \leq n$. Assume Induction Hypothesis, $\sum_{i=0}^{m} (-1)^i \binom{n}{i} = (-1)^m \binom{n-1}{m}$. We need to prove the result for $m+1$.
\begin{flalign}
    \sum_{i=0}^{m+1} (-1)^i \binom{n}{i} &= \sum_{i=0}^{m} (-1)^i \binom{n}{i} - (-1)^m \binom{n}{m+1} \\
    &= (-1)^m \binom{n-1}{m} - (-1)^m \left( \binom{n-1}{m+1} + \binom{n-1}{m}\right) \\
    &= (-1)^{m+1}  \binom{n-1}{m+1}
\end{flalign}
\end{proof}
\begin{lemma}
\label{lemma:AME_vanishing_correltion}
For any $ n $-qubit AME state $ \rho = \ket{\psi}\bra{\psi} $ defined over the qubit set $\mathcal{A}$, the total correlation over subsystems of size $ \leq \left\lfloor \frac{n}{2} \right\rfloor $ is zero.
\end{lemma}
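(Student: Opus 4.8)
The plan is to compute the total correlation directly for an arbitrary subsystem and read off that it vanishes, using only the defining spectral properties of AME states. Since the relevant notion here is the von Neumann total correlation of Section~\ref{sec:distillable_entanglement} (the quantity that governs the filtration in that setting), I would treat a subset $J = \{v_1, \dots, v_k\} \subseteq \mathcal{A}$ with $|J| = k \leq \lfloor n/2 \rfloor$ as $k$ single-qubit subsystems and evaluate
\begin{equation}
C(J)_\rho = \sum_{v \in J} S(v)_\rho - S(J)_\rho.
\end{equation}

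The two inputs both come straight from Definition~\ref{def:AME_state}. First, every AME state is in particular $1$-uniform, so each single-qubit marginal is maximally mixed and $S(v)_\rho = \log 2 = 1$ for every $v \in J$, giving $\sum_{v\in J} S(v)_\rho = k$. Second, the AME condition guarantees $\rho_J = \identity/2^k$ whenever $|J| = k \leq \lfloor n/2 \rfloor$, so that $S(J)_\rho = \log 2^k = k$. Substituting both facts yields $C(J)_\rho = k - k = 0$, and since $J$ was an arbitrary subsystem of size at most $\lfloor n/2 \rfloor$, the claim follows for all such subsystems.

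There is no substantive obstacle here: the whole content is the observation that for an AME state \emph{both} the single-qubit marginals and every joint marginal of size at most $\lfloor n/2 \rfloor$ are maximally mixed, so the marginal entropies are exactly additive and the total correlation collapses. The one point worth flagging explicitly is the distinction between this von Neumann total correlation, which vanishes, and the $2$-deformed correlation $C_2$, which does \emph{not} vanish (compare Eq.~\eqref{eq:birth_time_AME_state}, where $C_2$ evaluates to $\tfrac{k}{2} + \tfrac{1}{2^k} - 1 \neq 0$ for $k \geq 2$). This is precisely why the von Neumann functional is unsuitable for the AME/$k$-uniform barcode analysis—all faces up to dimension $\lfloor n/2 \rfloor - 1$ appear at $\varepsilon = 0$—and motivates the switch to $C_2$ in Section~\ref{sec:k-uniform-barcodes}.
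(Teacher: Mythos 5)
Your proof is correct and follows essentially the same route as the paper: both arguments evaluate $C(J) = \sum_{v \in J} S(v)_\rho - S(J)_\rho$ using that $1$-uniformity gives $S(v)_\rho = 1$ for each qubit and the AME condition gives $S(J)_\rho = k$ for $|J| = k \leq \lfloor n/2 \rfloor$, so the correlation collapses to $k - k = 0$. Your closing remark distinguishing the vanishing von Neumann total correlation from the nonvanishing $C_2$ is a helpful clarification but not part of the paper's proof.
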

\begin{proof}
Let $ J = \{ J_1, J_2, \dots, J_k \} $ be a subset of qubits such that $ J \subseteq \mathcal{A} $ and $ |J| = k \leq \left\lfloor \frac{n}{2} \right\rfloor $. The entropy of subset $ J $ is given by $ \log 2^{|J|} = |J| = k $. The entropy of the individual qubits is 1, as the AME states by definition maximizes every entropic measure. The value of the total correlation function is:
\begin{equation}
C(J) = S(J_1) + S(J_2) + \dots + S(J_k) - S(J) = k - k = 0
\end{equation}
\end{proof}
 \begin{prop}
For any $ n $-qubit AME state $\rho$ defined over the qubit set $\mathcal{A}$, the average total correlation over all possible subsets, i.e.,

\begin{equation}
\frac{1}{2^n}\sum_{\substack{J \subseteq \mathcal{A}}} C(J)_{\rho}    
\end{equation}
scales as $ \theta(\sqrt{n}) $.
\end{prop}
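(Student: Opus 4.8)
The plan is to bypass the alternating binomial sum entirely by reducing the average total correlation to the average distillable entanglement already computed in Proposition \ref{prop:ADE_iff_AME}. Expanding the total correlation functional and interchanging the order of summation,
\[
\sum_{J\subseteq\mathcal{A}} C(J)_\rho = \sum_{J\subseteq\mathcal{A}}\Big(\sum_{v\in J}S(v)_\rho - S(J)_\rho\Big) = \sum_{v\in\mathcal{A}} S(v)_\rho \cdot \#\{J : v\in J\} - \sum_{J\subseteq\mathcal{A}} S(J)_\rho.
\]
Each vertex lies in exactly $2^{n-1}$ subsets, and since an AME state is in particular $1$-uniform we have $S(v)_\rho = 1$ for every $v$, so the first term equals $n\,2^{n-1}$. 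The second term is, by Definition \ref{def:Average_Distillable_Entanglement} together with the pure-state identity $E_D(\rho_J)=S(J)_\rho$, precisely $2^n\langle E_D\rangle$. Dividing through by $2^n$ then yields the clean identity
\[
\frac{1}{2^n}\sum_{J\subseteq\mathcal{A}} C(J)_\rho = \frac{n}{2} - \langle E_D\rangle .
\]

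Next I would substitute the closed form for an AME state from Proposition \ref{prop:ADE_iff_AME}, namely $\langle E_D\rangle = \frac{n}{2^n}\big(2^{n-1}-\binom{n-1}{\lfloor n/2\rfloor}\big) = \frac{n}{2} - \frac{n}{2^n}\binom{n-1}{\lfloor n/2\rfloor}$. The two copies of $\frac{n}{2}$ cancel, leaving
\[
\frac{1}{2^n}\sum_{J\subseteq\mathcal{A}} C(J)_\rho = \frac{n}{2^n}\binom{n-1}{\lfloor n/2\rfloor},
\]
which isolates all of the asymptotics into a single (near-)central binomial coefficient. As an independent check one can derive the same combinatorial quantity directly: by Lemma \ref{lemma:AME_vanishing_correltion} every subset with $|J|\le\lfloor n/2\rfloor$ contributes $0$, while for $|J|=k>\lfloor n/2\rfloor$ one has $C(J)_\rho = k-(n-k)=2k-n$, so that $\sum_J C(J)_\rho = \sum_{k>n/2}\binom{n}{k}(2k-n)$; applying $\binom{n}{k}k = n\binom{n-1}{k-1}$ and binomial symmetry collapses this to $n\binom{n-1}{\lfloor n/2\rfloor}$, matching the expression above.

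Finally I would estimate $\binom{n-1}{\lfloor n/2\rfloor}$ using the two-sided Stirling bounds already invoked in the paper, $\frac{2^{2k}}{2\sqrt{k}}\le\binom{2k}{k}\le\frac{2^{2k}}{\sqrt{\pi k}}$, handling the even and odd parities separately (for even $n=2j$ one uses $\binom{n-1}{\lfloor n/2\rfloor}=\tfrac12\binom{2j}{j}$, for odd $n=2j+1$ one has $\binom{n-1}{\lfloor n/2\rfloor}=\binom{2j}{j}$). In both cases this gives $\binom{n-1}{\lfloor n/2\rfloor}=\Theta\!\big(2^{n-1}/\sqrt{n}\big)$, so $\frac{n}{2^n}\binom{n-1}{\lfloor n/2\rfloor}=\Theta(\sqrt{n})$, with leading constant $1/\sqrt{2\pi}$. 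The reduction to $\langle E_D\rangle$ removes all the combinatorial work, so the only mild subtlety that remains is the bookkeeping of the floor function across the two parities needed to keep both the upper and lower $\Theta$-bounds simultaneously valid; this is routine and constitutes the main (but minor) obstacle.
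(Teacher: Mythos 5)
Your proof is correct, but it takes a genuinely different route from the paper's. The paper's own proof (in the appendix) is a self-contained direct computation: it invokes Lemma~\ref{lemma:AME_vanishing_correltion} to discard all subsets of size at most $\lfloor n/2\rfloor$, writes $C(J)_\rho = 2|J|-n$ for the remaining subsets, and then grinds through the binomial-sum manipulations separately for even and odd $n$ before applying the Stirling bounds. You instead reduce everything to a previously established result: the interchange-of-summation identity $\frac{1}{2^n}\sum_{J\subseteq\mathcal{A}} C(J)_\rho = \frac{1}{2}\sum_{v\in\mathcal{A}} S(v)_\rho - \langle E_D\rangle = \frac{n}{2}-\langle E_D\rangle$ (using $1$-uniformity of AME states), combined with the closed form of Proposition~\ref{prop:ADE_iff_AME}, immediately yields $\frac{n}{2^n}\binom{n-1}{\lfloor n/2\rfloor}$, after which a single Stirling estimate finishes the job. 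This is not circular, since Proposition~\ref{prop:ADE_iff_AME} is proved independently; what your route buys is that the even/odd binomial algebra is done only once (inside Proposition~\ref{prop:ADE_iff_AME}) rather than repeated, and it exposes a structural fact the paper's computation obscures: the average total correlation of a pure state is exactly $\frac{1}{2}\varepsilon_{\max}-\langle E_D\rangle$, which ties this proposition directly to the slack quantity analyzed in Theorem~\ref{theroem:euler_characteristic_distillation}. What the paper's approach buys is self-containment, at the cost of length. One small discrepancy worth flagging: your exact constant and the paper's differ by the full-set term. The paper's first line restricts the sum to $|J|\le n-1$, silently dropping $J=\mathcal{A}$, which contributes $C(\mathcal{A})_\rho = n - S(\mathcal{A})_\rho = n \neq 0$; your computation keeps it (and your formula $C(J)_\rho = 2|J|-n$ remains valid at $|J|=n$ because the global state is pure). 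Since the proposition as stated sums over all $J\subseteq\mathcal{A}$, your value $\frac{n}{2^n}\binom{n-1}{\lfloor n/2\rfloor}$ is the faithful one and the paper's is smaller by $n/2^n$; the difference is exponentially small and irrelevant to the $\Theta(\sqrt{n})$ conclusion in either case.
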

\begin{proof}
    Let's first consider the case for even $n = 2k$ for integer $k$. Using \ref{lemma:AME_vanishing_correltion}, we get that 
    \begin{align}
        \frac{1}{2^n}\sum_{\substack{J \subseteq \mathcal{A}}} C(J)_{\rho} &= \frac{1}{2^{2k}}\sum_{i = k + 1}^{2k-1} \sum_{\substack{J \subset A \\ |J| = i}} C(J) \\
        &= \frac{1}{2^{2k}}\sum_{i = k + 1}^{2k-1} \binom{2k}{i} (i - (2k - i)) = \frac{1}{2^{2k-1}}\sum_{i = k + 1}^{2k-1} \binom{2k}{i}(i - k) \\
        &= \frac{1}{2^{2k-1}}\sum_{i = k + 1}^{2k-1} \binom{2k}{i}i -  \frac{k}{2^{2k-1}}\sum_{i = k + 1}^{2k-1}\binom{2k}{i} \\
        &= \frac{2k}{2^{2k-1}}\sum_{i = k + 1}^{2k-1} \binom{2k - 1}{i - 1} - \frac{k}{2^{2k-1}}\sum_{i = k + 1}^{2k-1}\binom{2k}{i} \\
        &= \frac{2k}{2^{2k-1}}\sum_{j = k}^{2k-2} \binom{2k - 1}{j} - \frac{k}{2^{2k-1}}\sum_{i = k + 1}^{2k-1}\binom{2k}{i} \\
        &= \frac{2k}{2^{2k-1}} \left( 2^{2k-2} - 1 \right) - \frac{k}{2^{2k-1}} \left( 2^{2k-1} - 1 - \frac{1}{2} \binom{2k}{k} \right) \\
        &= k - \frac{k}{2^{2k-2}} - k + \frac{k}{2^{2k-1}} + \frac{k}{2^{2k}}  \binom{2k}{k} \\
        &=  \frac{k}{2^{2k}} \binom{2k}{k} - \frac{k}{2^{2k-1}} 
    \end{align}
    Using Stirling approximation for central binomial coefficients, we get 
    \begin{flalign}
       \frac{\sqrt{k}}{2} - \frac{2k}{2^{2k}} &\leq \frac{1}{2^n}\sum_{\substack{J \subseteq \mathcal{A}}} C(J)_{\rho} \leq \frac{\sqrt{k}}{\sqrt{\pi}}- \frac{2k}{2^{2k}} 
    \end{flalign}
    Asymptotically this approaches 
    \begin{flalign}
       \frac{\sqrt{k}}{2}  &\leq \frac{1}{2^n}\sum_{\substack{J \subseteq \mathcal{A}}} C(J)_{\rho} \leq \frac{\sqrt{k}}{\sqrt{\pi}} 
    \end{flalign}
    For the odd case, a similar calculation follows for $n = 2k+1$ to get 
        \begin{flalign}
        \frac{1}{2^n}\sum_{\substack{J \subseteq \mathcal{A}}} C(J)_{\rho} &= \frac{1}{2^{2k+1}}\sum_{i = k + 1}^{2k} \sum_{\substack{J \subset A \\ |J| = i}} C(J) \\
        &= \frac{1}{2^{2k+1}}\sum_{i = k + 1}^{2k} \binom{2k+1}{i} (i - (2k + 1 - i)) \\
        &= \frac{1}{2^{2k}}\sum_{i = k + 1}^{2k} \binom{2k+1}{i}(i - k) -   \frac{1}{2^{2k+1}}\sum_{i = k + 1}^{2k } \binom{2k+1}{i}\\
        &= \frac{1}{2^{2k}}\sum_{i = k + 1}^{2k} \binom{2k+1}{i}i - \frac{k}{2^{2k}}\sum_{i = k + 1}^{2k} \binom{2k+1}{i} -    \frac{1}{2^{2k+1}}\sum_{i = k + 1}^{2k } \binom{2k+1}{i} \\
        &= \frac{2k + 1}{2^{2k}} \sum_{i = k + 1}^{2k} \binom{2k}{i - 1} - \frac{k}{2^{2k}}\sum_{i = k + 1}^{2k} \binom{2k+1}{i} -    \frac{1}{2^{2k+1}}\sum_{i = k + 1}^{2k } \binom{2k+1}{i} \\
        &= \frac{2k + 1}{2^{2k}} \sum_{j = k }^{2k - 1} \binom{2k}{j} - \frac{k}{2^{2k}}\sum_{i = k + 1}^{2k} \binom{2k+1}{i} -    \frac{1}{2^{2k+1}}\sum_{i = k + 1}^{2k } \binom{2k+1}{i} \\
        &= \frac{2k+1}{2^{2k}} \left(2^{2k - 1} - 1 + \frac{1}{2} \binom{2k}{k}\right) - \frac{k}{2^{2k}} \left( \frac{2^{2k+1} - 2}{2} \right) - \frac{1}{2^{2k + 1}} \left( \frac{2^{2k+1} - 2}{2} \right)\\
        &= k + \frac{1}{2} - \frac{2k}{2^{2k}} - \frac{1}{2^{2k}} + \frac{2k+1}{2^{2k+1}} \binom{2k}{k} - k +\frac{k}{2^{2k}} - \frac{1}{2} +\frac{1}{2^{2k+1}} \\ 
        &=\frac{2k+1}{2^{2k+1}} \binom{2k}{k} - \frac{k}{2^{2k}} - \frac{1}{2^{2k+1}}
    \end{flalign}
    Approximating 
    \begin{flalign}
       \frac{2k+1}{4\sqrt{k}} - \frac{k}{2^{2k}} - \frac{1}{2^{2k+1}} &\leq \frac{1}{2^n}\sum_{\substack{J \subseteq \mathcal{A}}} C(J)_{\rho} \leq \frac{2k+1}{2\sqrt{\pi k}} - \frac{k}{2^{2k}} - \frac{1}{2^{2k+1}} 
    \end{flalign}
    Asymptotically
    \begin{flalign}
       \frac{\sqrt{k}}{2}  &\leq \frac{1}{2^n}\sum_{\substack{J \subseteq \mathcal{A}}} C(J)_{\rho} \leq \frac{\sqrt{k}}{\sqrt{\pi }} 
    \end{flalign}
\end{proof}
\begin{prop}
For any $ n $-qubit AME state $\rho$ defined over the qubit set $\mathcal{A}$, the average sum of entropies over possible subsets or the average distillable entanglement
\begin{equation}
    \frac{1}{2^n}\sum_{\substack{J \subseteq \mathcal{A}}}  S(J)_{\rho}
\end{equation}
scales as $\Theta(n)$.
\end{prop}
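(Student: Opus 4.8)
The plan is to invoke the closed-form expression for the average distillable entanglement of an AME state established in Proposition~\ref{prop:ADE_iff_AME}, namely
\[
\langle E_D \rangle = \frac{n}{2^n}\left(2^{n-1} - \binom{n-1}{\lfloor n/2 \rfloor}\right) = \frac{n}{2} - \frac{n}{2^n}\binom{n-1}{\lfloor n/2\rfloor},
\]
and then to show that the subtracted correction term is of strictly lower order, specifically $\Theta(\sqrt{n})$, so that the leading behaviour $n/2$ dominates and forces $\langle E_D\rangle = \Theta(n)$.

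First I would isolate the leading term $n/2$. Since the binomial correction is non-negative, this immediately yields the upper bound $\langle E_D\rangle \le n/2 \in O(n)$. The remaining work is the matching lower bound, for which it suffices to show that the correction term $\tfrac{n}{2^n}\binom{n-1}{\lfloor n/2\rfloor}$ is $O(\sqrt{n})$ (indeed $\Theta(\sqrt n)$, matching the slack computation appearing earlier in the excerpt).

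Next I would rewrite $\binom{n-1}{\lfloor n/2\rfloor}$ in terms of a central binomial coefficient, splitting on parity. For odd $n = 2k+1$ one has $\binom{n-1}{\lfloor n/2\rfloor} = \binom{2k}{k}$ directly; for even $n = 2k$ one has $\binom{n-1}{\lfloor n/2\rfloor} = \binom{2k-1}{k} = \tfrac12\binom{2k}{k}$, using $\binom{2k-1}{k}+\binom{2k-1}{k-1} = \binom{2k}{k}$ together with the symmetry $\binom{2k-1}{k}=\binom{2k-1}{k-1}$. I would then apply the Stirling bounds on the central binomial coefficient already recorded in the excerpt,
\[
\frac{2^{2k}}{2\sqrt{k}} \le \binom{2k}{k} \le \frac{2^{2k}}{\sqrt{\pi k}},
\]
which show in both parities that the correction term satisfies $\tfrac{n}{2^n}\binom{n-1}{\lfloor n/2\rfloor} = \Theta(\sqrt{n})$.

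Combining these steps gives $\langle E_D\rangle = \tfrac{n}{2} - \Theta(\sqrt{n})$, so for all sufficiently large $n$ we have $\langle E_D\rangle \ge n/4$, establishing $\Omega(n)$; together with the $O(n)$ upper bound this yields $\langle E_D\rangle = \Theta(n)$. There is no genuine obstacle here: the argument is essentially a one-line reduction to the closed form of Proposition~\ref{prop:ADE_iff_AME} followed by a Stirling estimate. The only points requiring care are the parity bookkeeping when converting $\binom{n-1}{\lfloor n/2\rfloor}$ into a central binomial coefficient, and the verification that the $\Theta(\sqrt{n})$ correction never overtakes the linear leading term---both of which the Stirling bounds settle immediately.
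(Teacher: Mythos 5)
Your proposal is correct and follows essentially the same route as the paper: the paper's proof likewise invokes the closed form from Proposition~\ref{prop:ADE_iff_AME} and notes that after approximation it takes the form $c_1 n - c_2\sqrt{n}$, so the linear term dominates. Your version merely fills in the parity bookkeeping and the explicit Stirling bounds that the paper leaves implicit.
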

\begin{proof}
We earlier showed the value of average distillable entanglement for AME states (see Proposition \ref{prop:ADE_iff_AME}). After performing the approximation, the equation simplifies to the form $ c_1 n - c_2 \sqrt{n} $, where $ c_1 $ and $ c_2 $ are constants. As a result, the measure scales linearly with $ n $, since the $ n $-term dominates.
\end{proof}

\bibliographystyle{unsrt}
\bibliography{reference}
\end{document}